\documentclass[11pt]{amsart}

\usepackage[margin=1in]{geometry}
\usepackage{amsmath,amssymb,amsthm,mathtools}
\usepackage{microtype}
\usepackage{xcolor}
\usepackage{hyperref}
\usepackage{aliascnt}
\usepackage[nameinlink,noabbrev]{cleveref}

\hypersetup{
  colorlinks=true,
  hypertexnames=false,
  linkcolor=blue!55!black,
  citecolor=blue!55!black,
  urlcolor=blue!55!black,
  pdftitle={Diffuse Gaussian Truncation for Deterministic Approximate Counting},
  pdfauthor={Zihong Yi}
}
\newtheorem{theorem}{Theorem}[section]
\newaliascnt{lemma}{theorem}
\newtheorem{lemma}[lemma]{Lemma}
\aliascntresetthe{lemma}
\newaliascnt{proposition}{theorem}
\newtheorem{proposition}[proposition]{Proposition}
\aliascntresetthe{proposition}
\newaliascnt{corollary}{theorem}
\newtheorem{corollary}[corollary]{Corollary}
\aliascntresetthe{corollary}
\newtheorem*{theorem*}{Theorem}
\theoremstyle{definition}
\newaliascnt{definition}{theorem}
\newtheorem{definition}[definition]{Definition}
\aliascntresetthe{definition}
\theoremstyle{remark}
\newaliascnt{remark}{theorem}
\newtheorem{remark}[remark]{Remark}
\aliascntresetthe{remark}

\crefname{theorem}{Theorem}{Theorems}
\crefname{lemma}{Lemma}{Lemmas}
\crefname{proposition}{Proposition}{Propositions}
\crefname{corollary}{Corollary}{Corollaries}
\crefname{definition}{Definition}{Definitions}
\crefname{remark}{Remark}{Remarks}
\crefname{section}{Section}{Sections}
\crefname{appendix}{Appendix}{Appendices}
\crefname{equation}{}{}

\newcommand{\PM}{\operatorname{PM}}
\newcommand{\haf}{\operatorname{haf}}
\newcommand{\tr}{\operatorname{tr}}
\newcommand{\E}{\mathbb E}
\newcommand{\R}{\mathbb R}
\newcommand{\C}{\mathbb C}
\newcommand{\1}{\mathbf 1}
\newcommand{\ind}[1]{\mathbf 1_{\{#1\}}}
\newcommand{\eps}{\varepsilon}
\newcommand{\poly}{\operatorname{poly}}
\newcommand{\overviewstage}[2]{%
  \par\addvspace{\medskipamount}%
  \noindent\emph{#1.\ #2.}\enspace\ignorespaces}
\begin{document}


\title[Diffuse Gaussian truncation]
{Diffuse Gaussian Truncation for Deterministic Approximate Counting}
\author{Zihong Yi}
\thanks{Carnegie Mellon University. E-mail:
\href{mailto:zihongy@andrew.cmu.edu}{zihongy@andrew.cmu.edu}.}
\date{}

\begin{abstract}
We give deterministic fully polynomial-time approximation schemes for two
dense counting problems on which the known deterministic algorithms, based
on zero-free interpolation, run in quasipolynomial time.
For fixed $0<\gamma<1/2$ and $0<\theta\le1$, the first approximates
$\haf(A)$ for a symmetric matrix $A$ of even order $n$ when its support graph
$G$ has minimum degree at least $(1/2+\gamma)n$ and its nonzero entries lie
in $[\theta,1]$.  It also approximates permanents under the analogous
bipartite condition, including all full-support matrices with entries in
$[\theta,1]$.  For fixed $\beta>0$ and $0<\kappa\le1$, the second
approximates the zero-field Ising partition function $Z(J)$ for
zero-diagonal real symmetric matrices $J$ satisfying
$\max_{i,j}|J_{ij}|\le\beta/n$ and
$\lambda_{\max}(J)\le1-\kappa$.  This spectral condition is one-sided, with
no separate lower-eigenvalue hypothesis.  We further prove
$\log\haf(A)=h_A(G)-n/2+O_{\gamma,\theta}(1)$ and
$Z(J)=2^n\det(I-J)^{-1/2}(1+O_{\beta,\kappa}(1/n))$.
Here $h_A(G)$ is the maximum weighted fractional-matching entropy.  For
unweighted graphs, the first formula improves the Cuckler--Kahn error from
$o(n)$ to $O_\gamma(1)$ on the fixed-margin class.  It also extends their
formula to weights in $[\theta,1]$.

Both algorithms follow from a common Gaussian truncation principle.  Each
problem reduces to integrals of products of a fixed entire function over
Gaussian coordinates with a possibly indefinite moment matrix whose entries
are $O(1/n)$.  After linear cancellation and exact quadratic resummation,
the coordinate remainder vanishes to order at least three.  Complex dilation
controls small supports.  For large supports, we bound the recombined tail by
a large-deviation estimate whose rate beats the entropy of the subsets.
Together these give truncation error $(CR/n)^{R/2}+e^{-cn}$.  This
faster-than-geometric decay permits a cutoff satisfying
$R\log(en/R)=O(\log n+\log(1/\eps))$, which makes the enumeration
polynomial.  The retained terms are evaluated by monomer--dimer recurrences
and one-dimensional quadrature for matchings, and by small spin sums for
Ising.
\end{abstract}

\maketitle

\section{Introduction}
\label{sec:introduction}

For the permanent and the Ising partition function on the dense inputs
considered below, known deterministic approximation algorithms remain
quasipolynomial.  The Taylor expansion behind zero-free interpolation
converges geometrically, so attaining accuracy $\eps$ requires
$\Theta(\log(1/\eps))$ layers, while
evaluating through layer $R$ costs $n^{\Theta(R)}$.  We take a different
route.  After passing to a Gaussian product with a diffuse moment matrix,
whose entries are $O(1/n)$, we exactly resum its quadratic part.  The
contribution supported on $R$ coordinates is then suppressed by
$(R/n)^{\alpha R}$ for some fixed $\alpha>0$.  This faster decay permits a
cutoff with only polynomially many retained terms.

The resulting principle gives deterministic FPTASs for dense weighted
perfect matchings and permanents, and for zero-field Ising models with diffuse
couplings.  The two applications arrive at the Gaussian product in different
ways.
For matchings, maximum-entropy scaling removes the main exponential
contribution, and an inverse-Gamma identity produces a one-parameter family
of Gaussian products.  For Ising, the Hubbard--Stratonovich identity gives a
Gaussian product directly.  In both cases the same completion exposes the
higher-order remainder to which the truncation theorem applies.

\subsection*{Main results}

We first state the matching result.  Let $K_n$ be the complete graph on $n$
vertices.  If $A$ is a symmetric matrix of even order $n$, its hafnian is
\[
  \haf(A)=
  \sum_{M}\prod_{uv\in M}A_{uv},
\]
where the sum is over the perfect matchings $M$ of $K_n$.  The support graph
of a nonnegative symmetric matrix $A$ has an edge $uv$ exactly when
$A_{uv}>0$.  For a graph $G$, write $\delta(G)$ for its minimum degree and
$\#\PM(G)$ for its number of perfect matchings.  Hence $\#\PM(G)$ is the
hafnian of the adjacency matrix of $G$.

For the bipartite problem, let $S_m$ be the permutations of
$\{1,\ldots,m\}$.  The permanent of an $m$ by $m$ matrix $B=(B_{ij})$ is
\[
  \operatorname{per}(B)
  =\sum_{\sigma\in S_m}\prod_{i=1}^m B_{i,\sigma(i)}.
\]
The bipartite support of a nonnegative matrix has an edge $ij$ exactly when
$B_{ij}>0$.  For a zero-one matrix, the permanent counts the perfect
matchings of this support graph.

We measure multiplicative error by $e^{\pm\eps}$.  Thus an estimate
$\widehat Q$ of a positive quantity $Q$ is acceptable when
$e^{-\eps}Q\le\widehat Q\le e^\eps Q$.  This convention is equivalent to
$1\pm O(\eps)$ for $0<\eps<1$.  All numerical inputs, including $\eps$, are
encoded in binary.  For a rational matrix $C$, let $L_C$ denote its encoding
length, and let $L_\eps$ denote the encoding length of $\eps$.

\begin{theorem}[Main result for matchings]
\label{thm:main}
Fix $0<\gamma<1/2$ and $0<\theta\le1$.  The following deterministic
algorithms exist.
\begin{enumerate}
\item[\rm(a)]
Let $A$ be a rational symmetric $n$ by $n$ matrix, where $n$ is even, with
zero diagonal and all off-diagonal entries in
$\{0\}\cup[\theta,1]$.  Suppose that its support graph $G$ satisfies
\begin{equation}
  \delta(G)\ge\left(\frac12+\gamma\right)n,
  \label{eq:main-degree}
\end{equation}
and let $0<\eps<1$.  The algorithm returns an estimate $\widehat H>0$ such
that
\[
  e^{-\eps}\haf(A)\le \widehat H\le e^\eps\haf(A).
\]
Its bit complexity is
$n^{O_{\gamma,\theta}(1)}\eps^{-O_{\gamma,\theta}(1)}
\poly(L_A+L_\eps)$.

\item[\rm(b)]
Let $B$ be a rational $m$ by $m$ matrix with entries in
$\{0\}\cup[\theta,1]$.  Suppose that every row and column of its support
contains at least $(1/2+\gamma)m$ entries, and let $0<\eps<1$.  The
algorithm returns an estimate $\widehat Z_B>0$ such that
\[
  e^{-\eps}\operatorname{per}(B)
  \le\widehat Z_B
  \le e^\eps\operatorname{per}(B).
\]
Its bit complexity is
$m^{O_{\gamma,\theta}(1)}\eps^{-O_{\gamma,\theta}(1)}
\poly(L_B+L_\eps)$.
\end{enumerate}
\end{theorem}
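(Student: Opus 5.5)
We would prove \cref{thm:main} by following the route sketched in the abstract: maximum-entropy scaling, then an inverse-Gamma Gaussian representation, then the truncation principle. Part (b) is reduced to part (a). Given $B$, form the $2m\times 2m$ symmetric matrix $A=\begin{pmatrix}0&B\\B^{\top}&0\end{pmatrix}$, so that $\haf(A)=\operatorname{per}(B)$. The support of $A$ is bipartite with degrees at least $(1/2+\gamma)m$, which is only about $n/4$ in terms of $n=2m$. So (b) does not literally follow from (a). Instead we would run the same argument with the bipartite analogue of each lemma, with rows and columns playing symmetric roles. We therefore describe (a) and note (b) in parallel.

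\emph{Step 1: scaling.} We compute an approximate maximizer $x$ of the weighted fractional-matching entropy $h_A(G)$ by convex optimization. This takes $\poly(n,L_A,L_\eps)$ time, using the ellipsoid method or Sinkhorn-type scaling applied to a strongly concave program. We then rescale to $A'_{uv}=A_{uv}e^{\lambda_u+\lambda_v}$ with Lagrange multipliers $\lambda$, so that $A'$ is (approximately) doubly stochastic on $G$. This gives $\haf(A)=e^{-\sum\lambda}\haf(A')$ exactly. Using \eqref{eq:main-degree} and $\theta\le A_{uv}\le1$, we would show that every entry of $A'$ on $G$ is $\Theta_{\gamma,\theta}(1/n)$. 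This is the diffuseness input. The minimum-degree condition should also give a spectral gap for the scaled walk matrix: the second eigenvalue of $A'$ is at most $1-c(\gamma,\theta)$, via expansion of dense graphs, and $-1$ is excluded as an eigenvalue because $\delta>n/2$ forces non-bipartiteness in case (a).

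\emph{Step 2: Gaussian representation.} We write each factor with an inverse-Gamma identity, which turns the matching sum into a one-parameter integral (over the Gamma scale $t$) of Gaussian expectations $\E\prod_u f(\xi_u)$. Here $f$ is a fixed entire function and $\xi$ has moment matrix proportional to $A'$, possibly indefinite after centering. Linear cancellation removes the mean terms. Exact quadratic resummation, using Wick's formula for pairs, then factors out a determinant-type main term. What remains is a product $\prod_u(1+g(\xi_u))$ with $g$ vanishing to order three.

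\emph{Step 3: truncation and evaluation.} We expand the product over subsets $S$ of coordinates. We invoke the paper's truncation theorem, which gives total error $(CR/n)^{R/2}+e^{-cn}$ from subsets with $|S|>R$. This bound must hold uniformly in $t$ over the quadrature range, with the tail in $t$ bounded by Gamma concentration. We choose $R$ with $R\log(en/R)=C'(\log n+\log(1/\eps))$, so that $n^{R}$ subsets already costs only $n^{O(1)}\eps^{-O(1)}$. Each retained term is a Gaussian moment over at most $R$ coordinates. We would evaluate these by monomer--dimer recurrences, or by grouping them into symmetric functions of $A'$. The $t$-integral we would approximate by Gauss quadrature with polynomially many nodes, justified by analyticity in $t$. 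Since the main term is $\Theta(1)$ relative to $\haf(A')\,e^{\text{entropy}}$, as in the asymptotic $\log\haf=h_A-n/2+O(1)$, an additive error of $\eps e^{-O(1)}$ converts to multiplicative error $e^{\pm\eps}$.

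\emph{Main obstacle.} The hard step is verifying the hypotheses of the truncation theorem uniformly over the quadrature parameter. The key issue is that the recombined large-support tail has a large-deviation rate exceeding $\log\binom{n}{R}$. For this we need the spectral gap from Step 1 to control the indefinite moment matrix; $\lambda_{\max}<1$ is the analogue of the Ising condition. Our first attempt would be to bound the tail by complex dilation $\xi\mapsto e^{i\phi}\xi$ together with a Gaussian comparison against $\det(I-sA')^{-1/2}$. We would also need to track bit complexity to ensure the rational arithmetic stays polynomial.
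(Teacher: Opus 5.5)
Your outline follows the paper's route: entropy scaling, an inverse-Gamma average over a reference matrix, completion with $(1+z)e^{-z+z^2/2}$, the truncation theorem, monomer--dimer evaluation, and quadrature. Gauss quadrature in place of the midpoint rule is a harmless variation. However, your justification of the spectral input is wrong, and your account of which spectral condition matters is off.

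\textbf{The spectral margin.} Non-bipartiteness only excludes the eigenvalue $-1$; it gives no uniform margin. For example, $K_{m,m}$ with a perfect matching added inside each side has $\delta=n/2+1$ and is not bipartite. Yet its scaling $X=A_G/(m+1)$ has eigenvalue $-1+2/(m+1)$ on $(\1_m,-\1_m)$.

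A uniform margin is indispensable. The completion absorbs $e^{-x^{\mathsf T}(tE)x/2}$, so it needs $I+tE\succ0$ for all $t$ in the localization interval $[1/r_1,1/r_0]$. That interval must contain values $t>1$, because $U$ fluctuates on scale $n^{-1/2}$. The completion also needs $\det(I+tE)^{-1/2}=e^{O(1)}$.

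The margin comes from the $\gamma n$ surplus. The complement of $G$ has maximum degree below $(1/2-\gamma)n$, so $Q_G\succeq2\gamma nI$. Combining $z^{\mathsf T}(I+X)z=\sum_{uv\in E(G)}X_{uv}(z_u+z_v)^2$ with $X_{uv}\ge2\gamma\theta^3/n$ then gives $\lambda_{\min}(X)\ge-1+4\gamma^2\theta^3$.

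Your remark that $\lambda_{\max}<1$ is the analogue of the Ising condition is backwards. $\cosh z$ contributes $e^{+z^2/2}$, but $(1+z)e^{-z}$ contributes $e^{-z^2/2}$, so here the completion uses the lower spectral edge. The upper edge $t\lambda_{\max}(E)<1$ enters separately, through the real-direction envelope, as $I-2(K_t)_+\succeq dI$. You therefore need the two-sided bound $\|E\|\le1-\kappa$ with $r_0>1-\kappa$.

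\textbf{What actually needs checking.} Since you invoke the truncation theorem, the large-deviation tail is not your obstacle. What must be verified, uniformly in $t$, is the following for $K_t=tE(I+tE)^{-1}$: the spectral margin above; the entrywise bound $|(K_t)_{ij}|=O(1/n)$, from the resolvent series and $\|E\mathbf e_i\|_2\le\beta/\sqrt n$; and the determinant bound, from $\tr E=0$ and $\|E\|_{\mathrm F}\le\beta$. Your proposed comparison with $\det(I-sA')^{-1/2}$ would fail in any case, since $A'$ has eigenvalue $1$ on $\1$.

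\textbf{Two missing pieces.} First, very small $\eps$. Truncation and Gamma localization both leave an additive floor $e^{-cn}$. Once $\log(1/\eps)$ exceeds a small multiple of $n$, your rule for $R$ gives $R>c_{\mathrm{tr}}n$, outside the range of the truncation theorem. The paper sends this regime, and bounded $n$, to the exact $2^{O(n)}$ subset recurrence for the hafnian of the original rational matrix, which costs $\eps^{-O(1)}$ there.

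Second, the linear cancellation needs $E\1=0$ exactly, and Step~2 never says where a centered $E$ comes from. It comes from writing $X=P_0+E$ with $P_0=(\mathbf J_n-I_n)/(n-1)$. After choosing $k$ edges from $E$, the $P_0$-contribution depends only on $k$ and equals $\haf(P_0)\,\E U^{-k}$, where $U$ is Gamma with shape $(n+1)/2$ and rate $(n-1)/2$. This yields $\haf(X)/\haf(P_0)=\E Z_E(U^{-1})$. A Gaussian with moments proportional to the stochastic $A'$ itself keeps the eigenvalue $1$ and cannot be completed.

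Relatedly, with an only approximately stochastic $A'$ the centering is lost. The paper defines $E$ from the exact scaling and transfers the numerical error through a Lipschitz stability bound for the truncated formula. In part~(b), the complete-bipartite reference $P_{\mathrm b}$ plays the role of $P_0$ and also removes the dilation's $\pm1$ modes, leaving the margin $\sigma_2(Y^*)\le1-4\gamma^3\theta^6$.
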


Taking $\theta=1$ gives a deterministic FPTAS for unweighted perfect
matchings in the corresponding graph classes.  Dirac's theorem guarantees a
perfect matching at minimum degree $n/2$ \cite{Dirac1952}.  The fixed margin
in \cref{thm:main} is therefore measured from the natural existence
threshold, although it still allows each vertex to miss nearly half of the
other vertices.

We next state the Ising result.  Let $J$ be a real symmetric $n$ by $n$
matrix with zero diagonal.  For a spin configuration
$\sigma\in\{\pm1\}^n$, its interaction energy is
$\frac12\sigma^{\mathsf T}J\sigma$, and the zero-field partition function is
$Z(J)=\sum_{\sigma\in\{\pm1\}^n}
\exp(\sigma^{\mathsf T}J\sigma/2)$.
We write $\lambda_{\max}(J)$ for the largest eigenvalue of $J$, use $I$ for
the identity matrix, and let $\|\cdot\|$ denote the spectral operator norm.

\begin{theorem}[Diffuse zero-field Ising model]
\label{thm:ising}
Fix $\beta>0$ and $0<\kappa\le1$.  Let $J$ be a real symmetric $n$ by $n$
matrix with rational entries and zero diagonal, and suppose that
\[
  \max_{i,j}|J_{ij}|\le\frac\beta n,
  \qquad
  \lambda_{\max}(J)\le1-\kappa.
\]
Given $0<\eps<1$, there is a deterministic algorithm that returns a positive
estimate $\widehat Z$ satisfying
\[
  e^{-\eps}Z(J)\le\widehat Z\le e^\eps Z(J).
\]
Its bit complexity is
$n^{O_{\beta,\kappa}(1)}\eps^{-O_{\beta,\kappa}(1)}
\poly(L_J+L_\eps)$.
Moreover, uniformly over all real matrices satisfying the two displayed
hypotheses,
\[
  Z(J)=2^n\det(I-J)^{-1/2}
       \bigl(1+O_{\beta,\kappa}(1/n)\bigr),
\]
where the square root is positive.  In particular,
\(\log Z(J)=n\log2-\frac12\log\det(I-J)
+O_{\beta,\kappa}(1/n)\).  The determinant term satisfies
$|\frac12\log\det(I-J)|\le\beta^2/(4\kappa)$.
\end{theorem}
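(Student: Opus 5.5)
We prove \cref{thm:ising} through the Gaussian representation suggested in the abstract.  First we shift the diagonal.  Since $\sigma_i^2=1$, we have $\sigma^{\mathsf T}J\sigma=\sigma^{\mathsf T}(J+sI)\sigma-sn$ for any $s$.  Choose $s$ so that $M=J+sI$ is positive definite and $\|M\|\le 1-\kappa/2$; for example, take $s$ so that $\lambda_{\min}(M)\ge\kappa/4$.  This is possible when $\lambda_{\min}(J)\ge-\|J\|$ and $\|J\|\le\beta$, but the spread of the spectrum of $J$ may exceed $1$.  So instead we follow the paper's route and write $\exp(\sigma^{\mathsf T}J\sigma/2)$ as an integral over Gaussian coordinates with a possibly indefinite moment matrix.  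Formally, let $x\sim N(0,J)$, understood via $\E[x_ix_j]=J_{ij}$ with zero diagonal, so all entries are $O(\beta/n)$.  Then
\[
Z(J)=2^n\,\E\Bigl[\prod_{i}\cosh(x_i)\Bigr],
\]
interpreted through the Wick/Gaussian-moment functional, which is legitimate for indefinite moment matrices because $\cosh$ is entire and the functional is defined termwise.

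Next we apply the "linear cancellation and exact quadratic resummation" step.  Write $\cosh x=\exp(x^2/2)\,(1+r(x))$, where $r$ vanishes to order at least $4$ since $\cosh x\, e^{-x^2/2}=1-x^4/12+\cdots$.  The factor $\prod_i e^{x_i^2/2}$, combined with the Gaussian functional of moment matrix $J$, resums exactly to $\det(I-J)^{-1/2}$.  The hypothesis $\lambda_{\max}(J)\le1-\kappa$ is exactly what makes $I-J$ positive definite, which is why no lower-eigenvalue condition is needed.  After tilting, the moment matrix becomes $\widetilde J=J(I-J)^{-1}$, which still has entries $O_{\beta,\kappa}(1/n)$.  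This follows from a Neumann-type argument: $\widetilde J=J+J(I-J)^{-1}J$, where $\|J\|_{\infty\to\max}$ is small and $\|(I-J)^{-1}\|\le1/\kappa$.  One must check the max-entry bound of the correction, and this is the point where we use $\max|J_{ij}|\le\beta/n$: each entry of $J(I-J)^{-1}J$ is $e_i^{\mathsf T}J(I-J)^{-1}Je_j$, bounded by $\|Je_i\|\|Je_j\|/\kappa\le\beta^2/(n\kappa)$.  Hence
\[
Z(J)=2^n\det(I-J)^{-1/2}\,\E_{\widetilde J}\Bigl[\prod_i(1+r(x_i))\Bigr].
\]
Expanding the product over supports $S\subseteq[n]$ gives the terms $\E_{\widetilde J}\prod_{i\in S}r(x_i)$.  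By the truncation theorem of the paper, the total contribution of $|S|>R$ is at most $(CR/n)^{R/2}+e^{-cn}$.

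For the asymptotic formula we take $R=1$.  The $|S|=1$ terms are $\E r(x_i)$ with $\operatorname{Var}x_i=\widetilde J_{ii}=O(1/n^2)$, so they sum to $O(1/n^3)$.  The $|S|=2$ terms pair order-$4$ monomials and are $O(n^2\cdot n^{-4})$.  Hence the correction is $1+O(1/n)$, and this bound comes from the truncation error $(C/n)^{1/2}$ refined at $R=2$.  For the determinant bound, $\tr J=0$ gives $-\log\det(I-J)=\sum_{k\ge2}\tr J^k/k$.  Bounding $\tr J^k\le\tr J^2\,\|J\|^{k-2}$ is not valid for indefinite $J$, so instead we use $|\log(1-\lambda)+\lambda|\le\lambda^2/(2\kappa)$ for $\lambda\le1-\kappa$ and $\lambda\ge-\beta$.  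This yields $|\log\det(I-J)|\le\tr J^2/(2\kappa)\le\beta^2/(2\kappa)$, and halving it gives the stated bound.

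For the algorithm, choose $R$ with $(CR/n)^{R/2}\le\eps/2$ and $e^{-cn}\le\eps/2$; when $e^{-cn}>\eps/2$, fall back to brute force, since then $2^n$ is polynomial in $1/\eps$.  Since $R\log(en/R)=O(\log n+\log1/\eps)$, we have $\binom{n}{\le R}=\poly(n,1/\eps)$.  Each retained term $\E_{\widetilde J}\prod_{i\in S}r(x_i)$ is a finite-dimensional Gaussian expectation.  Rather than integrating, we evaluate it by undoing the resummation on $S$: it equals a small spin sum over $\{\pm1\}^S$ with couplings $\widetilde J|_S$, together with an inclusion--exclusion over subsets of $S$.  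This costs $2^{O(R)}\poly$, which is polynomial.  We compute $\det(I-J)$ exactly in rationals and its square root to $\eps$ precision.  Since the truncated sum is $1+O(1/n)$ in size, the additive error converts to multiplicative error.

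The main obstacle is making the indefinite Gaussian functional and the truncation theorem's hypotheses match.  Concretely, we must verify that after tilting, $\widetilde J$ and the function $r$ satisfy whatever order-three vanishing and entry-size conditions the common truncation principle demands.  We must also verify that the large-support tail bound holds uniformly under only the one-sided spectral condition.
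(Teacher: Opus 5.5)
Your outline follows the paper's route: Hubbard--Stratonovich, completion to $\widetilde J=J(I-J)^{-1}$ with the entrywise bound $\beta/n+\beta^2/(\kappa n)$, the truncation theorem, spin sums with inclusion--exclusion, a least cutoff with a brute-force branch, and the same determinant estimate. But the step you defer as ``the main obstacle'' is the substance of the proof, and the framework you propose for it fails.

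A termwise Wick functional is harmless for $\prod_i\cosh(x_i)$, but it cannot perform the resummation. Grouped by degree, the termwise expansion of $\E\prod_ie^{x_i^2/2}$ is the Taylor series of $s\mapsto\det(I-sJ)^{-1/2}$ at $s=1$. Its radius of convergence is $1/\rho(J)$, with $\rho$ the spectral radius, and the hypotheses allow $\rho(J)>1$ (e.g.\ $J=-\frac\beta n(\mathbf J_n-I_n)$ with $\beta>1$).

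You need genuine integrals. Take $\psi=A_Jx$ with $x$ a real standard Gaussian. Then $|e^{x^{\mathsf T}Jx/2}\prod_ig_{\mathrm{Is}}(\psi_i)|=\prod_i|\cosh\psi_i|\le e^{x^{\mathsf T}J_+x/2}$. Change variables $y=(I-J)^{1/2}x$ and use $A_J(I-J)^{-1/2}=A_{\widetilde J}$.

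What is then missing is the admissibility of $(g_{\mathrm{Is}},\widetilde J)$. From $|\cosh(u+iv)|\le\cosh u\le e^{u^2/2}$ one gets $|g_{\mathrm{Is}}(u+iv)|\le e^{v^2/2}$. So the envelope matrix is $\widetilde J_-/2$, and the required margin is $I-\widetilde J_-\succeq dI$. This holds with $d=(1+\beta)^{-1}$, because $\lambda\mapsto\lambda/(1-\lambda)$ sends an eigenvalue $-a$ of $J$, with $0\le a\le\|J\|\le\beta$, to $-a/(1+a)\ge-\beta/(1+\beta)$.

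This, not merely $I-J\succ0$, is where one-sidedness enters. The factor $g_{\mathrm{Is}}$ grows only in the imaginary direction, which is controlled by $\widetilde J_-$, and the transform squeezes every negative eigenvalue, even those below $-1$, into $(-1,0)$. With this and $|g_{\mathrm{Is}}(z)-1|\le|z|^4/8$ for $|z|\le1/2$, the uniform tail bound is simply part of the theorem, with $\ell=4$, $\alpha=1$, and tail $(C_1R/n)^R+e^{-an}$.

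Several details also need repair.

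\emph{Singleton scale.} The diagonal satisfies $\widetilde J_{ii}=(J(I-J)^{-1}J)_{ii}\le\|Je_i\|_2^2/\kappa\le\beta^2/(\kappa n)$, so it is of order $1/n$, not $1/n^2$. This order is attained. For $J=\frac\beta n(\mathbf J_n-I_n)$ with $0<\beta<1$, one finds $v=\widetilde J_{ii}\sim\beta^2/((1-\beta)n)$ and $W_{\widetilde J}(\{i\})=(1+v)^{-1/2}e^{v/(2(1+v))}-1=-v^2/4+O(v^3)$. The singleton layer is therefore $\Theta(1/n)$. It is the source of the $O(1/n)$ term, not an $O(n^{-3})$ correction. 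Since the theorem requires $R\ge2$, the correct bookkeeping is $R=2$ together with the singleton bound $|W_{\widetilde J}(\{i\})|=O(n^{-2})$.

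\emph{Retained-subset formula.} Absorbing $e^{-\phi_i^2/2}$ into the Gaussian changes the precision to $I+K_T$, where $K_T=\widetilde J[T]$. Hence $\E_{K_T}\prod_{i\in T}g_{\mathrm{Is}}(\phi_i)=\det(I+K_T)^{-1/2}2^{-|T|}\sum_\sigma\exp(\frac12\sigma^{\mathsf T}K_T(I+K_T)^{-1}\sigma)$. The couplings are $K_T(I+K_T)^{-1}$, not $\widetilde J[T]$, and there is a determinant prefactor. This matrix equals $J$ for the full index set but not $J[T]$ on a principal submatrix.

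\emph{Algorithmic bookkeeping.} The polynomial count needs the \emph{least} admissible $R$, so that $R\log(en/R)=O(b+\log n)$. The additive-to-relative conversion uses $\mathcal Z_{\widetilde J}\ge1/2$ for large $n$.
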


Rationality in \cref{thm:ising} is required only for the algorithmic
statement.

The spectral assumption in \cref{thm:ising} is one-sided.  It keeps the top
eigenvalue a fixed distance below the Gaussian singularity at $1$, but it
does not require $\|J\|<1$.  Negative eigenvalues may therefore have magnitude
larger than one.  The entrywise hypothesis is separate.  It is the
diffuseness condition that makes the coordinate-subset expansion summable.

We record three consequences of the matching theorem.  The first removes
support zeros altogether.

\begin{corollary}[Full-support inputs with weights bounded away from zero]
\label{cor:positive-dense}
Fix $0<\theta\le1$.  There is a deterministic FPTAS for the hafnian of every
even-dimensional rational symmetric matrix whose diagonal is zero and whose
off-diagonal entries lie in $[\theta,1]$, and for the permanent of every
rational square matrix whose entries lie in $[\theta,1]$.
\end{corollary}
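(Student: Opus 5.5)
This follows directly from \cref{thm:main} by checking its hypotheses, with one parity caveat for the hafnian. For the hafnian part, let $A$ be an even-dimensional rational symmetric matrix of order $n$ with zero diagonal and all off-diagonal entries in $[\theta,1]$. Its support graph is then the complete graph $K_n$, so $\delta(G)=n-1$. We need $n-1\ge(1/2+\gamma)n$ for some fixed admissible $\gamma$. Take $\gamma=1/4$; the inequality $n-1\ge 3n/4$ holds for all $n\ge4$. The entries lie in $\{0\}\cup[\theta,1]$ trivially, since they all lie in $[\theta,1]$. Hence \cref{thm:main}(a), applied with $\gamma=1/4$ and the given $\theta$, gives an estimate within $e^{\pm\eps}$ in time $n^{O_\theta(1)}\eps^{-O_\theta(1)}\poly(L_A+L_\eps)$. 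This is an FPTAS because $\gamma$ is now a function of nothing.

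The remaining small case is $n=2$. There $\haf(A)=A_{12}$ is read off exactly, and the algorithm simply outputs it. (The case $n=0$ is the empty product $1$.)

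For the permanent part, let $B$ be an $m\times m$ rational matrix with all entries in $[\theta,1]$. Every row and column of its support contains $m$ entries. We need $m\ge(1/2+\gamma)m$, and $\gamma=1/4$ gives this for every $m\ge1$. So \cref{thm:main}(b) applies directly, with bit complexity $m^{O_\theta(1)}\eps^{-O_\theta(1)}\poly(L_B+L_\eps)$.

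There is no real obstacle here. The only points to check are that a single fixed margin $\gamma\in(0,1/2)$ works uniformly over all sizes, and that the finitely many tiny dimensions where the degree bound fails are handled by exact evaluation, which takes constant time. Since the constants in \cref{thm:main} depend only on $(\gamma,\theta)=(1/4,\theta)$, the running time is polynomial in $n$ (or $m$), in $1/\eps$, and in the encoding length for each fixed $\theta$. This is the claimed deterministic FPTAS.
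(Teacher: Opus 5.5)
Your proposal is correct and matches the paper's proof: apply \cref{thm:main} with the fixed margin $\gamma=1/4$, observe that a complete support satisfies the degree hypothesis ($n-1\ge 3n/4$ once $n\ge4$, and $m\ge 3m/4$ for every $m$ in the bipartite case), and treat the remaining small dimension by exact evaluation. You make explicit that the only exceptional case is $n=2$, where the paper simply says ``all sufficiently large dimensions.''
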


The same algorithm treats odd-order graphs by leaving one vertex unmatched.

\begin{corollary}[Near-perfect matchings]
\label{cor:near-perfect}
Fix $0<\gamma<1/2$.  There is a deterministic FPTAS for the number of
near-perfect matchings in every odd-order graph $G$ satisfying
$\delta(G)\ge(1/2+\gamma)|V(G)|$.
\end{corollary}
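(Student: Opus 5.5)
We reduce the count of near-perfect matchings of an odd-order graph $G$ on $N$ vertices to a single hafnian to which \cref{thm:main}(a) applies. Adjoin a new vertex $z$ adjacent to every vertex of $G$, obtaining a graph $G'$ on $n=N+1$ vertices, which is even. Each perfect matching of $G'$ matches $z$ to exactly one vertex $v\in V(G)$, and deleting the edge $zv$ leaves a near-perfect matching of $G$ that misses $v$. Conversely, every near-perfect matching of $G$ misses exactly one vertex $v$ and extends uniquely by the edge $zv$. Hence the number of near-perfect matchings of $G$ equals $\#\PM(G')=\haf(A_{G'})$, where $A_{G'}$ is the zero-one adjacency matrix of $G'$. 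This matrix is rational with zero diagonal and off-diagonal entries in $\{0,1\}$, so the weight condition holds with $\theta=1$.

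It remains to verify \eqref{eq:main-degree} for $G'$ with a slightly smaller fixed margin. Each $v\in V(G)$ has $\deg_{G'}(v)=\deg_G(v)+1\ge(1/2+\gamma)N+1$, and $\deg_{G'}(z)=N$. Since $n=N+1$, we have $(1/2+\gamma)N+1\ge(1/2+\gamma)(N+1)$, because $1\ge 1/2+\gamma$. We also have $N\ge(1/2+\gamma')(N+1)$ whenever $\gamma'\le1/2-1/(N+1)$. So we take $\gamma'=\min(\gamma,1/4)$, which is still a fixed constant in $(0,1/2)$. For $N\ge3$ this gives $\delta(G')\ge(1/2+\gamma')n$. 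The finitely many graphs with $N<3$ have bounded size and are handled by exact enumeration.

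Finally, run the algorithm of \cref{thm:main}(a) on $A_{G'}$ with parameters $\gamma'$ and $\theta=1$. It returns an estimate within $e^{\pm\eps}$ of the count in time $n^{O_\gamma(1)}\eps^{-O_\gamma(1)}\poly(L_{A_{G'}}+L_\eps)$. Since $L_{A_{G'}}=O(N^2)$, this is a deterministic FPTAS. The only step requiring care is the degree condition at the universal vertex $z$. It is the reason the margin must be shrunk to $\gamma'$, but no genuine obstacle arises.
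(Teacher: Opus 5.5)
Your reduction is correct, and it takes a different route from the paper's. The paper deletes a vertex rather than adding one. For each $v$ it applies \cref{thm:main}(a) to $G-v$, which has even order and minimum degree at least $(1/2+\gamma)n-1\ge(1/2+\gamma/2)(n-1)$ once $n$ is large in terms of $\gamma$. It then adds the $n$ positive estimates, using the fact that a sum of $e^{\pm\eps}$-approximations of positive numbers is again an $e^{\pm\eps}$-approximation.

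Your apex vertex $z$ packages the same sum over the unmatched vertex into a single hafnian, since $\#\PM(G')=\sum_v\#\PM(G-v)$. The price is the degree check at $z$. You handle it correctly: capping the margin at $\gamma'=\min\{\gamma,1/4\}$ works for $N\ge3$, and the old vertices keep margin $\gamma$ because $1\ge1/2+\gamma$.

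What each route buys:
\begin{itemize}
\item Your route makes one call to the algorithm instead of $N$, and needs only $N\ge3$ rather than a $\gamma$-dependent size threshold.
\item The paper's route stays within subgraphs of $G$ and also approximates, for each prescribed vertex, the number of near-perfect matchings that miss it.
\end{itemize}
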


The interval $[\theta,1]$ is only a normalization.  If all positive entries
lie in $[c_0,C_0]$ for fixed positive endpoints, divide the matrix by $C_0$
and restore the resulting homogeneous factor at the end.

The third consequence identifies the leading exponential contribution to a
weighted hafnian.  Suppose that $A=(a_{uv})$ satisfies part~\textup{(a)} of
\cref{thm:main}, and let $G$ be its support graph.  Define the fractional
perfect-matching polytope
\[
  \mathcal D(G)=
  \left\{w\in\mathbb R_{\ge0}^{E(G)}:
    \sum_{e\ni v}w_e=1\text{ for every }v\right\}.
\]
Thus a point of $\mathcal D(G)$ assigns nonnegative weights to the edges,
with total weight one at each vertex.  Its weighted entropy is
\[
  h_A(G)=
  \max_{w\in\mathcal D(G)}
  \sum_{uv\in E(G)}w_{uv}\log\frac{a_{uv}}{w_{uv}}.
\]
All logarithms are natural, and a zero summand is interpreted as zero.  For
the adjacency matrix of $G$, this is the entropy $h(G)$ used by Cuckler and
Kahn.

\begin{corollary}[Entropy formula]
\label{cor:constant-scale-entropy}
Fix $0<\gamma<1/2$ and $0<\theta\le1$.  There are constants
$0<a_{\gamma,\theta}\le b_{\gamma,\theta}<\infty$ such that every matrix
$A$ satisfying part~\textup{(a)} of \cref{thm:main} satisfies
\[
  a_{\gamma,\theta}
  \le
  \frac{\haf(A)}
  {e^{h_A(G)}(n-1)!!(n-1)^{-n/2}}
  \le
  b_{\gamma,\theta}.
\]
Here $(n-1)!!=1\cdot3\cdots(n-1)$.  Equivalently,
$\log\haf(A)=h_A(G)-n/2+O_{\gamma,\theta}(1)$,
where the remainder is bounded independently of $n$ and $A$.
\end{corollary}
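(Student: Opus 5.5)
We prove \cref{cor:constant-scale-entropy} by running the matching machinery behind \cref{thm:main} in its non-algorithmic form, and reading off the size of the main term rather than an $e^{\pm\eps}$ approximation. The plan has three steps.

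\emph{Step 1: maximum-entropy scaling.} Let $w^*$ be the maximizer of $h_A(G)$ over $\mathcal D(G)$. The objective is strictly concave and the polytope has a relative-interior point under \eqref{eq:main-degree}, so $w^*$ exists and is unique. The Lagrange (KKT) conditions give vertex multipliers $x_v>0$ with
\[
  w^*_{uv}=a_{uv}x_ux_v .
\]
Define the scaled matrix $\tilde A_{uv}=a_{uv}x_ux_v$, which is doubly stochastic on $G$. Because each vertex lies in exactly one edge of a perfect matching,
\[
  \haf(A)=\haf(\tilde A)\prod_v x_v^{-1}.
\]
A direct computation of the objective at $w^*$ gives $h_A(G)=-\sum_v\log x_v$. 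Hence
\[
  \haf(A)=e^{h_A(G)}\haf(\tilde A),
\]
and it suffices to show that $\haf(\tilde A)\,(n-1)^{n/2}/(n-1)!!$ lies in $[a,b]$.

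\emph{Step 2: bounded scaling.} We need uniform bounds $w^*_{uv}\asymp_{\gamma,\theta}1/n$ on the edges of $G$, so that the scaled matrix is diffuse. The plan is a perturbation argument. If some $w^*_e$ were much smaller than $1/n$, we would push mass along a short alternating cycle. Such cycles exist because the degree exceeds $(1/2+\gamma)n$, and every edge lies in $\Omega(n^2)$ alternating 4-cycles, by the usual codegree counting. The first-order gain $\log(1/w_e)$ would then contradict optimality. Combined with $a_{uv}\in[\theta,1]$, this yields the $\Theta(1/n)$ bounds. \textbf{This is the step I expect to be the main obstacle}: obtaining both the upper and lower bounds with constants depending only on $\gamma,\theta$. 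I would try the alternating-cycle exchange first, using the fact that at most $O(1)$ vertices can carry an unusually large edge weight.

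\emph{Step 3: the Gaussian representation.} Since $\tilde A$ is diffuse with row sums $1$, we write $\haf(\tilde A)$ via the inverse-Gamma identity as a one-parameter integral of Gaussian products. The normalizing factor $(n-1)!!(n-1)^{-n/2}$ is exactly what this identity yields for the complete graph with uniform weights $1/(n-1)$. After linear cancellation and quadratic resummation, the truncation theorem bounds the coordinate remainder, the terms with $R\ge1$, by $O(1)$ relative error once $n$ exceeds a constant. What remains is the quadratic main term, a determinant of the form $\det(I-\tilde A+\text{rank one})^{-1/2}$ up to a fixed constant. Because the spectrum of $\tilde A$ off its top eigenvector is bounded away from $1$, this factor lies between constants. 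The spectral gap comes from an expansion argument: minimum degree $(1/2+\gamma)n$ together with weights of order $1/n$. For $n$ below a constant, both $\haf(A)$ and the normalizer are bounded above and below in terms of $\gamma,\theta$. This uses $\haf(A)\ge\theta^{n/2}$ by Dirac's theorem, and the bound $h_A(G)\le (n/2)\log n$. Combining the two regimes gives the constants $a_{\gamma,\theta}\le b_{\gamma,\theta}$.
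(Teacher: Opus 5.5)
Your architecture matches the paper's. Its proof is just $\haf(A)=e^{h_A(G)}\haf(P_0)Q_E$, the two-sided bound \cref{eq:positive-middle} on $Q_E$, Stirling, and separate treatment of bounded $n$; your explicit small-$n$ bounds are a fine substitute for the paper's compactness step. Step~3, however, contains a genuine error. Gaussian completion gives $Z_E(t)=\det(I+tE)^{-1/2}\mathcal Z_{K_t}$ with $E=\tilde A-P_0$, not $\det(I-\tilde A+\cdots)^{-1/2}$. On $\1^\perp$ the eigenvalues of $I+tE$ are $1+t(\lambda+\frac1{n-1})$, where $\lambda$ runs over the eigenvalues of $\tilde A$. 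So the prefactor degenerates when $\tilde A$ has an eigenvalue near $-1$, that is, for nearly bipartite supports, and your gap below $+1$ says nothing about this. A one-sided gap genuinely cannot suffice. For $G=K_{m,m}$ one has $\tilde A=P_{\mathrm b}$, which is diffuse with second eigenvalue $0$, yet the ratio in the corollary equals $m!\,(2m-1)^m/(m^m(2m-1)!!)\sim\sqrt{\pi m/e}$. What is needed is the two-sided bound $\|E\|\le1-\kappa$. The upper side yields the envelope margin $I-2(K_t)_+\succeq dI$ for the truncation theorem, and the lower side keeps $I+tE$ uniformly nonsingular. Both sides enter the exponential bound on the Gamma tails outside $[r_0,r_1]$ (\cref{prop:radial-localization}), a step your sketch omits. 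The lower side is a quantitative nonbipartiteness statement; the paper derives it from $Q_G\succeq2\gamma nI$ (\cref{lem:centered-gap}) through $z^{\mathsf T}(I+X)z=\sum_{uv\in E(G)}X_{uv}(z_u+z_v)^2$. Even then, a spectral margin alone only bounds the determinant by $e^{O(n)}$. The $O(1)$ bound uses $\tr E=0$ to cancel the linear term in $\log\det(I+tE)=\sum_i(\log(1+t\mu_i)-t\mu_i)$, with $\mu_i$ the eigenvalues of $E$, and $\|E\|_{\mathrm F}\le\beta$ to bound what remains.

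Step~2 is likewise not yet a proof, and the mechanism you propose cannot supply one. At the maximizer $w^*_{uv}=a_{uv}r_ur_v$, the derivative of $F_A$ along the exchange on an alternating $4$-cycle $uv,vx,xy,yu$ is $\log\bigl(a_{uv}a_{xy}w^*_{vx}w^*_{yu}/(a_{vx}a_{yu}w^*_{uv}w^*_{xy})\bigr)$. This vanishes identically because the factors $r$ cancel. A small $w^*_{uv}$ is therefore always exactly offset by the other three weights, so exchanges give positivity of $w^*$ but no quantitative bound, and bounding those three weights is the original problem. The bounds must come from the row equations $r_v\sum_{u\in N(v)}a_{uv}r_u=1$ combined with the degree margin, as in \cref{lem:diffuse-scaling}. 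Let $V_{\mathrm{high}}$ be the $\lfloor(1/2-\gamma)n\rfloor$ vertices of largest factor. Stochasticity gives $X(V_{\mathrm{low}},V_{\mathrm{low}})\ge|V_{\mathrm{low}}|-|V_{\mathrm{high}}|\ge2\gamma n$, and $X_{uv}\le r_ur_v$ then gives $r(V_{\mathrm{low}})\ge\sqrt{2\gamma n}$. Since $|V(G)\setminus N(v)|\le|V_{\mathrm{high}}|$, every neighborhood has factor sum at least $r(V_{\mathrm{low}})$. The row equation and $a_{uv}\ge\theta$ then give $r_v\le(\theta\sqrt{2\gamma n})^{-1}$, after which $a_{uv}\le1$ gives $r_v\ge\theta\sqrt{2\gamma/n}$.
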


For unweighted Dirac graphs, Cuckler and Kahn determine the logarithm of the
number of perfect matchings up to an $o(n)$ error
\cite{CucklerKahn2009,CucklerKahnHamiltonian2009}.  In the fixed-margin
regime, \cref{cor:constant-scale-entropy} improves this to
$O_\gamma(1)$ and extends the formula to weights in a fixed positive
interval.  The algorithms allow their constants and running-time exponents
to depend on the fixed parameters.  We do not optimize this dependence.

\subsection*{Technical overview}

The following overview traces both counting problems to one common
truncation theorem, then explains the two estimates in its proof.  The
argument is organized into six stages, with the fourth divided into its two
estimates.

\overviewstage{1}{Entropy and the inverse-Gamma reduction}
Let $A$ be an input to part~\textup{(a)} of \cref{thm:main}.  Maximizing the
weighted entropy gives a unique fractional perfect matching $w^*$.  The
Lagrange-multiplier equations factor it as
$w^*_{uv}=a_{uv}r_ur_v$ for positive vertex factors $r_v$.  Every perfect
matching uses each factor once, so
\[
  \haf(A)=e^{h_A(G)}\haf(X),
  \qquad X_{uv}=a_{uv}r_ur_v.
\]
The matrix $X$ is symmetric and stochastic.  Let $\mathbf J_n$ and $I_n$
be the all-ones and identity matrices, and let $\1$ be the all-ones column
vector.  Put
$P_0=(\mathbf J_n-I_n)/(n-1)$, and write $E=X-P_0$.  The degree and weight
margins make $E$ diffuse and supply a fixed spectral margin.  Since $X$ and
$P_0$ are stochastic, $E\1=0$.  Thus only the normalized ratio
$Q_E=\haf(P_0+E)/\haf(P_0)$ remains to be computed.

Expand the numerator according to the edges chosen from $E$.  The coefficient
left by the reference matrix depends only on the number of chosen edges and
is a negative moment of one Gamma variable.  More precisely, if $U$ has
shape $(n+1)/2$ and rate $(n-1)/2$, then
\[
  Q_E=\mathbb E\,Z_E(U^{-1}),
  \qquad
  Z_E(t)=\sum_M t^{|M|}\prod_{uv\in M}E_{uv},
\]
where $M$ ranges over all matchings.  The random variable $U$ is concentrated
near its mode $1$, and the contribution outside a fixed interval around $1$
is exponentially small.  The bipartite argument begins with Sinkhorn scaling
and applies the same construction after symmetric dilation, using the
complete-bipartite reference and slightly different Gamma parameters.

\overviewstage{2}{The Hubbard--Stratonovich reduction}
The starting point for \cref{thm:ising} is the Hubbard--Stratonovich identity,
which removes the spin sum.  If $\psi$ is the canonical complex Gaussian
vector with bilinear moment matrix $J$, and $\E_J$ denotes expectation with
respect to this vector, summing the resulting exponential moments over the
spins gives
\[
  Z(J)=2^n\E_J\prod_{i=1}^n\cosh(\psi_i).
\]
Unlike the matching problem, this reduction needs neither an entropy scaling
nor an auxiliary Gamma average.  It does not yet have the higher-order
remainder required by the truncation theorem.

\overviewstage{3}{Quadratic resummation}
The matching polynomial also has a Gaussian product representation.  If
$\xi$ has bilinear second-moment matrix $E$, then the Wick--Isserlis formula
\cite{Isserlis1918} gives
$Z_E(t)=\mathbb E\prod_i(1+\sqrt t\,\xi_i)$.  A direct expansion cannot be
truncated.  A term supported on $s$ coordinates has the rough size
$(Cs/n)^{s/2}$, while there are about $(en/s)^s$ possible supports.  The
centering relation $E\1=0$ creates cancellation between the raw
layers, but taking their absolute values destroys it.  To repair this, we
remove the first two Taylor terms exactly.  The centering relation
kills the linear term, and the quadratic term is absorbed into the Gaussian
density.  The same completion removes the quadratic term of $\cosh z$ in the
Ising product.  For a real symmetric moment matrix $L$, let $\phi$ denote
the canonical complex Gaussian vector with bilinear moments $L$.  We write
$\mathcal Z_L=\E_L\prod_i g(\phi_i)$, using the relevant one-coordinate
factor in each application.  The two exact identities are
\[
  \begin{aligned}
  Z_E(t)&=\det(I+tE)^{-1/2}\mathcal Z_{K_t},
    &K_t&=tE(I+tE)^{-1},\\
  Z(J)&=2^n\det(I-J)^{-1/2}\mathcal Z_K,
    &K&=J(I-J)^{-1}.
  \end{aligned}
\]
The corresponding one-coordinate factors are
\[
  g_{\mathrm{pm}}(z)=(1+z)e^{-z+z^2/2},
  \qquad
  g_{\mathrm{Is}}(z)=\cosh(z)e^{-z^2/2}.
\]
Now $g_{\mathrm{pm}}(z)-1=O(z^3)$.  In the Ising identity the same quadratic
resummation leaves
$g_{\mathrm{Is}}(z)-1=-z^4/12+O(z^6)$.  Completion also preserves the
entrywise scale and the required spectral margin.  Consequently, both
applications enter the common theorem, with vanishing order three for
matchings and four for Ising.

For either factor, put $f=g-1$ and write $[n]=\{1,\ldots,n\}$.  If
$S\subseteq[n]$, let
\[
  W_K(S)=\E_{K[S]}\prod_{i\in S}f(\phi_i),
  \qquad W_K(\varnothing)=1,
\]
where $K[S]$ is the principal submatrix indexed by $S$.

\begin{theorem*}[\Cref{thm:gaussian-product-truncation}, informal]
Suppose that $g$ has a quadratic Gaussian envelope, that
$f(z)=O(z^\ell)$ for some $\ell\ge3$, and that, for fixed $\beta,d>0$, the
real symmetric moment matrix $K$ satisfies
$\max_{i,j}|K_{ij}|\le\beta/n$ and the spectral margin condition of
\cref{sec:gaussian-principle} with constant $d$.
Put $\alpha=\ell/2-1$.  Then there are constants $c,C,a>0$ such that
\[
  \mathcal Z_K=\sum_{S\subseteq[n]}W_K(S),
  \qquad
  \left|\mathcal Z_K-\sum_{|S|<R}W_K(S)\right|
  \le\left(C\left(\frac Rn\right)^\alpha\right)^R+e^{-an}
\]
for every $2\le R\le cn$ and all sufficiently large $n$.
Moreover, $\mathcal Z_K=1+O(1/n)$ uniformly over the same class.
\end{theorem*}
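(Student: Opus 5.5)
The expansion identity is immediate.  Write $g=1+f$ and expand $\prod_i g(\phi_i)=\sum_{S}\prod_{i\in S}f(\phi_i)$.  The marginal of $\phi$ on the coordinates $S$ is the canonical Gaussian with moment matrix $K[S]$, so each term has expectation $W_K(S)$.  The quadratic Gaussian envelope hypothesis will justify exchanging sum and expectation: $|g(z)|\le e^{a_0|z|^2}$, combined with the spectral margin, makes $\E_K\prod_i(1+|f(\phi_i)|)$ finite.  For the truncation bound I would split the tail $\sum_{|S|\ge R}W_K(S)$ at a threshold $s_0=cn$ and use a different estimate on each side.

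\emph{Small supports, $R\le |S|\le cn$.}  Here I would bound $|W_K(S)|$ by complex dilation.  The function $F(w)=\E_{K[S]}\prod_{i\in S}f(w\phi_i)$ is entire in $w$.  Since $f$ vanishes to order $\ell$ at every coordinate, $F$ vanishes to order $\ell|S|$ at $w=0$.  The Gaussian integral with covariance $w^2K[S]$ depends on $w$ only through $w^2$, so we may write $F(w)=G(w^2)$ with $G$ vanishing to order $\ell|S|/2$.  By the maximum principle, $|W_K(S)|=|G(1)|\le r^{-\ell s/2}\max_{|u|=r}|G(u)|$ with $s=|S|$.  On the circle $|u|=r$ the moment matrix $uK[S]$ has operator norm at most $r\beta s/n$.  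The Gaussian envelope then gives $\max|G|\le C_0^s$ as long as $r\beta s/n\le c_1$, so I would take $r\asymp n/s$.  This yields $|W_K(S)|\le (C s/n)^{\ell s/2}$.  Summing over the $\binom ns\le(en/s)^s$ supports of size $s$ gives
\[
\sum_{|S|=s}|W_K(S)|\le \Bigl(C'\bigl(\tfrac sn\bigr)^{\ell/2-1}\Bigr)^s=\Bigl(C'\bigl(\tfrac sn\bigr)^{\alpha}\Bigr)^s .
\]
This is decreasing in $s$ for $s\le cn$ with $c$ small, and the geometric sum from $s=R$ is dominated by $(C(R/n)^\alpha)^R$.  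The exponent $\alpha>0$ requires exactly $\ell\ge3$.

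\emph{Large supports, $|S|>cn$.}  Termwise bounds fail here, so I would recombine.  The quantity $\sum_{|S|>cn}\prod_{i\in S}f(\phi_i)$ is the degree-$>cn$ part of the polynomial $\prod_i(1+xf(\phi_i))$ in a formal variable $x$.  Cauchy's formula on a circle $|x|=\rho$ with $\rho<1$ bounds it by $\rho^{-cn}\prod_i(1+\rho|f(\phi_i)|)$.  Alternatively, with $\rho>1$ one bounds the tail by $\rho^{-cn}\prod_i(1+\rho|f(\phi_i)|)$, using $\rho^{|S|-cn}\ge1$ on those terms.  The remaining task is the large-deviation estimate
\[
\E_K\prod_i\bigl(1+\rho|f(\phi_i)|\bigr)\le e^{\eta n},
\]
with $\eta$ smaller than $c\log\rho$.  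Here the quantity $\E\prod_i(1+\rho|f|)$ involves $\rho|f|\lesssim \rho|z|^\ell e^{a_0|z|^2}$, and the entrywise bound $\beta/n$ forces $\phi$ to be small in most coordinates.  I would write $\phi=\sqrt{1/n}\,(\text{bounded-covariance Gaussian})$, possibly complex because $K$ is indefinite: negative eigenvalues contribute imaginary parts.  Then $\sum_i|\phi_i|^2$ has exponential moments of order $e^{O(1)}$, using the spectral margin for the positive part.  Also $\log(1+\rho|f(z)|)\le \rho C|z|^3e^{a_0|z|^2}$.  A Hölder or Chernoff argument bounding $\E\exp(\sum_i h(\phi_i))$ by $\exp(O(\rho n\cdot n^{-3/2}))$ plus a rare-event term would then give $\eta=o(1)$.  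Choosing $\rho$ a large constant makes $\rho^{-cn}e^{\eta n}\le e^{-an}$.  This beats the $\binom{n}{cn}$ entropy because the recombined product does not pay for subset counts.

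\emph{Main obstacle.}  I expect this large-deviation step to be the hardest.  For indefinite $K$ the canonical complex Gaussian has an imaginary component, whose contribution $|e^{a_0|z|^2}|$ could overwhelm the estimate.  The spectral margin condition of \cref{sec:gaussian-principle} must be what controls $\E e^{a_0\sum|\phi_i|^2}$ with $a_0$ the envelope constant.  I would first try a spectral decomposition $K=K_+-K_-$ with $\phi=X+iY$ independent, and bound $\prod|g|$ via the envelope applied to the real-part quadratic form only.

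Finally, $\mathcal Z_K=1+O(1/n)$ follows from the same bounds with $R=2$.  The sets with $|S|\le1$ contribute $1+\sum_i\E f(\phi_i)$, and $|\E f(\phi_i)|\le C|K_{ii}|^{\ell/2}=O(n^{-3/2})$ for each $i$.  The remaining tail is at most $(C(2/n)^\alpha)^2+e^{-an}$, which is $O(n^{-1})$ when $\alpha\ge1/2$.
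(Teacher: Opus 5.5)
Your small-support argument (evenness, maximum modulus on a disk of radius of order $\sqrt{n/s}$, layer sums $(C(s/n)^\alpha)^s$) is the paper's. The gap is in the large-support step. Your pointwise bound $\bigl|\sum_{|S|>cn}\prod_{i\in S}f(\phi_i)\bigr|\le\rho^{-cn}\prod_i(1+\rho|f(\phi_i)|)$ holds only for $\rho\ge1$, and it takes absolute values subset by subset. The estimate you then need, $\E_x\prod_i(1+\rho|f(\phi_i)|)\le e^{\eta n}$ with $\eta<c\log\rho$, is false within the theorem's hypotheses.

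Here is a counterexample. Take $g=g_{\mathrm{pm}}$ (envelope weights $(1,0)$, $\ell=3$) and $K=\beta n^{-1}\mathbf J_n$ with $2\beta=1-d$. This $K$ is $(\beta,d)$-admissible, and $\phi_i=w$ for every $i$, where $w\sim N(0,\beta/n)$ is real. Since $g(-2)=-e^4$, one has $|f(-2)|=1+e^4$. The density of $w$ at $-2$ is $e^{-4n/(1-d)+O(\log n)}$. A Laplace estimate at $w=-2$ gives
\[
  \rho^{-cn}\,\E_x\prod_{i=1}^n\bigl(1+\rho|f(\phi_i)|\bigr)
  \ge\exp\Bigl\{n\Bigl((1-c)\log\rho+\log(1+e^4)-\frac{4}{1-d}\Bigr)-o(n)\Bigr\}.
\]
Since $\log(1+e^4)>4.018$, this grows exponentially for every $\rho\ge1$ once $d<10^{-3}$. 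In this example even $\E_x\sum_{|S|>cn}\prod_{i\in S}|f(\phi_i)|$ is exponentially large. The signed tail, however, is small: $\prod_ig(\phi_i)=g(w)^n$, and $|g(-2)|^n=e^{4n}$ saturates the envelope $e^{nw^2}$, which the margin makes integrable. So the ``rare-event term'' in your Chernoff sketch is exactly where the argument fails, and no choice of $\rho$ repairs it. Separately, the heuristic that $\phi$ is $n^{-1/2}$ times a bounded-covariance vector is false for indefinite $K$: the Hermitian covariance is $|K|$, whose trace can be of order $\sqrt n$.

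The missing idea is to keep the signs on the event where many coordinates are of order one. The paper sets $s_*=\lceil c_*n\rceil$ and $r=\lceil s_*/2\rceil$, and splits on the event $\mathcal A_{r,\delta}$ that at least $r$ coordinates satisfy $|\phi_i|>\delta$.

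Off this event, every subset of size at least $s_*$ contains at least $s_*-r$ coordinates with $|f(\phi_i)|\le C_f\delta^{\ell-2}|\phi_i|^2$. An elementary-symmetric-function bound then gives the contraction $3^r\chi^{s_*-r}$, with $3\chi<1$, times an integrable envelope. Only here is an absolute bound over the large subsets legitimate.

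On $\mathcal A_{r,\delta}$, the paper writes $\mathcal T_{s_*}=\prod_ig(\phi_i)-\sum_{|S|<s_*}\prod_{i\in S}f(\phi_i)$ before taking absolute values. Only partial products $\prod_{i\in T}|g(\phi_i)|$ then occur, each integrable by the margin, at an entropy cost $\sum_{s<s_*}\binom ns2^s=\exp\{O(nc_*\log(1/c_*))\}$. That cost is paid by a witness-set estimate. The bound $\bigl\||K|[T]\bigr\|\le\beta\sqrt{|T|/n}$ permits an exponential tilt of order $\sqrt{n/r}$, so $\mathcal A_{r,\delta}$ costs $\exp\{-a\sqrt{rn}+nh_2(r/n)+O(\sqrt n)\}$. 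Since $\sqrt{c_*}$ dominates $c_*\log(1/c_*)$, this wins (\cref{lem:many-large-coordinates,lem:linear-cut-tail}); a linear rate would not.

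There are three smaller points.
\begin{itemize}
\item Your singleton bound $O(|K_{ii}|^{\ell/2})=O(n^{-3/2})$ sums only to $O(n^{-1/2})$ when $\ell=3$. The evenness you already used gives vanishing order at least four, hence $O(n^{-2})$ per singleton, as in \cref{eq:singleton-support-bound}.
\item For indefinite $K$, $(A_Kx)_S$ is not distributed as $A_{K[S]}y$, because their Hermitian covariances $|K|[S]$ and $|K[S]|$ differ. Identifying $\E_x\prod_{i\in S}f(\phi_i)$ with $W_K(S)$ therefore needs the analytic-continuation argument of \cref{lem:gaussian-marginalization}.
\item Integrability needs the anisotropic envelope. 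The isotropic envelope $e^{a_0|z|^2}$ is not controlled by the one-sided margin.
\end{itemize}
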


The formal statement makes the envelope and spectral margin precise.  It
does not require $K$ to be positive semidefinite, the coefficients to have
one sign, or an associated polynomial to be zero-free.  Once the input
hypotheses produce an admissible pair $(g,K)$, the tail analysis is common
to both applications.

\overviewstage{4(a)}{Small coordinate sets}
Let $S$ have size $s$ below a small fixed multiple of $n$.  To bound
$W_K(S)$, consider the complex dilation
$\Phi_S(\zeta)=\mathbb E_{K[S]}\prod_{i\in S}f(\zeta\phi_i)$.
The entrywise bound on $K$ makes the row norm of $K[S]$ at most
$\beta s/n$.  Hence $\Phi_S$ is holomorphic and uniformly bounded on a disk
of radius comparable with $\sqrt{n/s}$.  Because $f$ vanishes to order
$\ell$, the function $\Phi_S$ vanishes to order $\ell s$ at the origin.
The maximum-modulus principle then gives
\[
  |W_K(S)|\le\left(\frac{Cs}{n}\right)^{\ell s/2}.
\]
Summing over all $S$ of size $s$ gives the layer bound
$(C(s/n)^\alpha)^s$, where $\alpha=\ell/2-1>0$.  This proves the first term
in the truncation estimate as long as the support remains below a small
linear cutoff.

\overviewstage{4(b)}{Recombining the large tail}
At a linear cutoff the dilation radius is only a constant, so the preceding
termwise bound no longer pays for all supports.  This is the main difficulty.
We return to the single Gaussian integral and use
\[
  \sum_{|S|\ge R}\prod_{i\in S}f(\phi_i)
  =\prod_{i=1}^n g(\phi_i)
   -\sum_{|S|<R}\prod_{i\in S}f(\phi_i).
\]
The absolute value is taken only after this recombination.  We then separate
the event on which few coordinates of $\phi$ have magnitude greater than a
small fixed constant $\delta>0$ from the event on which many do.

On the first event, every large support contains many small coordinates, and
the local estimate
$|f(z)|\le C_f\delta^{\ell-2}|z|^2$ gives an exponential contraction.  An
elementary-symmetric-function bound makes this contraction explicit, and
$\delta$ is chosen so that it dominates the remaining combinatorial factor.
The quadratic envelope is integrable by the spectral margin.

On the second event, some set $T$ of $r$ coordinates carries a fixed amount
of Gaussian energy.  The quadratic form measuring this energy is controlled
by the principal submatrix $|K|[T]$, where $|K|$ is the spectral absolute
value of $K$.  Diffuseness yields the estimate
\[
  \bigl\||K|[T]\bigr\|\le\beta\sqrt{r/n},
\]
which permits an exponential tilt of order $\sqrt{n/r}$.  For some fixed
$a>0$, a fixed witness set then costs $\exp\{-a\sqrt{rn}\}$.  Write
$\rho=r/n$.  When $\rho$ is a small positive constant, this gain has rate
$\sqrt\rho$, whereas the number of witness sets has entropy rate
$\rho\log(1/\rho)$.  Since
$\rho\log(1/\rho)=o(\sqrt\rho)$, choosing the linear cutoff small enough
makes the total contribution exponentially small.  This is why the present
proof uses more than a global spectral bound, and why the entrywise hypothesis
is essential to it.

\overviewstage{5}{Evaluating retained subsets}
It remains to compute $W_K(S)$ for $|S|<R$.  Since $f=g-1$,
inclusion--exclusion expresses it through full-product expectations on
subsets $T\subseteq S$.  For matchings, each such expectation is a weighted
monomer--dimer partition function and is evaluated by vertex deletion.  For
Ising, it is an explicit sum over the $2^{|T|}$ spin configurations on $T$.
In both cases, one coefficient $W_K(S)$ costs
$3^{|S|}\poly(|S|)$ arithmetic operations.

\overviewstage{6}{From truncation to an FPTAS}
Let $b$ be of order $\log_2(1/\eps)$, and fix a sufficiently large constant
$C_{\mathrm{cut}}$.  We choose the least integer $R\ge2$ for which
\[
  \left(C_{\mathrm{cut}}
        \left(\frac Rn\right)^\alpha\right)^R\le2^{-b}.
\]
Minimality gives
$R\log(en/R)=O(b+\log n)$, and therefore the sum of the retained costs is
$n^{O(1)}2^{O(b)}$.  If one simply took $R$ proportional to $b$, coefficient
enumeration would instead have the quasipolynomial form
$n^{O(\log(1/\eps))}$.  The least-cutoff balance is what turns the truncation
estimate into a fully polynomial scheme.  When $b$ is a fixed positive
multiple of $n$, direct enumeration costs $2^{O(n)}=2^{O(b)}$ and supplies
the complementary branch.

The zero-free interpolation algorithms cited below use logarithmically many
Taylor coefficients, but their dense-instance running times are
quasipolynomial \cite{Barvinok2017,BarvinokBarvinok2021}.  Our completion
resums the determinant contribution exactly, and the remaining
$(R/n)^{\alpha R}$ decay makes the adaptive enumeration polynomial.

Every retained term is an explicit formula involving matrices that remain a
fixed distance from singularity, so polynomially many working bits suffice.
The details, including the entropy scaling, are given in
\cref{app:bit-complexity}.  The Ising application needs no scaling step.

\subsection*{Previous work}

\emph{Randomized matching algorithms.}
Exact counting is $\#\mathrm P$-complete already for zero-one permanents
\cite{Valiant1979}, and remains $\#\mathrm P$-complete on very dense
bipartite and general graphs \cite{ElMaaloulyWang2022}.  Jerrum and Sinclair
gave an FPRAS for graphs of even order $n$ and minimum degree at least $n/2$
\cite{JerrumSinclair1989}, and
Jerrum, Sinclair, and Vigoda gave an FPRAS for the permanent of every
nonnegative matrix \cite{JerrumSinclairVigoda2004}.  Chen, Vigoda, and Yang
recently improved the running time in this line of work
\cite{ChenVigodaYang2026}.  For general nonbipartite graphs, an FPRAS for the
number of perfect matchings remains open.  Ebrahimnejad, Nagda, and Oveis
Gharan obtained randomized approximate counting and sampling for regular
strong expanders \cite{EbrahimnejadNagdaOveisGharan2022}.

\emph{Deterministic matching algorithms.}
For matrices with entries in $[\theta,1]$, Barvinok gives a deterministic
quasipolynomial-time relative approximation for permanents and hafnians
\cite{Barvinok2017}.  The same work treats zero-one permanents with a small
fixed fraction of zeros in each row and column.  The full-support case of
\cref{thm:main} makes the running time fully polynomial, and its support
version reaches every fixed positive margin above the Dirac threshold.  For
arbitrary nonnegative permanents, deterministic methods give much coarser
universal factors.  One example is the tight $(\sqrt2)^m$ guarantee for the
Bethe permanent \cite{AnariRezaei2025}.  These results apply much more
broadly, but they do not give arbitrary relative accuracy.

\emph{Dense matching asymptotics.}
Cuckler and Kahn proved
$\log\#\PM(G)=h(G)-n/2+o(n)$ for every even-order Dirac graph
\cite{CucklerKahn2009,CucklerKahnHamiltonian2009}.
\Cref{cor:constant-scale-entropy} replaces $o(n)$ by a bounded remainder on
the fixed-margin class.  McCullagh obtained a determinantal approximation
with relative error $O(1/m)$ for moderate-deviation sequences of doubly
stochastic matrices \cite{McCullagh2014}.  Koehler and Leung prove
high-probability zero-free regions for permanents of an all-ones matrix under
independent centered perturbations \cite{KoehlerLeung2026}.  Their theorem is
an average-case bipartite result, while the hypotheses here are deterministic
and also cover nonbipartite hafnians.

\emph{Ising algorithms.}
Jerrum and Sinclair gave an FPRAS for ferromagnetic Ising systems
\cite{JerrumSinclair1993}.  Eldan, Koehler, and Zeitouni proved rapid Glauber
mixing for general interactions under the spectral condition $\|J\|<1$
\cite{EldanKoehlerZeitouni2022}.  On the diffuse part of that norm regime,
\cref{thm:ising} gives deterministic relative approximation, and its
one-sided condition also permits negative eigenvalues of magnitude larger
than one.
On the deterministic side, complex-zero methods give an FPTAS for
bounded-degree zero-field Ising models throughout the correlation-decay region
\cite{LiuSinclairSrivastava2019}.  More generally, Patel and Regts turn
zero-free regions into deterministic polynomial-time approximation algorithms
for a broad class of bounded-degree graph polynomials
\cite{PatelRegts2017}.  Their coefficient computation enumerates bounded-size
connected subgraphs and relies on the maximum degree being bounded.  Barvinok
and Barvinok give a
quasipolynomial relative approximation under a global bound on the total
absolute interaction incident to each spin \cite{BarvinokBarvinok2021}.
On dense instances, its Taylor truncation has geometric decay and
quasipolynomial coefficient-enumeration cost.
Mean-field methods apply much more broadly to dense models, but give additive
control of the free energy rather than relative control of the partition
function \cite{JainKoehlerMossel2018}.

The paper is organized as follows.  \Cref{sec:gaussian-principle} states and
proves the common truncation theorem.
\Cref{sec:matching} develops the matching and permanent applications.
The Ising application is
proved in \cref{sec:ising}.  We return to the limitations of the method and
the problems they leave open in \cref{sec:scope}.  The appendices contain the
deferred scaling, analytic, and bit-complexity estimates.

\section{A truncation principle for diffuse Gaussian products}
\label{sec:gaussian-principle}

\subsection{Setup and theorem}

Although the matching and Ising reductions begin differently, after
quadratic completion they produce the same analytic object: the expectation
of a product $\prod_i g(\phi_i)$, where $g$ is entire and the Gaussian
bilinear moment matrix has entries of order $1/n$.  The moment matrix need
not be positive semidefinite, so the coordinates of $\phi$ may be complex.
Wick's rule still expresses their moments in terms of this matrix, but
absolute convergence must be controlled separately.  The quadratic envelope
and spectral margin introduced below provide that control uniformly over the
admissible class.  The theorem of
this section shows that if $g-1$ vanishes to order at least three, then the
exact expansion by coordinate support has a rapidly decaying tail.  In the
two applications, the matching factor grows in the real direction and the
Ising factor in the imaginary one.

For a real
symmetric matrix $K$, write $K=K_+-K_-$ for its positive and negative
parts, and put
\[
  |K|=K_++K_-,
  \qquad
  A_K=K_+^{1/2}+iK_-^{1/2}.
\]
The matrix $A_K$ is complex symmetric, $A_K^2=K$, and
$A_K^*A_K=|K|$.  If $x$ is a standard real Gaussian vector, we write
$\phi=A_Kx$.  Thus $\E\phi\phi^{\mathsf T}=K$ and
\[
  \sum_i(\operatorname{Re}\phi_i)^2=x^{\mathsf T}K_+x,
  \qquad
  \sum_i(\operatorname{Im}\phi_i)^2=x^{\mathsf T}K_-x.
\]
Whenever the integral converges absolutely, let
$\E_KF(\phi)=\E_xF(A_Kx)$.  For a positive integer $n$, write
$[n]=\{1,\ldots,n\}$.  For $S\subseteq[n]$, let $K[S]$ denote the
principal submatrix indexed by $S$.  For a matrix $C$, write
$\|C\|_{\mathrm{row}}=\max_i\sum_j|C_{ij}|$.  We use
\[
  h_2(u)=-u\log u-(1-u)\log(1-u)
\]
for the binary entropy, with the usual continuous interpretation at zero
and one.

Together with the spectral margin imposed below, the global bound in the next
definition makes the product integrable, with enough room left for the
exponential tilts used below.  The local
vanishing order counts how many Gaussian legs each selected coordinate must
carry.

\begin{definition}[Admissible scalar factor]
\label{def:admissible-factor}
Let $a_{\mathrm R},a_{\mathrm I}\ge0$ be two envelope weights, not both
zero.  Let $\ell\ge3$ be an integer, let $C_f>0$, and let
$0<\delta_f\le1$.  We call an entire function $g$ with $g(0)=1$
\emph{admissible with these parameters} if
\begin{align}
  |g(z)|
  &\le
  \exp\!\left\{a_{\mathrm R}(\operatorname{Re}z)^2
                 +a_{\mathrm I}(\operatorname{Im}z)^2\right\}
  &&(z\in\C),
  \label{eq:factor-global-envelope}\\
  |g(z)-1|&\le C_f|z|^\ell
  &&(|z|\le\delta_f).
  \label{eq:factor-local-vanishing}
\end{align}
Put $f=g-1$ and
$a_{\max}=\max\{a_{\mathrm R},a_{\mathrm I}\}$.
\end{definition}

The two factors used later are
$g_{\mathrm{pm}}(z)=(1+z)e^{-z+z^2/2}$, with envelope weights
$(a_{\mathrm R},a_{\mathrm I})=(1,0)$ and vanishing order three, and
$g_{\mathrm{Is}}(z)=\cosh(z)e^{-z^2/2}$, with weights $(0,1/2)$ and
vanishing order four.  Their precise bounds are stated in the corresponding
applications and proved in Appendix~\ref{app:analytic-estimates}.

For a fixed scalar factor, the envelope matrix below is the quadratic form
that bounds the absolute value of the full product.  Its spectral margin is
exactly what keeps that bound integrable.  Notice that the condition is
one-sided whenever one of the two envelope weights is zero.

\begin{definition}[Admissible Gaussian moment matrix]
\label{def:admissible-kernel}
Fix the envelope weights $a_{\mathrm R}$ and $a_{\mathrm I}$.  For a real
symmetric matrix $K$, define the positive-semidefinite envelope matrix
\[
  K_{\mathrm{env}}=a_{\mathrm R}K_++a_{\mathrm I}K_-.
\]
Let $\beta\ge0$ and $0<d\le1$.  We call a real symmetric $n$ by $n$
matrix $K$ \emph{$(\beta,d)$-admissible} if
\begin{equation}
  \max_{i,j}|K_{ij}|\le\frac\beta n,
  \qquad
  I-2K_{\mathrm{env}}\succeq dI.
  \label{eq:diffuse-covariance-assumptions}
\end{equation}
\end{definition}

Given an admissible scalar factor $g$ and a $(\beta,d)$-admissible
matrix $K$ of dimension $n$, retain the notation $f=g-1$.  For
$S\subseteq[n]$, define
\[
  W_K(S)=\E_{K[S]}\prod_{i\in S}f(\phi_i),
  \qquad
  W_K(\varnothing)=1,
  \qquad
  \mathcal Z_K=\E_K\prod_{i=1}^ng(\phi_i).
\]

\begin{theorem}[Diffuse Gaussian-product truncation]
\label{thm:gaussian-product-truncation}
Fix $a_{\mathrm R}$, $a_{\mathrm I}$, $\ell$, $C_f$, $\delta_f$, $\beta$,
and $d$ as in
\cref{def:admissible-factor,def:admissible-kernel},
and set
\[
  \alpha=\frac{\ell}{2}-1\ge\frac12.
\]
There are positive constants
$c_{\mathrm{tr}}$, $a_{\mathrm{tail}}$, $C_0$, and $C_1$, and an integer
$n_0$, depending only on these fixed parameters, with the following
properties.  Suppose that $g$ is admissible with the stated scalar
parameters and that $K$ is a $(\beta,d)$-admissible matrix of dimension
$n$.  Then all the expectations below converge absolutely and
\begin{equation}
  \mathcal Z_K=\sum_{S\subseteq[n]}W_K(S).
  \label{eq:gaussian-subset-expansion}
\end{equation}
If $n\ge n_0$ and $R$ is an integer with
$2\le R\le c_{\mathrm{tr}}n$, then
\begin{equation}
  \left|\mathcal Z_K-\sum_{|S|<R}W_K(S)\right|
  \le
  \left(C_1\left(\frac Rn\right)^\alpha\right)^R
  +e^{-a_{\mathrm{tail}}n}.
  \label{eq:diffuse-gaussian-truncation-bound}
\end{equation}
Moreover,
\begin{equation}
  |\mathcal Z_K-1|\le\frac{C_0}{n}
  \qquad(n\ge n_0).
  \label{eq:diffuse-gaussian-near-one}
\end{equation}
When $\beta>0$, the constant $c_{\mathrm{tr}}$ may be chosen so that
$c_{\mathrm{tr}}\le1/(8\beta)$.  Consequently,
$\|K[S]\|_{\mathrm{row}}\le1/8$ whenever
$|S|\le c_{\mathrm{tr}}n$.
\end{theorem}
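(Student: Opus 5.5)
The proof follows the overview's stages 4(a) and 4(b), with some preliminary integrability work first.

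\emph{Integrability and the expansion.} Since $\phi=A_Kx$, the envelope \eqref{eq:factor-global-envelope} gives
\[
  \prod_i|g(\phi_i)|\le\exp\{x^{\mathsf T}K_{\mathrm{env}}x\},
\]
and the margin $I-2K_{\mathrm{env}}\succeq dI$ makes this integrable against $e^{-|x|^2/2}$, with spare room $d/2$ for exponential tilts. Each $|f(z)|\le 1+|g(z)|$ is bounded by a product of envelopes over subsets, which are dominated by the full envelope up to a factor $2^n$. So $\prod_i g=\sum_S\prod_{i\in S}f$ holds pointwise with absolutely integrable terms. Marginalizing, $\E_K\prod_{i\in S}f(\phi_i)=\E_{K[S]}\prod_{i\in S}f(\phi_i)$, because $\phi_S$ is complex Gaussian with bilinear moments $K[S]$. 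This needs a short check that the law of $(A_Kx)_S$ agrees with that of $A_{K[S]}y$ on entire functions with Gaussian envelope, which I would verify by analytic continuation or by Wick moments. Together these give \eqref{eq:gaussian-subset-expansion}.

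\emph{Small supports.} For $|S|=s\le c_{\mathrm{tr}}n$, define $\Phi_S(\zeta)=\E_{K[S]}\prod_{i\in S}f(\zeta\phi_i)$. Replacing $K$ by $\zeta^2K[S]$ bounds its envelope matrix by $|\zeta|^2$ times the row norm $\beta s/n$. Hence for $|\zeta|\le\rho:=c\sqrt{n/s}$ the effective margin stays at least $1/2$, and $|\Phi_S|\le 2^s\cdot O(1)^s$. Each factor $f(\zeta\phi_i)$ vanishes to order $\ell$ in $\zeta$, so $\Phi_S$ vanishes to order $\ell s$. By Schwarz's lemma,
\[
  |W_K(S)|=|\Phi_S(1)|\le C^s\rho^{-\ell s}=(Cs/n)^{\ell s/2}.
\]
Multiplying by $\binom ns\le(en/s)^s$ gives the layer bound $(C'(s/n)^{\alpha})^s$. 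Summing over $R\le s\le c_{\mathrm{tr}}n$, with $c_{\mathrm{tr}}$ small so that $C'(s/n)^\alpha\le1/2$, gives $2(C'(R/n)^\alpha)^R$. The case $R=2$ combined with the vanishing of the $|S|=1$ terms ($W(\{i\})=\E f(\phi_i)$ and $K_{ii}=O(1/n)$, so this term is $O(n^{-\ell/2})$ and the total $s=1$ contribution is $O(n^{1-\ell/2})\le O(n^{-1/2})$) gives a bound that is not yet good enough for \eqref{eq:diffuse-gaussian-near-one}. The fix: the $s=1$ terms are genuinely $O(n^{-\ell/2})$ each, and $n\cdot n^{-3/2}$ is not $O(1/n)$, so I would instead compute $\E f(\phi_i)$ exactly. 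Wick's rule gives only even moments, so odd $\ell$ leaves the leading term of order $K_{ii}^{\lceil\ell/2\rceil}$. This yields $O(n^{-2})$ per coordinate, and the $s=2$ layer is $O(n^2\cdot n^{-3})$. Then \eqref{eq:diffuse-gaussian-near-one} follows from the truncation bound with $R=3$.

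\emph{Large tail (main obstacle).} Set $R'=c_{\mathrm{tr}}n$ and bound
\[
  \E\Bigl|\prod_i g-\sum_{|S|<R'}\prod_{i\in S} f\Bigr|\le e^{-an},
\]
which controls $\sum_{|S|\ge R'}W$. Split on the event $B$ that at least $r_0=\eta n$ coordinates satisfy $|\phi_i|>\delta$.

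On $B^c$, the quantity inside the absolute value is $\sum_{|S|\ge R'}\prod_{i\in S}f$. Each such $S$ contains at least $R'-r_0$ small coordinates, on which $|f|\le C_f\delta^{\ell-2}|\phi_i|^2$. Bounding the sum via elementary symmetric functions, $e_k(y)\le(\sum y)^k/k!$, with the large coordinates absorbed into envelope factors, gives roughly $(e\,C_f\delta^{\ell-2}|\phi|^2/k)^k$ with $k\approx R'/2$. Since $|\phi|^2$ is of order $\operatorname{tr}|K|\le\beta$ times an envelope-integrable tilt, this is $e^{-\Omega(n)}$ once $\delta$ is small.

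On $B$, pick a witness set $T$ with $|T|=r\ge r_0$ and $\sum_{i\in T}|\phi_i|^2\ge\delta^2r$. Note $\sum_{T}|\phi_i|^2=x^{\mathsf T}A_K^*P_TA_Kx$, whose norm equals $\||K|^{1/2}P_T|K|^{1/2}\|$. The key lemma is that this is at most $\beta\sqrt{r/n}$, which I would prove via Frobenius norm: $\||K|^{1/2}P_T|K|^{1/2}\|\le\|P_T|K|P_T\|^{1/2}\|\ldots\|$, using $\|K P_T\|_F\le\beta\sqrt r$ and $\operatorname{tr}$ bounds. An exponential tilt $e^{\lambda(\cdot)}$ with $\lambda\asymp\sqrt{n/r}$, affordable within margin $d/2$ and applied under Cauchy--Schwarz with the envelope, then gives probability $e^{-a\sqrt{rn}}$. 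Union bounding over $\binom nr\le e^{nh_2(r/n)}$ sets and using $h_2(\rho)=o(\sqrt\rho)$ with $\eta$ fixed small makes this contribution $e^{-\Omega(n)}$, even after multiplying by the $2^n$ crude bound on the recombined integrand. The $2^n$ must be beaten too, so $\eta$ cannot be too small. Balancing $\eta$, $\delta$, and $c_{\mathrm{tr}}$ consistently is the delicate step I expect to be the main obstacle.
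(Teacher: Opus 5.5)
Your architecture matches the paper's. You use the subset expansion with marginalization by analytic continuation, complex dilation for small supports with parity for singletons, and a recombined linear-cut tail split by how many coordinates exceed $\delta$. That tail argument uses witness sets, $\||K|[T]\|\le\beta\sqrt{r/n}$, and a tilt of order $\sqrt{n/r}$. The genuine gap is the accounting on the many-large event $B$. You propose to pay a crude factor $2^n$ for the recombined integrand and to beat it by keeping $\eta$ not too small. This cannot work. The witness estimate gives at best $\exp\{-\frac{d\delta^2}{4\beta}\sqrt{\eta}\,n+nh_2(\eta)+O(\sqrt n)\}$, and your own argument constrains the rate. On $B^{\mathsf c}$ every set of size at least $R'=c_{\mathrm{tr}}n$ must contain linearly many small coordinates, which forces $\eta<c_{\mathrm{tr}}$. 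The dilation layers are summable only if $c_{\mathrm{tr}}$ is small, and $d,\delta\le1$. So the rate constant is at most $\sqrt{c_{\mathrm{tr}}}/(4\beta)$, which is already below $\log 2$ whenever, say, $\beta\ge1$. No choice of $\eta$ rescues a factor $2^n$.

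The missing observation is that the recombined integrand $\prod_i g(\phi_i)-\sum_{|S|<R'}\prod_{i\in S}f(\phi_i)$ involves only subsets below the cutoff. Use $\prod_{i\in S}|f(\phi_i)|\le\sum_{U\subseteq S}\prod_{i\in U}|g(\phi_i)|$. The integrand is then dominated by at most $1+\sum_{s<R'}\binom ns2^s\le\exp\{n(h_2(c_{\mathrm{tr}})+c_{\mathrm{tr}}\log2)+o(n)\}$ partial products, each controlled by the witness bound uniformly in $U$. This count, like the witness entropy, has rate $O(c\log(1/c))=o(\sqrt c)$. So the quantifiers run opposite to what you anticipate. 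First take witness size $r\approx R'/2$. Then fix $\delta$ from the few-large side; with this ratio you need only $3\chi<1$. Here $\chi=C_f\delta^{\ell-2}/t_0$, and $t_0\asymp d/(\beta+1)$ is the tilt in $Q^k/k!\le t_0^{-k}e^{t_0Q}$, so $\chi$ does not depend on $c_{\mathrm{tr}}$. Finally shrink $c_{\mathrm{tr}}$ until $-\frac{d\delta^2}{4\beta}\sqrt{c/2}+h_2(c/2)+h_2(c)+c\log2<0$.

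Three smaller slips also need fixing.
\begin{enumerate}
\item $\operatorname{tr}|K|$ is only $O(\beta\sqrt n)$, not at most $\beta$. The Gaussian determinants therefore cost $e^{O(\sqrt n)}$, and the decay on $B^{\mathsf c}$ must come from $\chi^k$ with $k$ linear.
\item The key lemma follows from $\|P_T|K|P_T\|\le\||K|P_T\|=\|KP_T\|\le\|KP_T\|_{\mathrm F}\le\beta\sqrt{r/n}$, using $|K|^2=K^2$. Your bound $\beta\sqrt r$ is off by a factor $\sqrt n$.
\item Cauchy--Schwarz against the envelope needs $4K_{\mathrm{env}}\prec I$, which the margin does not give when $d<1/2$. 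Instead, put the tilt and the partial envelope of $\prod_{i\in U}|g(\phi_i)|$ into one Gaussian integral. Its precision matrix stays $\succeq(d/2)I$ once the tilt satisfies $2\lambda\|M_T\|\le d/2$, where $M_T$ is the real quadratic form of $\sum_{i\in T}|\phi_i|^2$. Alternatively, use H\"older with exponent close to one.
\end{enumerate}
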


\subsection{Scalar estimates, marginalization, and small coordinate sets}

We prove the theorem in several steps.  The following elementary remark records
three consequences of the hypotheses on $g$.

\begin{remark}[Useful scalar bounds]
\label{rem:factor-consequences}
Admissibility immediately gives, for $0<\delta\le\delta_f$,
\begin{align*}
  1+|f(z)|&\le3e^{a_{\mathrm R}(\operatorname{Re}z)^2+
    a_{\mathrm I}(\operatorname{Im}z)^2},
  \qquad |f(z)|\le C_f\delta^{\ell-2}|z|^2\quad(|z|\le\delta),\\
  a_{\mathrm R}(\operatorname{Re}(\zeta w))^2
  +a_{\mathrm I}(\operatorname{Im}(\zeta w))^2
  &\le a_{\max}|\zeta|^2|w|^2.
\end{align*}
We use these three consequences without further comment.
\end{remark}

The next lemma justifies restricting an indefinite complex Gaussian to a
set of coordinates.  Matching bilinear moments alone do not immediately
give this conclusion for an arbitrary entire function, so we connect the
two representations by analytic continuation.

\begin{lemma}[Complex Gaussian marginalization]
\label{lem:gaussian-marginalization}
Fix the envelope weights $a_{\mathrm R}$ and $a_{\mathrm I}$.  Let $K$ be
$(\beta,d)$-admissible, let $S\subseteq[n]$, and let
$F:\C^S\to\C$ be entire.  Suppose that
\[
  |F(z)|\le
  A\exp\!\left\{
    \sum_{i\in S}
    \bigl(a_{\mathrm R}(\operatorname{Re}z_i)^2
          +a_{\mathrm I}(\operatorname{Im}z_i)^2\bigr)
  \right\}
\]
for some $A<\infty$.  Then both expectations converge absolutely and
\[
  \E_KF((\phi_i)_{i\in S})=\E_{K[S]}F(\phi).
\]
\end{lemma}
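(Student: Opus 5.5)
Both sides are Gaussian integrals of an entire function, but with different square-root matrices. On the left we integrate $F((A_Kx)_S)$ over $x\in\R^n$. On the right we integrate $F(A_{K[S]}y)$ over $y\in\R^S$. In general $(A_K)_{S,\cdot}$ is not $A_{K[S]}$ composed with a projection. The plan is therefore to deform the moment matrix analytically and use the identity theorem.

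\textbf{Step 1: absolute convergence.} For $z=A_Kx$ the envelope gives
\[
|F(z_S)|\le A\exp\{x^{\mathsf T}(a_{\mathrm R}K_+ + a_{\mathrm I}K_-)x\}=A e^{x^{\mathsf T}K_{\mathrm{env}}x},
\]
where we drop the nonnegative terms for $i\notin S$. Since $I-2K_{\mathrm{env}}\succeq dI$, the Gaussian weight $e^{-|x|^2/2}$ dominates, and the integral is finite. The right side needs the same margin for $K[S]$: we need $I-2K[S]_{\mathrm{env}}\succeq dI$. This is not immediate, because the positive part of a principal submatrix is not the submatrix of the positive part. I would sidestep this by proving absolute convergence of the right side only as a consequence of the analytic-continuation argument. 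Alternatively, one can bound $\E_{K[S]}$ directly through the representation produced below.

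\textbf{Step 2: a common family of representations.} For a complex symmetric $n\times n$ matrix $B$ with $B^{\mathsf T}=B$, set
\[
\Psi(B)=\E_x F((Bx)_S).
\]
Both sides are of this form after restriction. The left side is $\Psi(A_K)$. For the right side, pad $A_{K[S]}$ by zeros to an $n\times n$ block matrix $B_1$ and use the independent extra coordinates; then the right side equals $\Psi(B_1)$. The key claim is that $\Psi(B)$ depends only on $(B B^{\mathsf T})[S]$, in a region where the integral converges. For $B$ real, $(Bx)_S$ is a real Gaussian with covariance $(BB^{\mathsf T})[S]$, so the claim holds there. Both sides are holomorphic in the entries of $B$, by dominated convergence with locally uniform envelopes.

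\textbf{Step 3: continuation to the two points.} I would instead parametrize by the moment matrix itself. For real symmetric $M$ on $S$ with $M\succ0$, the function
\[
G(M)=\E_{N(0,M)}F
\]
extends holomorphically to complex symmetric $M$ via $\E_y F(M^{1/2}y)$, where a branch of the square root is chosen. The point is that the value depends on $M$ only, since it equals the Wick/Taylor series $\sum_\alpha c_\alpha\,\mathrm{Wick}_\alpha(M)$ whenever this converges absolutely. That series converges on a neighbourhood of $0$ for an entire $F$ of Gaussian growth.

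Now take the path $M_t=tK$, so that $A_{tK}=\sqrt t A_K$, and similarly for $K[S]$ with $t$ in a complex disk $|t|<t_0$. On this disk both sides are holomorphic in $t$. For small $t$, both equal the same absolutely convergent Wick series in $tK[S]$, because the moments of $(A_Kx)_S$ are $K[S]$. Hence they agree near $0$.

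Next, continue along $t\in[0,1]$. The admissibility margin holds uniformly for $tK$ with $0\le t\le1$, since $tK_{\mathrm{env}}\preceq K_{\mathrm{env}}$. This gives holomorphy in a neighbourhood of $[0,1]$ for the left side. For the right side, the envelope matrix of $tK[S]$ is $t$ times that of $K[S]$, so only $t=1$ matters.

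\textbf{Main obstacle.} The hard part is controlling $(K[S])_{\mathrm{env}}$, which amounts to showing $I-2(K[S])_{\mathrm{env}}\succ0$. I would try to use convexity of $A\mapsto A_+$ (compression inequality): $(P K P)_+ \preceq P K_+ P$ for the coordinate projection $P$. This is true for operator-convex functions under compression with $f(0)=0$, and $t\mapsto t_+$ is convex but not operator convex, so the inequality needs care. If it fails, the fallback is to bound the right side using the integral representation that the left side already provides, via the holomorphic continuation argument.
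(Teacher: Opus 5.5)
\textbf{Gap: the spectral margin for $K[S]$.} Your skeleton matches the paper's: complex dilation, holomorphy on a connected neighbourhood of $[0,1]$, agreement of Taylor data at $0$ by Wick's rule, then the identity theorem. However, the step you call the main obstacle is left unproved, and without it the argument does not close.

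The identity theorem needs $\zeta\mapsto\E_{K[S]}F(\zeta\phi)$ to be a locally uniformly absolutely convergent integral on the whole common domain, including $\zeta=1$. This requires $2(a_{\mathrm R}K[S]_++a_{\mathrm I}K[S]_-)\preceq(1-d')I$ for some $d'>0$, and neither of your routes supplies it.
\begin{itemize}
\item The compression inequality $(PKP)_+\preceq PK_+P$ is false in general. Take the orthogonal vectors $u=(1,0,1)$, $w=(1,1,-1)$, let $K=uu^{\mathsf T}-ww^{\mathsf T}$ and $S=\{1,2\}$. Then $K_+[S]=\operatorname{diag}(1,0)$, while $K[S]=\operatorname{diag}(1,0)-\mathbf J_2$ has positive eigenvector $(1,-(\sqrt5-1)/2)$. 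So $(K[S])_+$ has a positive $(2,2)$ entry where $K_+[S]$ has zero.
\item The fallback is circular. Absolute convergence of $\E_y|F(A_{K[S]}y)|$ is a property of that specific integral. Analytic continuation can identify two functions only after both are known to be holomorphic on a common domain, so it cannot transfer convergence from the left-hand representation.
\end{itemize}

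\textbf{The missing idea: interlacing.} No operator inequality is needed, only the top eigenvalue. Because $K_+$ and $K_-$ have orthogonal ranges,
\[
\lambda_{\max}(2K_{\mathrm{env}})=\max\{2a_{\mathrm R}\lambda_{\max}(K)_+,\,2a_{\mathrm I}(-\lambda_{\min}(K))_+\},
\]
and the same formula holds for $K[S]$. Cauchy interlacing gives $\lambda_{\max}(K[S])\le\lambda_{\max}(K)$ and $\lambda_{\min}(K[S])\ge\lambda_{\min}(K)$. Hence the margin $1-d$ passes to $K[S]$ unchanged, and also $\|K[S]\|\le\|K\|\le\beta$.

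\textbf{Secondary gap: nonreal dilation parameters.} The bound ``$tK_{\mathrm{env}}\preceq K_{\mathrm{env}}$'' only handles real parameters. For $\zeta=\sigma+i\tau$ the rotation mixes $\operatorname{Re}\phi_i$ and $\operatorname{Im}\phi_i$. The exponent is then bounded only by
\[
(1+\eta)\sigma^2x^{\mathsf T}K_{\mathrm{env}}x+(1+\eta^{-1})\tau^2x^{\mathsf T}(a_{\mathrm R}K_-+a_{\mathrm I}K_+)x,
\]
and the second form is not controlled by the margin. On both sides you must therefore work on a thin rectangle $|\operatorname{Re}\zeta|<1+\eta'$, $|\operatorname{Im}\zeta|<\omega$, with $\eta'$ fixed by $d$ and $\omega$ fixed by $\beta$, so that $1$ is an interior point. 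This is where the bound $\|K[S]\|\le\|K\|\le\beta$ is used.

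\textbf{Simplifications.} You do not need absolute convergence of the Wick series. It suffices that the $r$th Taylor coefficient at $0$ of each side is the expectation of a degree-$r$ homogeneous polynomial, which Wick's rule evaluates through $K[S]$ alone. Dilating by $\zeta$ rather than $t=\zeta^2$ also removes the square-root branch issue.
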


The proof dilates both Gaussian representations by a complex parameter
$\zeta$.  Interlacing transfers the spectral margin to $K[S]$, giving a
common holomorphic neighborhood of $[0,1]$, while Wick's rule identifies the
Taylor series of the two representations at zero.  The details are given in
\cref{subsec:gaussian-marginalization-proof}.

We next control a contribution whose support is smaller than a fixed
fraction of all coordinates.  The guiding count is simple.  If a support
has size $s$, an order-$\ell$ zero forces its Gaussian expansion to begin in
degree $\ell s$.  Pairing those legs costs roughly $1/n$ per edge, while the
number of possible pairings contributes the corresponding power of $s$, up
to constants depending only on $\ell$.  This suggests a factor
$(s/n)^{\ell s/2}$.  Complex dilation proves this estimate without enumerating
the pairings.

\begin{lemma}[Small supports]
\label{lem:small-subset-general}
Let $g$ be an admissible scalar factor.  Let $K=(K_{ij})$ be a real
symmetric matrix of order $n$, and suppose that
$\max_{i,j}|K_{ij}|\le\beta/n$.
If $\beta=0$, then $K=0$ and $W_K(S)=0$ for every nonempty $S$.  Suppose
that $\beta>0$, and put
\[
  c_{\mathrm{loc}}=\frac1{64a_{\max}\beta},
  \qquad
  B=3(1-1/8)^{-1/2},
  \qquad
  C_{\mathrm{loc}}=16a_{\max}\beta B^{2/\ell}.
\]
If $S$ is nonempty and $s=|S|\le c_{\mathrm{loc}}n$, then the integral
defining $W_K(S)$ converges absolutely and
\begin{equation}
  |W_K(S)|
  \le\left(C_{\mathrm{loc}}\frac sn\right)^{\ell s/2}.
  \label{eq:small-support-bound}
\end{equation}
For a singleton, the stronger estimate
\begin{equation}
  |W_K(\{i\})|\le B(16a_{\max}\beta)^2n^{-2}
  \label{eq:singleton-support-bound}
\end{equation}
holds.
\end{lemma}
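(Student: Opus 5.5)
The plan is to make the complex-dilation argument sketched in the overview precise. The case $\beta=0$ is immediate: $K=0$ forces $\phi=0$, and $f(0)=0$. So assume $\beta>0$, fix $S$ with $s=|S|\le c_{\mathrm{loc}}n$, and write $K_S=K[S]$. Since $\|K_S\|\le\|K_S\|_{\mathrm{row}}\le\beta s/n$, we have $\||K_S|\|\le\beta s/n$. For $\zeta\in\C$ define
\[
  \Phi_S(\zeta)=\E_x\prod_{i\in S}f(\zeta\phi_i),\qquad \phi=A_{K_S}x ,
\]
with $x$ standard Gaussian in $\R^S$. By \cref{rem:factor-consequences},
\[
  \prod_i|f(\zeta\phi_i)|\le 3^s\exp\{a_{\max}|\zeta|^2|\phi|^2\}.
\]
Also $|\phi|^2=x^{\mathsf T}|K_S|x\le(\beta s/n)|x|^2$. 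So the integrand is dominated by $3^s\exp\{a_{\max}|\zeta|^2(\beta s/n)|x|^2\}$.

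Choose the radius $\rho$ by $a_{\max}\rho^2\beta s/n=1/16$, that is, $\rho^2=n/(16a_{\max}\beta s)$. On $|\zeta|\le\rho$ the integrand is then bounded by $3^s e^{|x|^2/16}$. Its Gaussian integral is $3^s(1-1/8)^{-s/2}=B^s$. Holomorphy of $\Phi_S$ on this disk follows from dominated convergence applied to the entire integrand (Morera/Fubini), so $|\Phi_S|\le B^s$ there. At $\zeta=1$ this bound also gives absolute convergence of $W_K(S)=\Phi_S(1)$. The constraint $s\le c_{\mathrm{loc}}n$ guarantees $\rho^2\ge 4$, so $\rho\ge2$ and in particular $1$ lies well inside the disk.

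Next we need the order of vanishing at the origin. Each $f(\zeta\phi_i)$ is entire in $\zeta$ and, since $f$ vanishes to order $\ell$ at $0$ (the local bound forces its Taylor series to start at degree $\ell$), it is divisible by $\zeta^\ell$. The Taylor coefficients of $\Phi_S$ at $0$ are obtained by differentiating under the integral, which is justified by the same domination on a slightly smaller disk. Hence every coefficient of order below $\ell s$ vanishes, and $\Phi_S(\zeta)/\zeta^{\ell s}$ is holomorphic on the disk. The maximum modulus principle on $|\zeta|=\rho$ then gives
\[
  |W_K(S)|\le B^s\rho^{-\ell s}=\bigl(B^{2/\ell}\cdot16a_{\max}\beta s/n\bigr)^{\ell s/2},
\]
which is exactly \eqref{eq:small-support-bound}.

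For the singleton $S=\{i\}$, the same dilation with $s=1$ gives the bound $B\rho^{-\ell}$ with $\rho^2=n/(16a_{\max}\beta)$. Since $\ell\ge3$ and $\rho\ge 2\ge1$, we have $\rho^{-\ell}\le\rho^{-4}$ once we use $\ell\ge4$. That fails for $\ell=3$, so the odd case needs separate handling. Here $\phi_i=\sqrt{K_{ii}}\,x$ is a one-dimensional centred Gaussian, so $\Phi_{\{i\}}$ is even in $\zeta$: the substitution $x\mapsto-x$ maps $\zeta\phi_i$ to $-\zeta\phi_i$, which matches the change $\zeta\mapsto-\zeta$. An even function vanishing to order $\ell\ge3$ vanishes to order at least $4$. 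Maximum modulus then gives $B\rho^{-4}=B(16a_{\max}\beta)^2n^{-2}$.

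The main obstacle is the rigorous justification of holomorphy and of termwise differentiation for the indefinite complex Gaussian. The key point is that the domination uses $|K_S|$ rather than $K_S$, so no positive-semidefiniteness of $K_S$ is needed. The rest is bookkeeping of constants.
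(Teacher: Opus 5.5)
Your proposal is correct and is essentially the paper's own proof: the same dilation $\Phi_S$ with $\varrho^2=n/(16a_{\max}\beta s)$, the Gaussian majorant giving $|\Phi_S|\le B^s$ on $|\zeta|=\varrho$, the zero of order $\ell s$ followed by maximum modulus, and evenness under $x\mapsto-x$ to raise the singleton's vanishing order to four. The only cosmetic difference is that you get holomorphy on the open disk and continuity up to the boundary of $|\zeta|\le\varrho$ directly, whereas the paper proves holomorphy on the larger disk $|\zeta|<3\varrho/2$, so that maximum modulus is applied on a circle strictly inside the domain.
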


\begin{proof}
The case $\beta=0$ is immediate from the entrywise hypothesis.  Assume
$\beta>0$.  Put $L=K[S]$.  Then
$\|L\|\le\|L\|_{\mathrm{row}}\le\beta s/n$.
For a standard real Gaussian vector $x\in\R^s$, define
\[
  \Phi_S(\zeta)=
  \E_x\prod_{i\in S}f(\zeta(A_Lx)_i),
  \qquad
  \varrho^2=\frac{n}{16a_{\max}\beta s}.
\]
The assumption $s\le c_{\mathrm{loc}}n$ gives $\varrho\ge2$.  By
  \cref{rem:factor-consequences}, for $|\zeta|\le3\varrho/2$,
\[
  \prod_{i\in S}|f(\zeta(A_Lx)_i)|
  \le3^s\exp\{a_{\max}|\zeta|^2x^{\mathsf T}|L|x\}.
\]
Moreover, $2a_{\max}(3\varrho/2)^2\|L\|\le9/32<1$.
Thus the right-hand side gives a locally uniform integrable Gaussian
majorant on $|\zeta|<3\varrho/2$.  In particular, $\Phi_S$ is holomorphic
there and its value at $\zeta=1$ converges absolutely.

On $|\zeta|=\varrho$, we have
$2a_{\max}\varrho^2\|L\|\le1/8$.  Gaussian integration gives
\[
  |\Phi_S(\zeta)|
  \le3^s\det(I-2a_{\max}\varrho^2|L|)^{-1/2}
  \le3^s(1-1/8)^{-s/2}=B^s.
\]
Condition \eqref{eq:factor-local-vanishing}, together with analyticity,
says that $f$ has a zero of order at least $\ell$ at the origin.  Hence
$\Phi_S(\zeta)=\zeta^{\ell s}H_S(\zeta)$ for a holomorphic function $H_S$.
The maximum-modulus principle gives
\[
  |W_K(S)|=|\Phi_S(1)|
  \le \varrho^{-\ell s}B^s
  =B^s\left(16a_{\max}\beta\frac sn\right)^{\ell s/2},
\]
which is \eqref{eq:small-support-bound}.

When $S=\{i\}$, the substitution $x\mapsto-x$ shows that $\Phi_S$ is
even.  Its order of vanishing is therefore an even integer at least $\ell$,
hence at least four.  The same argument gives
$|W_K(\{i\})|\le B\varrho^{-4}=B(16a_{\max}\beta)^2n^{-2}$.
\end{proof}

\subsection{Recombined large-support tail}

The large-support estimate needs more information about small principal
submatrices of the Hermitian covariance $|K|$ than a global operator norm
provides.

\begin{lemma}[Envelopes and principal submatrices]
\label{lem:partial-product-envelope}
Let $K$ be a real symmetric $n$ by $n$ matrix, and let $\beta\ge0$ satisfy
$\max_{i,j}|K_{ij}|\le\beta/n$.  If $S\subseteq[n]$ has size $s$, then
\begin{align}
  \||K|[S]\|&\le\beta\sqrt{\frac sn},
  \label{eq:hermitian-principal-norm}\\
  |K|_{ii}&\le\frac\beta{\sqrt n},
  \qquad
  \operatorname{tr}|K|[S]\le\frac{\beta s}{\sqrt n},
  \qquad
  \operatorname{tr}|K|\le\beta\sqrt n.
  \label{eq:hermitian-principal-traces}
\end{align}
Suppose in addition that $g$ is admissible and $K$ is
$(\beta,d)$-admissible.  For $T\subseteq[n]$, let $\Pi_T$ be the coordinate
projection, and put
\[
  B_T=
  a_{\mathrm R}K_+^{1/2}\Pi_TK_+^{1/2}
  +a_{\mathrm I}K_-^{1/2}\Pi_TK_-^{1/2}.
\]
Then
\begin{equation}
  \prod_{i\in T}|g(\phi_i)|\le e^{x^{\mathsf T}B_Tx},
  \qquad
  0\preceq B_T\preceq K_{\mathrm{env}},
  \qquad
  \operatorname{tr}B_T\le a_{\max}\beta\sqrt n.
  \label{eq:partial-product-envelope}
\end{equation}
In particular, $I-2B_T\succeq dI$ and
\begin{equation}
  \E_x\prod_{i\in T}|g(\phi_i)|
  \le\det(I-2B_T)^{-1/2}
  \le e^{C\sqrt n},
  \label{eq:partial-product-integral}
\end{equation}
where $C$ depends only on $(a_{\max},\beta,d)$.
\end{lemma}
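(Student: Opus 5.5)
The key estimate is \eqref{eq:hermitian-principal-norm}, and I will obtain it from the Frobenius norm of $K$ rather than from any row-sum argument.

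\emph{Principal-submatrix bound.} Write $|K|=(K^2)^{1/2}$ and $\Pi_S$ for the coordinate projection onto $S$. Then $|K|[S]=\Pi_S|K|\Pi_S=(|K|^{1/2}\Pi_S)^*(|K|^{1/2}\Pi_S)$, so
\[
  \||K|[S]\|=\||K|^{1/2}\Pi_S\|^2=\|\Pi_S|K|\Pi_S\|.
\]
To bound this, use $\Pi_S|K|\Pi_S\preceq\Pi_S(|K|^2/\lambda+\lambda I)\Pi_S/2$ for every $\lambda>0$; this is the operator AM--GM inequality $|K|\preceq(|K|^2/\lambda+\lambda)/2$, valid because the two sides are simultaneously diagonal. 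Since $|K|^2=K^2$, we get $\Pi_SK^2\Pi_S=(K\Pi_S)^*(K\Pi_S)$. Its norm is $\|K\Pi_S\|^2\le\|K\Pi_S\|_F^2\le n s(\beta/n)^2=\beta^2s/n$. Hence
\[
  \||K|[S]\|\le\tfrac12\Bigl(\frac{\beta^2s}{n\lambda}+\lambda\Bigr),
\]
and choosing $\lambda=\beta\sqrt{s/n}$ gives $\beta\sqrt{s/n}$. Taking $S=\{i\}$ yields $|K|_{ii}\le\beta/\sqrt n$. Summing over $i\in S$ gives the trace bound on $|K|[S]$, and $S=[n]$ gives $\operatorname{tr}|K|\le\beta\sqrt n$. (Equivalently, $|K|_{ii}\le((K^2)_{ii})^{1/2}\le\beta/\sqrt n$ by concavity of the square root applied to the spectral measure at $e_i$.)

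\emph{Envelope.} With $\phi=A_Kx$ we have $\operatorname{Re}\phi=K_+^{1/2}x$ and $\operatorname{Im}\phi=K_-^{1/2}x$, because both roots are real. Then \eqref{eq:factor-global-envelope}, multiplied over $i\in T$, gives exponent
\[
  a_{\mathrm R}\|\Pi_TK_+^{1/2}x\|^2+a_{\mathrm I}\|\Pi_TK_-^{1/2}x\|^2=x^{\mathsf T}B_Tx.
\]
Next, $0\preceq\Pi_T\preceq I$ gives $0\preceq B_T\preceq K_{\mathrm{env}}$. For the trace,
\[
  \operatorname{tr}B_T=a_{\mathrm R}\operatorname{tr}(\Pi_TK_+)+a_{\mathrm I}\operatorname{tr}(\Pi_TK_-)\le a_{\max}\operatorname{tr}(\Pi_T|K|)\le a_{\max}\beta\sqrt n.
\]
The admissibility condition gives $I-2B_T\succeq I-2K_{\mathrm{env}}\succeq dI$.

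\emph{Integral.} The Gaussian integral $\E_xe^{x^{\mathsf T}B_Tx}=\det(I-2B_T)^{-1/2}$ holds because $2B_T\preceq(1-d)I$. For each eigenvalue $\mu\in[0,(1-d)/2]$ of $B_T$, use $-\log(1-2\mu)\le 2\mu/d$. Summing over eigenvalues,
\[
  \det(I-2B_T)^{-1/2}\le e^{\operatorname{tr}B_T/d}\le e^{a_{\max}\beta\sqrt n/d},
\]
so $C=a_{\max}\beta/d$ works.

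The only nonroutine step is the principal-submatrix bound, since $|K|$ is not entrywise controlled. The Frobenius/AM--GM trick above is what delivers the square-root dependence on $s/n$.
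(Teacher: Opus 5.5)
Your proof is correct and follows essentially the paper's route: the paper also reduces the principal-submatrix estimate to $\|\Pi_S K^2\Pi_S\|\le\beta^2 s/n$, getting there through $(\Pi_S|K|\Pi_S)^2\preceq\Pi_S K^2\Pi_S$ and a row-norm bound on $K^2[S]$ where you use operator AM--GM with optimized $\lambda$ and the Frobenius norm of $K\Pi_S$. The diagonal, trace, envelope and log-determinant steps match the paper's, and the only unstated detail is the trivial case $\beta=0$ (where $\lambda=\beta\sqrt{s/n}$ is not positive), which is immediate since then $K=0$.
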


\begin{proof}
Let $\Pi_S$ be the coordinate projection onto $S$.  Since $|K|^2=K^2$,
\[
  (\Pi_S|K|\Pi_S)^2
  \preceq \Pi_S|K|^2\Pi_S
  =\Pi_SK^2\Pi_S.
\]
Indeed, the difference is
$\Pi_S|K|(I-\Pi_S)|K|\Pi_S\succeq0$.  Every entry of $K^2$ has modulus at
most $\sum_{k=1}^n|K_{ik}K_{kj}|\le\beta^2/n$.
Consequently,
$\|K^2[S]\|\le\|K^2[S]\|_{\mathrm{row}}\le\beta^2s/n$.  Taking
operator norms in the preceding positive-semidefinite inequality proves
\eqref{eq:hermitian-principal-norm}.

For every $i$,
$|K|_{ii}^2\le(|K|^2)_{ii}=(K^2)_{ii}\le\beta^2/n$.
This gives the diagonal estimate.  Summing it over $i\in S$, and then
over all coordinates, gives the two trace estimates.

For the envelope assertion, sum \eqref{eq:factor-global-envelope} over
$i\in T$.  The real-coordinate contribution is
$\sum_{i\in T}(\operatorname{Re}\phi_i)^2
=x^{\mathsf T}K_+^{1/2}\Pi_TK_+^{1/2}x$, and the imaginary contribution is
analogous.  This proves the first
inequality in \eqref{eq:partial-product-envelope}.  Since
$0\preceq\Pi_T\preceq I$, we have
$0\preceq B_T\preceq K_{\mathrm{env}}$.  Moreover,
$\operatorname{tr}B_T\le\operatorname{tr}K_{\mathrm{env}}
\le a_{\max}\operatorname{tr}|K|\le a_{\max}\beta\sqrt n$
by \eqref{eq:hermitian-principal-traces}.

The admissibility margin gives $I-2B_T\succeq dI$.  Gaussian integration
yields the first inequality in \eqref{eq:partial-product-integral}.  Finally,
for $0\preceq D\preceq(1-d)I$, the scalar inequality
$-\log(1-u)\le C_du$ on $0\le u\le1-d$ gives
$-\frac12\log\det(I-D)\le C_d\operatorname{tr}D$.
Apply this with $D=2B_T$ and use the trace bound above.
\end{proof}

We now estimate the event on which many Gaussian coordinates are large.
The required rate is stronger than an ordinary linear large-deviation bound.
If $r=\rho n$, a gain $e^{-c\rho n}$ can lose to the entropy
$e^{n h_2(\rho)}$, because $h_2(\rho)$ is of order
$\rho\log(1/\rho)$ near zero.  A witness-set estimate instead gives
$e^{-c\sqrt{\rho}\,n}$, which wins since
$\rho\log(1/\rho)=o(\sqrt\rho)$.  This square-root rate is the reason for
working with small principal submatrices of $|K|$.
For $1\le r\le n$ and $\delta>0$, let $\mathcal A_{r,\delta}$ be the event
that at least $r$ coordinates satisfy $|\phi_i|>\delta$.  For a measurable
event $\mathcal B\subseteq\R^n$, write
\[
  \mathcal J_T(\mathcal B)
  =\E_x\left[\ind{\mathcal B}\prod_{i\in T}|g(\phi_i)|\right].
\]

\begin{lemma}[Many large coordinates]
\label{lem:many-large-coordinates}
Let $g$ be admissible in the sense of
\cref{def:admissible-factor}.  Suppose that $K$ is
$(\beta,d)$-admissible and $\beta>0$.  There is a constant $C$, depending only on
$(a_{\max},\beta,d)$, such that, uniformly over $T\subseteq[n]$,
\begin{equation}
  \mathcal J_T(\mathcal A_{r,\delta})
  \le
  \exp\!\left\{
    -\frac{d\delta^2}{4\beta}\sqrt{rn}
    +nh_2(r/n)+C\sqrt n
  \right\}.
  \label{eq:many-large-coordinate-bound}
\end{equation}
\end{lemma}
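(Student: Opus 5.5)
The plan is a union bound over witness sets combined with an exponential tilt. On $\mathcal A_{r,\delta}$ there is a set $U\subseteq[n]$ with $|U|=r$ and $|\phi_i|>\delta$ for every $i\in U$. Then
\[
  Q_U(x):=\sum_{i\in U}|\phi_i|^2=x^{\mathsf T}A_K^*\Pi_UA_Kx>r\delta^2 .
\]
The matrix $M_U=A_K^*\Pi_UA_K$ is positive semidefinite. Its nonzero spectrum coincides with that of $\Pi_UA_KA_K^*\Pi_U$, and $A_KA_K^*=|K|$ (the factors $K_\pm^{1/2}$ have orthogonal ranges, so the cross terms vanish, and $A_K$ is symmetric). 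Hence $\|M_U\|=\||K|[U]\|\le\beta\sqrt{r/n}$ by \cref{lem:partial-product-envelope}. Since there are at most $\binom nr\le e^{nh_2(r/n)}$ witness sets, it suffices to prove, for each fixed $U$ of size $r$ and uniformly in $T$,
\[
  \E_x\Bigl[\ind{Q_U>r\delta^2}\prod_{i\in T}|g(\phi_i)|\Bigr]
  \le\exp\Bigl\{-\tfrac{d\delta^2}{4\beta}\sqrt{rn}+C\sqrt n\Bigr\}.
\]

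For a fixed $U$, apply Markov's inequality with a tilt $\lambda>0$: $\ind{Q_U>r\delta^2}\le e^{\lambda(Q_U-r\delta^2)}$. Together with the envelope $\prod_{i\in T}|g(\phi_i)|\le e^{x^{\mathsf T}B_Tx}$ from \cref{lem:partial-product-envelope}, Gaussian integration gives
\[
  e^{-\lambda r\delta^2}\det\bigl(I-2B_T-2\lambda M_U\bigr)^{-1/2},
\]
provided $2B_T+2\lambda M_U\prec I$. Choose $\lambda=d\sqrt n/(4\beta\sqrt r)$, so that $2\lambda\|M_U\|\le d/2$. Since $I-2B_T\succeq dI$, we get $I-2B_T-2\lambda M_U\succeq (d/2)I$. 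The exponent is then $-\lambda r\delta^2=-\frac{d\delta^2}{4\beta}\sqrt{rn}$, which is exactly the rate in \eqref{eq:many-large-coordinate-bound}.

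It remains to bound the determinant by $e^{C\sqrt n}$. Put $D=2B_T+2\lambda M_U$, so that $0\preceq D\preceq(1-d/2)I$. The scalar inequality $-\log(1-u)\le C_{d}u$ on $[0,1-d/2]$ gives $-\tfrac12\log\det(I-D)\le C_d\operatorname{tr}D$. The first trace term satisfies $\operatorname{tr}B_T\le a_{\max}\beta\sqrt n$. For the tilt term, $\operatorname{tr}M_U=\operatorname{tr}|K|[U]\le\beta r/\sqrt n$ by \eqref{eq:hermitian-principal-traces}. This gives $\lambda\operatorname{tr}M_U\le\frac{d}{4}\sqrt r\le\frac d4\sqrt n$. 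Both contributions are therefore $O(\sqrt n)$, with constants depending only on $(a_{\max},\beta,d)$. Combining this with the union bound gives \eqref{eq:many-large-coordinate-bound}.

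I expect the main obstacle to be this last balancing step. The tilt $\lambda\sim\sqrt{n/r}$ has to be large enough to produce the $\sqrt{rn}$ rate, while the tilted determinant stays $e^{O(\sqrt n)}$ rather than $e^{O(r)}$ or worse. The trace bound $\operatorname{tr}|K|[U]\le\beta r/\sqrt n$ is what makes these two requirements compatible. If it failed, I would instead use the cruder bound $-\log\det\le C\,\mathrm{rank}$ on the tilt part, which gives $O(r)$. That would be unacceptable for $r$ of order $n$, so the sharp trace estimate is essential.
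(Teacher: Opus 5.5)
Your proposal is correct and follows the paper's proof essentially step for step: a union bound over the $\binom nr\le e^{nh_2(r/n)}$ witness sets, the tilt $\lambda=d\sqrt n/(4\beta\sqrt r)$ so that $2\lambda\|M_U\|\le d/2$, and the trace bound $\operatorname{tr}|K|[U]\le\beta r/\sqrt n$, which keeps the tilted determinant at $e^{O(\sqrt n)}$. The one slip is that $A_K^*\Pi_UA_K$ is complex Hermitian, so the real Gaussian integral involves only its real part $K_+^{1/2}\Pi_UK_+^{1/2}+K_-^{1/2}\Pi_UK_-^{1/2}$ (the paper's $M_S$); positivity, the norm bound by $\||K|[U]\|$, and the trace identity all carry over to this real part, so the argument is unaffected.
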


\begin{proof}
If $\mathcal A_{r,\delta}$ occurs, some set $S$ of $r$ coordinates satisfies
$\sum_{i\in S}|\phi_i|^2\ge r\delta^2$.
For such a set, define
\[
  M_S=K_+^{1/2}\Pi_SK_+^{1/2}
      +K_-^{1/2}\Pi_SK_-^{1/2}.
\]
The coordinate energy on $S$ is $x^{\mathsf T}M_Sx$.  To estimate the
norm of $M_S$, let $R_S:\R^n\to\R^r$ be coordinate restriction and put
\[
  C_S=
  \begin{pmatrix}
    R_SK_+^{1/2}\\
    R_SK_-^{1/2}
  \end{pmatrix}.
\]
Then $C_S^{\mathsf T}C_S=M_S$.  Since
$K_+^{1/2}K_-^{1/2}=0$, the matrix $C_SC_S^{\mathsf T}$ is block
diagonal with blocks $K_+[S]$ and $K_-[S]$.  The nonzero eigenvalues of
$C_S^{\mathsf T}C_S$ and $C_SC_S^{\mathsf T}$ agree.  Because
$K_+[S],K_-[S]\preceq|K|[S]$, \cref{lem:partial-product-envelope} gives
\begin{equation}
  \|M_S\|\le\beta\sqrt{\frac rn},
  \qquad
  \operatorname{tr}M_S
  =\operatorname{tr}|K|[S]
  \le\frac{\beta r}{\sqrt n}.
  \label{eq:witness-matrix-bounds}
\end{equation}

Fix one witness set $S$, and put
$\lambda=d/(4\beta\sqrt{r/n})$.
By Markov's inequality and \cref{lem:partial-product-envelope}, the part of
$\mathcal J_T$ on which $x^{\mathsf T}M_Sx\ge r\delta^2$ is at most
\begin{equation}
  e^{-\lambda r\delta^2}
  \det(I-2B_T-2\lambda M_S)^{-1/2}.
  \label{eq:witness-tilt}
\end{equation}
Indeed, \eqref{eq:witness-matrix-bounds} gives
$2\lambda\|M_S\|\le d/2$, and $I-2B_T\succeq dI$.  Hence
\[
  I-2B_T-2\lambda M_S\succeq\frac d2I.
\]
The determinant estimate from \cref{lem:partial-product-envelope} applies to
$2B_T+2\lambda M_S$.  Its trace is bounded by
\[
  2a_{\max}\beta\sqrt n+2\lambda\frac{\beta r}{\sqrt n}
  =2a_{\max}\beta\sqrt n+\frac d2\sqrt r
  \le C\sqrt n.
\]
Thus \eqref{eq:witness-tilt} is at most
$\exp\{-d\delta^2\sqrt{rn}/(4\beta)+C\sqrt n\}$.
There are at most
$\binom nr\le\exp\{nh_2(r/n)\}$ witness sets.  A union bound proves
\eqref{eq:many-large-coordinate-bound}.
\end{proof}

The small-subset bound cannot be summed all the way to size $n$.  The
next lemma instead recombines all subsets beyond one linear cutoff before
taking an absolute value.

\begin{lemma}[Recombined large-support tail]
\label{lem:linear-cut-tail}
Fix the parameters in
\cref{def:admissible-factor,def:admissible-kernel}.
Let $g$ be admissible with the fixed scalar parameters, and let $K$ be a
$(\beta,d)$-admissible matrix of dimension $n$.  For every
$c_{\max}>0$ there are constants $0<c_*\le c_{\max}$ and
$a_{\mathrm{tail}}>0$, and an integer $n_{\mathrm{tail}}$, depending only
on the fixed parameters and $c_{\max}$, such that
\begin{equation}
  \left|\sum_{|S|\ge\lceil c_*n\rceil}W_K(S)\right|
  \le e^{-a_{\mathrm{tail}}n}
  \qquad(n\ge n_{\mathrm{tail}}).
  \label{eq:linear-cut-signed-tail}
\end{equation}
\end{lemma}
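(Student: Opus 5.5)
Put $R_*=\lceil c_*n\rceil$, with $c_*$ to be chosen.

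\emph{Step 1: recombine before taking absolute values.} By \cref{lem:gaussian-marginalization}, each $W_K(S)$ equals $\E_K\prod_{i\in S}f(\phi_i)$ under the full Gaussian. The function $\prod_{i\in S}f(z_i)$ is entire, and \cref{rem:factor-consequences} gives it an envelope of the required form with $A=3^{|S|}$. Summing over $|S|\ge R_*$ inside the expectation, the signed tail is $\E_x\Sigma(x)$, where
\[
\Sigma=\sum_{|S|\ge R_*}\prod_{i\in S}f(\phi_i)=\prod_{i=1}^n g(\phi_i)-\sum_{|S|<R_*}\prod_{i\in S}f(\phi_i).
\]
It then suffices to bound $\E|\Sigma|$. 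Fix a small constant $\delta\le\delta_f$ and a second fraction $\rho$ with $\rho\ll c_*$, both chosen below. Split the probability space into $\mathcal A_{\rho n,\delta}$ and its complement.

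\emph{Step 2: the event $\mathcal A_{\rho n,\delta}$.} On this event, bound $|\Sigma|\le\sum_{S}\prod_{i\in S}|f(\phi_i)|=\prod_i(1+|f(\phi_i)|)$. Since $1+|f|\le 3|g|$ is false in general, use instead $1+|f(z)|\le 3e^{a_{\mathrm R}(\operatorname{Re}z)^2+a_{\mathrm I}(\operatorname{Im}z)^2}$. This gives $|\Sigma|\le 3^ne^{x^{\mathsf T}K_{\mathrm{env}}x}$.

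The factor $3^n$ is fatal against the bound $\exp\{-c\sqrt{\rho}\,n\}$ of \cref{lem:many-large-coordinates}. So split more carefully. For each $S$, write $\prod_{i\in S}|f|=\prod_{i\in S,\ |\phi_i|\le\delta}|f|\cdot\prod_{i\in S,\ |\phi_i|>\delta}|f|$. On small coordinates $|f|\le C_f\delta^{\ell-2}|\phi_i|^2$, so summing over subsets of the small coordinates gives the factor $\prod_{\text{small}}(1+C_f\delta^{\ell-2}|\phi_i|^2)\le e^{C_f\delta^{\ell-2}\|\phi\|^2}$. Since $\|\phi\|^2=x^{\mathsf T}|K|x$ has trace $O(\sqrt n)$, this costs only $e^{O(\sqrt n)}$ after Gaussian integration, provided $\delta$ is small enough to preserve the spectral margin, say $I-2K_{\mathrm{env}}-2C_f\delta^{\ell-2}|K|\succeq (d/2)I$.

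On the set $T$ of large coordinates, bound $\sum_{U\subseteq T}\prod_{i\in U}|f|\le 3^{|T|}\prod_{i\in T}|g|$-type envelopes. Here I expect to need $|T|$ controlled. Decompose according to the exact number $r=|T|\ge\rho n$, and apply the witness-tilt argument of \cref{lem:many-large-coordinates} directly to the enlarged quadratic form. The count of sets $T$ gives $nh_2(r/n)$, the factor $3^r$ is linear in $r$, and the gain is $(d\delta^2/8\beta)\sqrt{rn}$. Since $\sqrt{rn}\ge r/\sqrt{r/n}$, for $r/n$ below a constant $c_1$ the gain beats $r\log 3+nh_2(r/n)+C\sqrt n$ by a margin of at least $c\sqrt{\rho}\,n$.

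For $r/n>c_1$ the same tilt needs a separate treatment. This is the main obstacle. I would try the alternative witness of only $\lceil c_1 n\rceil$ of the large coordinates, while still bounding the $3^{|T|}$ factor. If that fails, I would fold $3^{|T|}$ into the envelope by using $|f|\le |g|+1$ and $|g|+1\le 2\max(|g|,1)$ with a Gaussian tilt absorbing $2^n$ through the $e^{-c n}$ probability of many large coordinates at linear scale. Taking $\rho$ small then makes this piece at most $e^{-a n}$.

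\emph{Step 3: the complement, with fewer than $\rho n$ large coordinates.} Every $S$ with $|S|\ge R_*$ contains at least $|S|-\rho n\ge|S|/2$ small coordinates. Bound $|\Sigma|$ by
\[
\sum_{k\ge R_*/2} e_k\bigl(C_f\delta^{\ell-2}|\phi_i|^2:\text{small }i\bigr)\cdot\prod_{\text{large}}(1+|f|).
\]
The elementary-symmetric bound $e_k(y)\le(\sum y)^k/k!$ then gives the factor $(e\,C_f\delta^{\ell-2}\|\phi\|^2/k)^k$.

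Integrating against the Gaussian, with the large coordinates handled by the $e^{x^{\mathsf T}B_Tx}$ envelope of \cref{lem:partial-product-envelope}, the moments satisfy $\E\|\phi\|^{2k}$ with $\|\phi\|^2$ of mean $O(\sqrt n)$ and fluctuations of order $k$. Since $k\ge c_*n/2\gg\sqrt n$, this yields roughly $(C\delta^{\ell-2})^k$ per layer, times $\binom{n}{\rho n}3^{\rho n}$ for the large set. Choose $\delta$ so that $C\delta^{\ell-2}\le e^{-10}$, then $\rho$ so small that $h_2(\rho)+\rho\log3\le c_*$. This gives $e^{-a n}$.

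Finally, combine Steps 2 and 3 and choose the constants in the order $\delta$, then $c_*\le c_{\max}$, then $\rho$, with $n_{\mathrm{tail}}$ absorbing the $\sqrt n$ terms.
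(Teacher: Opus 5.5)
There is a genuine gap in Step~2, in exactly the regime you flag as the main obstacle, and neither fallback closes it. On $\mathcal A_{\rho n,\delta}$ you bound $|\Sigma|$ by $\sum_{|S|\ge R_*}\prod_{i\in S}|f(\phi_i)|$, so the recombination written in Step~1 is never used. What remains on the large set $T$ is $\prod_{i\in T}(1+|f(\phi_i)|)$, and nothing pays for it when $|T|$ is a fixed fraction of $n$. Witnessing only $\lceil c_1n\rceil$ coordinates does not help, because the factor still runs over all of $T$. The tilt gains only $\frac{d\delta^2}{4\beta}\sqrt{rn}\le\frac{d\delta^2}{4\beta}n$, and with $\delta$ forced small by Step~3 this cannot absorb $2^{|T|}$ uniformly in the fixed parameters, for instance when $\beta$ is large.

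This is not merely a lossy estimate. Take $g=g_{\mathrm{pm}}$ and $K=-\frac\beta n\1\1^{\mathsf T}$, which is $(\beta,1)$-admissible since $K_+=0$. Then $\phi_i=i\sqrt{\beta/n}\,X$ for a single standard Gaussian $X$. Also $|g_{\mathrm{pm}}(iy)|=\sqrt{1+y^2}\,e^{-y^2/2}\le0.31$ for $y\ge2$, so $|f(iy)|\ge0.69$ there. For $\beta\ge8$, the window $y\in[2,2.1]$ already gives $\E[\ind{\mathcal A_{n,\delta}}\prod_i(1+|f(\phi_i)|)]\ge e^{cn}$. For $c_*\le1/4$, most of this comes from $|S|\ge R_*$, whereas the lemma says the signed tail is $e^{-an}$. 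So on the many-large event you cannot take absolute values term by term in the large-support expansion.

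The missing idea is to take absolute values on that event only after recombination. Bound $|\Sigma|\le\prod_i|g(\phi_i)|+\sum_{|S|<R_*}\prod_{i\in S}|f(\phi_i)|$, and use $|f|\le1+|g|$ to dominate every term by partial products $\prod_{i\in T}|g(\phi_i)|$. Each partial product has the envelope $e^{x^{\mathsf T}B_Tx}$ with no constant factor. There are at most $1+\sum_{s<R_*}\binom ns2^s=e^{n(h_2(c_*)+c_*\log2)+o(n)}$ of them, and \cref{lem:many-large-coordinates} bounds each one uniformly in $T$.

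For the witness rate $\frac{d\delta^2}{4\beta}\sqrt\rho$ to beat this entropy, the threshold cannot be chosen arbitrarily small after $c_*$. The paper takes $\rho n=\lceil R_*/2\rceil$ and compares $\frac{d\delta^2}{4\beta}\sqrt{c_*/2}$ with $h_2(c_*/2)+h_2(c_*)+c_*\log2=O(c_*\log(1/c_*))$. The gain wins for small $c_*$, since $c_*\log(1/c_*)=o(\sqrt{c_*})$.

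Your Step~3 is essentially the paper's complement estimate, and it survives at that threshold once you drop the union bound $\binom n{\rho n}$. That bound is unnecessary because the large set $H$ is determined by $\phi$, so the only cost is $3^{|H|}$. This is beaten by $\chi^{R_*-|H|}$, where $\chi=C_f\delta^{\ell-2}/t_0$ comes from $Q^k/k!\le t_0^{-k}e^{t_0Q}$ with $Q=x^{\mathsf T}|K|x$. Here $t_0$ is a small constant such as $d/(8(\beta+1))$, which preserves the spectral margin, and $\delta$ is chosen so that $3\chi<1$.
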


\begin{proof}
If $\beta=0$, then $K=0$ and every nonempty contribution vanishes.  We
may therefore assume that $\beta>0$.

Put $t_0=d/(8(\beta+1))$.  Choose $0<\delta\le\delta_f$ so small that
$\chi:=C_f\delta^{\ell-2}/t_0$ satisfies $3\chi<1$, and put
$a_{\mathrm{ld}}=d\delta^2/(4\beta)$.
We shall choose $0<c_*\le\min\{c_{\max},1/2\}$ after obtaining the two
estimates below.  For a fixed positive $c_*$, define
\[
  s_*=\lceil c_*n\rceil,
  \qquad
  r=\left\lceil\frac{s_*}{2}\right\rceil,
  \qquad
  \mathcal T_{s_*}(\phi)
  =\sum_{|S|\ge s_*}\prod_{i\in S}f(\phi_i).
\]
We take $n$ large enough that $s_*\ge2$.

First suppose that fewer than $r$ coordinates are larger than $\delta$ in
modulus.  On this event, put $H=\{i:|\phi_i|>\delta\}$ and
$A_{H^c}=\sum_{i\notin H}|f(\phi_i)|$.
Then $|H|<r$, so every subset of size at least $s_*$ uses at least
$k=s_*-r$ coordinates outside $H$.  If nonnegative numbers $a_i$ are indexed
outside $H$, their elementary symmetric polynomials satisfy
\[
  e_j((a_i))\le\frac{(\sum_i a_i)^j}{j!},
  \qquad
  \sum_{j\ge k}e_j((a_i))
  \le e^{\sum_i a_i}\frac{(\sum_i a_i)^k}{k!}.
\]
The first inequality follows by expanding $(\sum_i a_i)^j$, and the
second uses $k!/(k+j)!\le1/j!$ for $j\ge0$.

Apply this estimate with $a_i=|f(\phi_i)|$.  Summing first over the part
of a selected set contained in $H$ gives
\begin{equation}
  |\mathcal T_{s_*}(\phi)|
  \le
  \prod_{i\in H}(1+|f(\phi_i)|)
  e^{A_{H^c}}\frac{A_{H^c}^k}{k!}.
  \label{eq:few-large-pointwise}
\end{equation}
By \cref{rem:factor-consequences},
$A_{H^c}\le C_f\delta^{\ell-2}x^{\mathsf T}|K|x$.  The same remark and
$|H|<r$ give
$\prod_{i\in H}(1+|f(\phi_i)|)\le
3^r e^{x^{\mathsf T}K_{\mathrm{env}}x}$.  Finally, if
$Q=x^{\mathsf T}|K|x$, then $Q^k/k!\le t_0^{-k}e^{t_0Q}$.
Substitution in \eqref{eq:few-large-pointwise} yields
\begin{equation}
  |\mathcal T_{s_*}(\phi)|
  \le
  3^r\chi^{s_*-r}
  \exp\!\left\{
    x^{\mathsf T}
    \bigl(K_{\mathrm{env}}+(C_f\delta^{\ell-2}+t_0)|K|\bigr)x
  \right\}.
  \label{eq:few-large-envelope}
\end{equation}

Since $C_f\delta^{\ell-2}<t_0/3$ and $\|K\|\le\beta$,
\[
  2(C_f\delta^{\ell-2}+t_0)\|K\|
  \le\frac83t_0\beta
  =\frac{d\beta}{3(\beta+1)}
  \le\frac d3.
\]
Together with $I-2K_{\mathrm{env}}\succeq dI$, this gives
\[
  I-2K_{\mathrm{env}}-2(C_f\delta^{\ell-2}+t_0)|K|
  \succeq\frac d2I.
\]
The trace of the positive-semidefinite matrix in the exponent of
\eqref{eq:few-large-envelope} is at most
$(a_{\max}+C_f\delta^{\ell-2}+t_0)\operatorname{tr}|K|\le C\sqrt n$.
Gaussian integration and the determinant estimate used in
\cref{lem:partial-product-envelope} therefore give
\begin{equation}
  \left|
  \E_x\left[
    \mathcal T_{s_*}(\phi)
    \ind{\mathcal A_{r,\delta}^{\mathsf c}}
  \right]
  \right|
  \le e^{C\sqrt n}3^r\chi^{s_*-r}.
  \label{eq:few-large-tail-bound}
\end{equation}

We next consider $\mathcal A_{r,\delta}$.  The exact pointwise identity
\begin{equation}
  \mathcal T_{s_*}(\phi)
  =\prod_{i=1}^ng(\phi_i)
   -\sum_{|S|<s_*}\prod_{i\in S}f(\phi_i)
  \label{eq:large-support-recombination}
\end{equation}
preserves the cancellation among all subsets beyond the cutoff.  For a
fixed set $S$,
\[
  \prod_{i\in S}|f(\phi_i)|
  \le\prod_{i\in S}(1+|g(\phi_i)|)
  =\sum_{T\subseteq S}\prod_{i\in T}|g(\phi_i)|.
\]
Apply \cref{lem:many-large-coordinates} to each partial product on the right, and also to
the full product in \eqref{eq:large-support-recombination}.  Uniformity in
$T$ gives
\begin{equation}
  \left|
  \E_x\left[
    \mathcal T_{s_*}(\phi)
    \ind{\mathcal A_{r,\delta}}
  \right]
  \right|
  \le
  e^{-a_{\mathrm{ld}}\sqrt{rn}+nh_2(r/n)+C\sqrt n}
  \left(1+\sum_{s<s_*}\binom ns2^s\right).
  \label{eq:many-large-tail-bound}
\end{equation}

For fixed $0<c_*\le1/2$, the standard entropy bound gives
\[
  \sum_{s<s_*}\binom ns2^s
  \le
  \exp\{n(h_2(c_*)+c_*\log2)+o(n)\}.
\]
All constants and the $o(n)$ term in the two preceding estimates are
uniform over matrices satisfying the hypotheses.  Since
$r/n\to c_*/2$ and $(s_*-r)/n\to c_*/2$, put
\[
  \Psi(c)=-a_{\mathrm{ld}}\sqrt{c/2}
          +h_2(c/2)+h_2(c)+c\log2.
\]
Combining \cref{eq:few-large-tail-bound,eq:many-large-tail-bound} gives,
uniformly over every admissible matrix sequence,
\[
  \limsup_{n\to\infty}\frac1n
  \log|\E_x\mathcal T_{s_*}(\phi)|
  \le\max\left\{\frac{c_*}{2}\log(3\chi),\ \Psi(c_*)\right\}.
\]
The first quantity is negative because $3\chi<1$.  The positive terms in
the second quantity are $O(c_*\log(1/c_*))$, which is
$o(\sqrt{c_*})$ as $c_*\downarrow0$.  We may therefore choose
$0<c_*\le\min\{c_{\max},1/2\}$ so that both quantities are strictly
negative.  Taking half of the smaller resulting margin and increasing
$n_{\mathrm{tail}}$ proves
\[
  |\E_x\mathcal T_{s_*}(\phi)|\le e^{-a_{\mathrm{tail}}n}.
\]

Finally, expand $\mathcal T_{s_*}$ into its finite sum.  Applying
  \cref{lem:gaussian-marginalization} to each term shows that
\[
  \E_x\mathcal T_{s_*}(\phi)
  =\sum_{|S|\ge s_*}W_K(S).
\]
This is \eqref{eq:linear-cut-signed-tail}.
\end{proof}

\subsection{From tail bounds to an efficient cutoff}

\begin{proof}[Proof of \cref{thm:gaussian-product-truncation}]
If $\beta=0$, then the entrywise hypothesis forces $K=0$.  Consequently,
$\mathcal Z_K=1$ and every nonempty $W_K(S)$ vanishes, so all conclusions
are immediate.  Assume from now on that $\beta>0$.

By \cref{lem:partial-product-envelope} with $T=[n]$, the expectation defining
$\mathcal Z_K$ converges absolutely.  Expanding $g=1+f$ gives a finite
pointwise sum over coordinate subsets.  For each set $S$,
\cref{rem:factor-consequences} gives the envelope required by
\cref{lem:gaussian-marginalization} for
$\prod_{i\in S}f(\phi_i)$.  Marginalizing every term proves
\eqref{eq:gaussian-subset-expansion}, as well as absolute convergence of
every $W_K(S)$.

Put $C_2=eC_{\mathrm{loc}}^{\ell/2}$.
For $2\le s\le c_{\mathrm{loc}}n$, \cref{lem:small-subset-general} and
$\binom ns\le(en/s)^s$ give
\begin{equation}
  \sum_{|S|=s}|W_K(S)|
  \le \Lambda_s,
  \qquad
  \Lambda_s=\left(C_2\left(\frac sn\right)^\alpha\right)^s.
  \label{eq:small-support-layer-bound}
\end{equation}
Moreover,
\[
  \frac{\Lambda_{s+1}}{\Lambda_s}
  =C_2\left(\frac{s+1}{n}\right)^\alpha
    \left(1+\frac1s\right)^{\alpha s}
  \le C_2e^\alpha\left(\frac{s+1}{n}\right)^\alpha.
\]
Choose $c_{\max}>0$ so small that
\begin{equation}
  c_{\max}\le c_{\mathrm{loc}},
  \qquad
  \beta c_{\max}\le\frac18,
  \qquad
  C_2e^\alpha(2c_{\max})^\alpha\le\frac12.
  \label{eq:layer-ratio-choice}
\end{equation}
After increasing $n_0$, we may assume $c_{\max}n\ge1$.  If
$s<c_{\max}n$, then $(s+1)/n\le2c_{\max}$, so the layer bounds decrease
by a factor of at least two throughout this range.

Apply \cref{lem:linear-cut-tail} with this $c_{\max}$, and set
$c_{\mathrm{tr}}=c_*$.  Increase $n_0$ to include the threshold in that
lemma.  Let $2\le R\le c_{\mathrm{tr}}n$, and put
$s_*=\lceil c_{\mathrm{tr}}n\rceil$.  The omitted subsets split into the
layers $R\le s<s_*$ and the signed tail beginning at $s_*$.  By
\cref{eq:small-support-layer-bound,eq:layer-ratio-choice}, the absolute values
of the intermediate layers form a geometric series bounded by
$2\Lambda_R$.
The linear-cut lemma controls the remaining tail.  Hence
\[
  \left|\mathcal Z_K-\sum_{|S|<R}W_K(S)\right|
  \le
  2\left(C_2\left(\frac Rn\right)^\alpha\right)^R
  +e^{-a_{\mathrm{tail}}n}.
\]
This proves \eqref{eq:diffuse-gaussian-truncation-bound} with
$C_1=\sqrt2\,C_2$.  We choose the separate constant $C_0$ for the near-one
estimate below.

For \eqref{eq:diffuse-gaussian-near-one}, the singleton bound gives
$\sum_{i=1}^n|W_K(\{i\})|\le B(16a_{\max}\beta)^2/n$.
The layers beginning at size two contribute at most
$2\Lambda_2=O(n^{-2\alpha})=O(n^{-1})$, and the signed linear tail is
exponentially small.  Enlarging $C_0$ proves
\eqref{eq:diffuse-gaussian-near-one}.  Finally,
$c_{\mathrm{tr}}\le c_{\max}\le1/(8\beta)$, and
$\|K[S]\|_{\mathrm{row}}\le\beta|S|/n\le1/8$ when
$|S|\le c_{\mathrm{tr}}n$.
\end{proof}

The truncation bound becomes algorithmic because its cutoff can be chosen
without paying $n^{O(\log(1/\eps))}$ time.  We record the elementary
calculus fact once.

\begin{lemma}[Least truncation cutoff]
\label{lem:generic-cutoff}
Fix $\alpha\ge1/2$, $C_2>0$, and $c_{\mathrm{tr}}>0$.  There is a constant
$\eta>0$, depending only on $(\alpha,C_2,c_{\mathrm{tr}})$, with the following
property.  Let $b\ge1$ and suppose that $b<\eta n$.  The least integer
$R\ge2$ satisfying
\begin{equation}
  \left(C_2\left(\frac Rn\right)^\alpha\right)^R\le2^{-b}
  \label{eq:least-cutoff-condition}
\end{equation}
exists, satisfies $R\le c_{\mathrm{tr}}n$, and obeys
\begin{equation}
  R\log\frac{en}{R}=O(b+\log n).
  \label{eq:least-cutoff-cost}
\end{equation}
The hidden constant depends only on $(\alpha,C_2,c_{\mathrm{tr}})$.
\end{lemma}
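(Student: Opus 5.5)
Set $\Lambda(R)=(C_2(R/n)^\alpha)^R$ and study $F(R)=\log(1/\Lambda(R))=R\bigl(\alpha\log(n/R)-\log C_2\bigr)$ as a function of real $R\in[2,n]$. The plan has three steps: show the condition holds at some explicit $R_1\le c_{\mathrm{tr}}n$, so the least $R$ exists; then bound the least $R$ by the value just below it, where the condition fails.

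\emph{Step 1: existence and the bound $R\le c_{\mathrm{tr}}n$.} Choose a constant $c_1\in(0,c_{\mathrm{tr}}]$ so small that $\alpha\log(1/c_1)-\log C_2\ge 1$. For every $R\le c_1 n$ we then have $\alpha\log(n/R)-\log C_2\ge1$, hence $F(R)\ge R$. Take $R_1=\max\{2,\lceil b/\log 2\rceil\}$. Then $F(R_1)\ge R_1\ge b\log2$, so \eqref{eq:least-cutoff-condition} holds at $R_1$, provided $R_1\le c_1n$. This holds once $\eta\le c_1\log 2/2$ and $n$ exceeds a constant. For bounded $n$ the statement is vacuous or trivial after shrinking $\eta$ (with $b\ge1$ and $b<\eta n$, small $n$ admits no $b$). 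Hence the least admissible $R$ exists and satisfies $R\le R_1\le c_1n\le c_{\mathrm{tr}}n$.

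\emph{Step 2: the cost bound.} If $R=2$, then $R\log(en/R)=O(\log n)$ and we are done. Otherwise $R-1\ge2$ fails the condition, so $F(R-1)<b\log2$. Since $R-1\le c_1n$, the factor $\alpha\log(n/(R-1))-\log C_2$ is at least $1$. It is also at least $\tfrac{\alpha}{2}\log(n/(R-1))$ whenever $\log(n/(R-1))\ge 2|\log C_2|/\alpha$, and shrinking $c_1$ guarantees this. Therefore
\[
(R-1)\log\frac{n}{R-1}\le \frac{2}{\alpha}\,b\log2 .
\]
Since $R\le 2(R-1)$ and $\log(en/R)\le 1+\log(n/(R-1))\le 2\log(n/(R-1))$ (using $n/(R-1)\ge 1/c_1\ge e$), we get $R\log(en/R)\le 4(R-1)\log(n/(R-1))=O(b)$, which is \eqref{eq:least-cutoff-cost}.

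\emph{Main obstacle.} The only delicate point is that the bracket $\alpha\log(n/R)-\log C_2$ may be negative when $C_2>1$ and $R$ is comparable to $n$. Restricting every $R$ under consideration to $R\le c_1n$, through the choice of $\eta$ in Step 1, keeps this bracket uniformly positive and comparable to $\log(en/R)$, so the remaining steps are elementary monotonicity estimates.
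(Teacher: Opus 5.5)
Your proof is correct and follows essentially the same route as the paper's: restrict to $R\le c_1n$, where $\log(1/\Lambda(R))$ is comparable to $R\log(n/R)$, exhibit an explicit admissible witness of size $O(b)$ (your $R_1$, the paper's $R_*$), and then use the failure of the condition at $R-1$. The only difference is the last step: you compare $R\log(en/R)$ directly with $4(R-1)\log(n/(R-1))$, whereas the paper uses the mean-value estimate $F(R)-F(R-1)=O(\log n)$; your version even gives $O(b)$ whenever $R>2$ (just impose $c_1\le 1/e$ and your other smallness conditions on $c_1$ before fixing $\eta$).
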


Writing $D=C_2^{1/\alpha}$, the cutoff condition is governed by
$F(x)=x\log(n/(Dx))$.  Monotonicity and minimality give
$F(R)=O(b+\log n)$ and hence \eqref{eq:least-cutoff-cost}.  The elementary
details are given in \cref{subsec:cutoff-proof}.

\section{Perfect matchings in dense graphs}
\label{sec:matching}

\subsection{The matrix theorem and its consequences}
\label{sec:matching-results}

This section states the matrix-level matching theorem used by the algorithm,
two scaling criteria that reduce weighted matching counts to it, and
representative consequences.  Its proof combines the common truncation
principle of \cref{sec:gaussian-principle} with the matching-specific
normalization and integral representation developed below.  The longer
deductions from the scaling criteria are deferred to the appendices.

\subsubsection{The matrix theorem}

For a positive integer $k$, let $\mathbf J_k$, $I_k$, and $\1_k$ denote the
all-ones matrix, the identity matrix, and the all-ones column vector of
dimension $k$, respectively.  We omit subscripts when the dimension is
clear.  Throughout, $\|\cdot\|$ and $\|\cdot\|_{\mathrm F}$ denote the
spectral operator and Frobenius norms.

For even $n$, define the stochastic complete-graph reference matrix
\[
  P_0=\frac{\mathbf J_n-I_n}{n-1}.
\]
In the balanced bipartite case, write $n=2m$ and define
\[
  P_{\mathrm b}
  =
  \begin{pmatrix}
  0&\mathbf J_m/m\\
  \mathbf J_m/m&0
  \end{pmatrix}.
\]
Both reference matrices are symmetric and stochastic with zero diagonal.
The following theorem says that a diffuse centered perturbation of either
matrix admits a deterministic relative approximation.

\begin{theorem}[Hafnian approximation for centered perturbations]
\label{thm:matrix}
Fix $0\le\beta<\infty$ and $\kappa>0$.  For either reference matrix $P$
above, let $n$ be even, let $0<\eps<1$, and let $E$ be a real symmetric
$n$ by $n$ matrix.  Suppose that $P+E$ is entrywise nonnegative and
\begin{equation}
  E\1=0,\qquad E_{ii}=0,\qquad
  \max_{i,j}|E_{ij}|\le\frac\beta n,
  \qquad \|E\|\le1-\kappa.
  \label{eq:matrix-hypotheses}
\end{equation}
In the complete-bipartite case we additionally require $n=2m$ and
\[
  E=\begin{pmatrix}0&D\\D^{\mathsf T}&0\end{pmatrix}
\]
for an $m$ by $m$ matrix $D$.

There is a deterministic real-arithmetic algorithm that
returns a number $\widehat H\ge0$ satisfying
\[
  e^{-\eps}\haf(P+E)\le\widehat H
  \le e^\eps\haf(P+E).
\]
It uses
\[
  n^{O_{\beta,\kappa}(1)}
  \eps^{-O_{\beta,\kappa}(1)}
\]
arithmetic and elementary-function operations.
The implicit constants depend only on $\beta$ and $\kappa$.
\end{theorem}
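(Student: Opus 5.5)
Our plan follows the overview.  Write $Q_E=\haf(P+E)/\haf(P)$.  The reference hafnian $\haf(P)$ is explicit in both cases: $(n-1)!!(n-1)^{-n/2}$ for $P_0$, and $m!/m^m$ for $P_{\mathrm b}$.  So it suffices to approximate $Q_E$ within $e^{\pm\eps}$.

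\emph{Step 1: Gamma representation.}  Expand $\haf(P+E)$ by the matching $M$ of edges taken from $E$.  For $P_0$, the remaining $n-2|M|$ vertices are matched by $P_0$, giving the factor $(n-2|M|-1)!!(n-1)^{-(n/2-|M|)}$.  We identify the ratio of this factor to $\haf(P_0)$ with $\E U^{-|M|}$, where $U$ is Gamma with shape $(n+1)/2$ and rate $(n-1)/2$.  This identity is checked via $\E U^{-k}=\lambda^k\Gamma(a-k)/\Gamma(a)$.  In the bipartite case, the factor $(m-|M|)!/m^{m-|M|}$ relative to $m!/m^m$ is handled the same way, with shape $m+1$ and rate $m$.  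This gives $Q_E=\E Z_E(U^{-1})$.

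\emph{Step 2: Gaussian product and completion.}  For $t$ in a window $[1-\tau,1+\tau]$, with $\tau$ small relative to $\kappa$, we have $\|tE\|\le 1-\kappa/2$.  Wick's formula with $E\1=0$ and $E_{ii}=0$ gives the exact identity
\[
  Z_E(t)=\det(I+tE)^{-1/2}\mathcal Z_{K_t},
  \qquad K_t=tE(I+tE)^{-1},
\]
with factor $g_{\mathrm{pm}}$.  The linear term is killed because $E\1=0$ implies $K_t\1=0$; we verify this identity by matching Taylor coefficients and then continuing analytically.  To apply \cref{thm:gaussian-product-truncation} we must check two things:
\begin{itemize}
\item $\max|(K_t)_{ij}|\le\beta'/n$.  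Expand $K_t=\sum_{k\ge1}(-1)^{k-1}t^kE^k$.  The $k=1$ term is entrywise $O(1/n)$, and for $k\ge2$ we have $|(E^k)_{ij}|\le\|E\|^{k-2}\cdot\|e_i^{\mathsf T}E\|\,\|Ee_j\|\le(1-\kappa)^{k-2}\beta^2/n$.
\item $I-2(K_t)_+\succeq d I$.  The eigenvalues of $K_t$ are $t\mu/(1+t\mu)$ for eigenvalues $\mu\ge-(1-\kappa)$ of $E$, and $2x/(1+x)\le 1-d$ when $x\le 1-\kappa/2$.  This is why two-sided control $\|E\|\le1-\kappa$ is used here.
\end{itemize}
With $\ell=3$, we get $\alpha=1/2$.

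\emph{Step 3: Averaging over $U$ and truncation.}  Outside the window, we bound $|Z_E(u^{-1})|$ crudely and use Gamma concentration, $\Pr(|U-1|>\tau)\le e^{-cn}$.  The crude bound is $|Z_E(s)|\le\sum_M s^{|M|}(\beta/n)^{|M|}\le(1+s\beta)^{n}$ (which is huge), so we instead need $\E[U^{-k}]$ moments.  The concern is $U$ near $0$; we handle it by splitting $Z_E$ at the same linear cutoff and dominating with the all-positive series $\E\,Z_{|E|}(U^{-1})\le\E(1+\dots)$.  I expect this tail control to be the main nuisance, and I would try first to compare with $\haf(P+|E|)/\haf(P)\le e^{O(\beta)}$-type bounds restricted to the bad $U$ event.

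Inside the window, \eqref{eq:diffuse-gaussian-near-one} together with $\det(I+tE)^{-1/2}\ge e^{-O(1)}$ shows that $Q_E$ is bounded below by a constant.  Hence additive error $\eps e^{-O(1)}$ suffices.  Choose $R$ by \cref{lem:generic-cutoff} with $b=O(\log(1/\eps))$.  If $b\ge\eta n$, compute $\haf(P+E)$ exactly by dynamic programming over subsets in $2^{O(n)}=\eps^{-O(1)}$ time.

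\emph{Step 4: Computing the retained terms.}  Each retained $W_{K_t}(S)$ is computed by inclusion--exclusion over $T\subseteq S$.  The term for $T$ is
\[
  \E\prod_{T}g_{\mathrm{pm}}=\det(I-K_t[T])^{1/2}\cdot(\text{matching polynomial of }K_t[T]\text{-related matrix}),
\]
obtained by reversing the completion on $T$.  It is evaluated by monomer--dimer recurrence in $2^{|T|}\poly$ time.  The $t$-integral over the window uses Gauss quadrature.  All integrands are analytic in $t$ on a complex neighborhood of the window, uniformly in $n$, so polynomially many nodes achieve error $\eps$.  The total cost is $n^{O(1)}3^{R}\poly$ per node, with $R\log(en/R)=O(\log n+\log(1/\eps))$, giving $n^{O(1)}\eps^{-O(1)}$ overall.
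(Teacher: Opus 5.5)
Your outline follows the paper's architecture: inverse-Gamma identity, completion to $K_t$, the truncation theorem, subset evaluation, quadrature, and the exact branch. However, Step~3 has a genuine gap. You never bound $\E\bigl[|Z_E(U^{-1})|\,\ind{U^{-1}>1+\tau}\bigr]$. This bound is needed twice: to discard the Gamma tail, and to pass from the lower bound on $Z_E(t)$ inside the window to the lower bound on $Q_E$ that converts additive error into relative error. The side $U^{-1}<1-\tau$ is harmless, because completion still holds there and gives $|Z_E(t)|\le e^{O(\sqrt n)}$, which the Gamma large deviations beat. For $t>1+\tau$, however, $t\|E\|$ may exceed $1$. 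Then $I+tE$ can be indefinite, and neither completion nor the truncation theorem is available.

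The fallbacks you suggest fail. Write $E^{\mathrm{abs}}=(|E_{ij}|)$. Passing to $E^{\mathrm{abs}}$ destroys the centering $E\1=0$, and $\haf(P+E^{\mathrm{abs}})/\haf(P)$ is not $e^{O(\beta)}$. By your Step~1 computation it equals $\sum_k m_k(E^{\mathrm{abs}})\,\E U^{-k}$. For $P=P_0$, if $|E_{ij}|\ge c/n$ for all $i\ne j$ (which the hypotheses permit), this is at least $\sum_k\binom{n/2}{k}(c')^k=(1+c')^{n/2}$, where $c'=c(n-1)/n$. Restricting to the bad event does not rescue it. The function $u^{-k}$ times the Gamma density is again a Gamma density, with mode about $1-2k/n$. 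For the dominant terms $k\approx\frac{c'}{1+c'}\cdot\frac n2$ this mode is about $1/(1+c')$, which lies below the window once $c'>\tau$.

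The missing idea is a bound on $Z_E(t)$, valid for all $t\ge0$, that keeps the cancellation inside the Gaussian representation. With $\xi=A_Ex$, Wick's rule gives $Z_E(t)=\E_x\prod_i(1+\sqrt t\,\xi_i)$. Since $A_E\1=0$, one has $\sum_i\xi_i=0$, hence $\sum_i|1+\sqrt t\,\xi_i|^2=n+t\,x^{\mathsf T}|E|x$ with $|E|=E_++E_-$. AM--GM then gives
\[
  |Z_E(t)|\le\E_x\Bigl(1+\frac{t\,x^{\mathsf T}|E|x}{n}\Bigr)^{n/2}.
\]
Take $U=u\le r_0:=1/(1+\tau)$, and note that $r_0>1-\kappa\ge\|E\|$ for your choice of $\tau$. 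Rewrite the bound as $u^{-n/2}\E_x(u+x^{\mathsf T}|E|x/n)^{n/2}$. The factor $u^{-n/2}$ is absorbed by $u^{a_n-1}$ in the Gamma density, leaving an integrable power at $0$. Three further ingredients finish the estimate: the tangent-line bound $\log(u+y)\le\log r_0+(u+y-r_0)/r_0$; the estimate $\E_xe^{x^{\mathsf T}|E|x/(2r_0)}=\det(I-|E|/r_0)^{-1/2}\le e^{O(\sqrt n)}$, which follows from $\tr|E|\le\beta\sqrt n$; and Stirling's formula. Together they give the rate $\frac n2(r_0-1-\log r_0)>0$. This is \cref{prop:radial-localization}, and the lower bound on $Q_E$ in \eqref{eq:positive-middle} rests on it.

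There are also two smaller, fixable slips in Step~4. First, the prefactor is $\det(I-K_t[T])^{-1/2}$, not $\det(I-K_t[T])^{1/2}$. Second, since $K_t[T]\1\ne0$, the factor $e^{-\sum_{i\in T}\phi_i}$ does not cancel on a subset, so a Gaussian shift is needed. This produces monomer weights $1+(\mu_T)_i$ with $\mu_T=-\Sigma_T\1$ and $\Sigma_T=(I-K_t[T])^{-1}K_t[T]$, together with the extra factor $\exp(\frac12\1^{\mathsf T}\Sigma_T\1)$. Finally, the quadrature integrand is not bounded uniformly in $n$ on a fixed complex neighborhood, since the Gamma density has width $n^{-1/2}$. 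Polynomially many nodes still suffice, but your justification needs adjusting.
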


In the bipartite case, the block condition ensures that every matching
selected from $E$ is bipartite.  The coefficient contributed by the
reference matrix after this selection depends only on the matching size.
This is the additional
structural input needed for the inverse-Gamma identity.  In fact,
\cref{eq:positive-middle} proves that $\haf(P+E)>0$ for all sufficiently
large $n$ under these conditions.  For such $n$, entrywise nonnegativity
may be omitted: the same procedure returns $\widehat H$ satisfying
$|\widehat H-\haf(P+E)|\le(e^\eps-1)|\haf(P+E)|$.
We retain nonnegativity in the all-dimensions statement so that
multiplicative approximation has its usual meaning in the finitely many
smaller dimensions handled by exact evaluation.

\begin{remark}[Doubly stochastic inputs]
The argument of \cref{thm:matrix}, with the finite-precision analysis in
\cref{app:bit-complexity}, gives a deterministic FPTAS for
$\operatorname{per}(Y)$ when $Y$ is a rational doubly stochastic $m$ by
$m$ matrix satisfying $\max_{i,j}Y_{ij}\le C/m$ and
$\sigma_2(Y)\le1-\kappa$, where $\sigma_2(Y)$ is the second-largest
singular value of $Y$.  Its bit complexity is
$m^{O_{C,\kappa}(1)}\eps^{-O_{C,\kappa}(1)}\poly(L_Y+L_\eps)$.
Indeed, apply the theorem to the symmetric dilation of
$Y-\mathbf J_m/m$.
The entrywise hypothesis is independent of the spectral one: for fixed
$0<a<1$, the matrix $aI+(1-a)\mathbf J_m/m$ has second singular value $a$ but
diagonal entries of order one.  On the diffuse class above, the conclusion
strengthens McCullagh's prescribed $O(1/m)$ relative error to arbitrary
requested accuracy \cite{McCullagh2014}.
\end{remark}

\subsubsection{Entropy-scaling criteria}

For a graph $G$, let $A_G$ and $D_G$ be its adjacency and degree matrices,
and put $L_G=D_G-A_G$ and $Q_G=D_G+A_G$.  Write $\lambda_2(L_G)$ for the
second-smallest eigenvalue of $L_G$.  When $\mathcal D(G)$ contains a point
positive on every edge, let $X$ denote the matrix formed from the maximizer
in the definition of $h_A(G)$.  The normalization in
\cref{prop:entropy-normalization} shows that
$X_{uv}=a_{uv}r_ur_v$ on the support.

A nonnegative square matrix has \emph{total support} if its bipartite
support has a perfect matching and every positive entry belongs to one.
For such a matrix, let $Y^*_{ij}=b_{ij}r_ic_j$ denote its weighted Sinkhorn
scaling, with the gauge $\prod_i r_i=\prod_j c_j$.  We write
$\sigma_2(Y^*)$ for its second-largest singular value.

The symmetric criterion below assumes only that $\mathcal D(G)$ contains a
point positive on every edge.  This condition holds whenever every edge of
$G$ belongs to a perfect matching.
To apply the matrix theorem to an input $A$, it remains to know when entropy
scaling produces a perturbation satisfying its entrywise and spectral
hypotheses.  The next criterion expresses this in terms of the scaling
factors and the Laplacian and signless Laplacian of the support.
Both criteria are promise results: the algorithms are guaranteed on inputs
satisfying the displayed hypotheses and need not verify those hypotheses.

\begin{theorem}[Symmetric scaling criterion]
\label{thm:scaling-criterion}
Fix $0<\theta\le1$ and positive constants $c_{\mathrm{lo}},c_{\mathrm{hi}},q',q$.  Let $A$ be a
rational symmetric $n$ by $n$ matrix, where $n$ is even, with zero
diagonal and entries in $\{0\}\cup[\theta,1]$.  Suppose that its support
graph $G$ has the property that $\mathcal D(G)$ contains a point
positive on every edge.
Write the maximum-entropy scaling as
$X_{uv}=a_{uv}r_ur_v$ on the edges, and suppose that
\[
  \frac{c_{\mathrm{lo}}}{\sqrt n}\le r_v\le\frac{c_{\mathrm{hi}}}{\sqrt n}
  \quad(v\in V(G)),
  \qquad
  \lambda_2(L_G)\ge q'n,
  \qquad
  Q_G\succeq qnI.
\]
Then, given $0<\eps<1$, a deterministic algorithm returns
$\widehat H\ge0$ satisfying
\[
  e^{-\eps}\haf(A)\le\widehat H\le e^\eps\haf(A).
\]
If $G$ has a perfect matching, then $\widehat H>0$.
Its bit complexity is
\[
  n^{O_{\theta,c_{\mathrm{lo}},c_{\mathrm{hi}},q',q}(1)}
  \eps^{-O_{\theta,c_{\mathrm{lo}},c_{\mathrm{hi}},q',q}(1)}
  \poly(L_A+L_\eps).
\]
\end{theorem}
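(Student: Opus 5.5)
The plan is to reduce to \cref{thm:matrix} with $P=P_0$.  Compute the maximum-entropy scaling numerically, write $\haf(A)=\big(\prod_v r_v\big)\haf(X)$ (each perfect matching uses every vertex factor once), and set $E=X-P_0$.  Note that $\prod_v r_v=e^{h_A(G)}$ follows from the Lagrange conditions of \cref{prop:entropy-normalization}, since $w^*_{uv}=a_{uv}r_ur_v$ gives $\sum w^*\log(a/w^*)=-\sum_v\log r_v$.  It then suffices to verify the hypotheses \eqref{eq:matrix-hypotheses} with constants depending only on $(\theta,c_{\mathrm{lo}},c_{\mathrm{hi}},q',q)$.

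The algebraic conditions are immediate.  Since $X$ and $P_0$ are both stochastic with zero diagonal, $E\1=0$ and $E_{ii}=0$.  Moreover $P_0+E=X\ge0$ entrywise.  For the entrywise bound, $X_{uv}\le c_{\mathrm{hi}}^2/n$ and $(P_0)_{uv}\le 2/n$, so $\max|E_{uv}|\le(c_{\mathrm{hi}}^2+2)/n$, which gives $\beta=c_{\mathrm{hi}}^2+2$.

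The spectral bound $\|E\|\le1-\kappa$ is the main step.  Because $E\1=0$ and $P_0$ acts as $-1/(n-1)$ on $\1^\perp$, for a unit vector $x\perp\1$ we have $x^{\mathsf T}Ex=x^{\mathsf T}Xx+1/(n-1)$.  So we need two-sided control of $x^{\mathsf T}Xx$ on $\1^\perp$.

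The upper side uses the Laplacian identity
\[
  x^{\mathsf T}(I-X)x=\tfrac12\textstyle\sum_{uv}X_{uv}(x_u-x_v)^2
  \ge\frac{\theta c_{\mathrm{lo}}^2}{n}\,x^{\mathsf T}L_Gx
  \ge\theta c_{\mathrm{lo}}^2q',
\]
where the identity holds because $X$ is stochastic.  The lower side uses the signless analogue
\[
  x^{\mathsf T}(I+X)x=\tfrac12\textstyle\sum X_{uv}(x_u+x_v)^2\ge\frac{\theta c_{\mathrm{lo}}^2}{n}x^{\mathsf T}Q_Gx\ge\theta c_{\mathrm{lo}}^2q.
\]
This requires $\sum_v X_{uv}x_u^2$ terms to match the degree part, which the stochasticity of $X$ gives.

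Together these give $-1+\theta c_{\mathrm{lo}}^2q\le x^{\mathsf T}Xx\le1-\theta c_{\mathrm{lo}}^2q'$.  Adding the $O(1/n)$ shift, and noting $E\1=0$, gives $\|E\|\le1-\kappa$ with $\kappa=\tfrac12\theta c_{\mathrm{lo}}^2\min(q,q')$ for $n$ large.  Small $n$ is handled by exact evaluation.

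The remaining work is algorithmic and is where I expect the real difficulty: $X$ is only computable approximately.  I would compute scalings $\tilde r$ by a convex-optimization or Sinkhorn-type iteration to precision $\eps/n^{O(1)}$.  The plan is then to treat $\tilde X=a_{uv}\tilde r_u\tilde r_v$ as a multiplicative perturbation of $X$ entrywise within $e^{\pm\eps/(2n)}$, so $\haf$ changes by at most $e^{\pm\eps/4}$.  Next, correct $\tilde X$ to be exactly stochastic (or re-center, preserving $E\1=0$), so the margins survive with halved constants.  Finally, run \cref{thm:matrix} with accuracy $\eps/2$ and multiply by $\prod\tilde r_v$.

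The bit-complexity accounting I would defer to \cref{app:bit-complexity}, using that all matrices stay a fixed distance from singularity.  Positivity of $\widehat H$ when $G$ has a perfect matching follows from $\haf(A)>0$ and the multiplicative guarantee.
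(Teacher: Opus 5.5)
This is the paper's argument. You reduce to \cref{thm:matrix} with $P_0$ and use the same entrywise constant $c_{\mathrm{hi}}^2+2$. You compare $z^{\mathsf T}(I-X)z$ and $z^{\mathsf T}(I+X)z$ with $(\theta c_{\mathrm{lo}}^2/n)\,z^{\mathsf T}L_Gz$ and $(\theta c_{\mathrm{lo}}^2/n)\,z^{\mathsf T}Q_Gz$ on $\1^\perp$, take $\kappa$ to be half the smaller margin after absorbing the $1/(n-1)$ shift, and evaluate small $n$ exactly.

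You diverge from the paper only in the numerical handling of the scaling. The paper never re-imposes $E\1=0$ on the computed scaling. Instead it uses the Lipschitz bound of \cref{lem:numerical-stability}, which needs only $\tr F=0$ and the norm margins, and applies the truncation estimate at the exact $E$. Your exact-stochastic correction also works if it keeps symmetry and zero diagonal. One way is $\hat X_{uv}=\tilde X_{uv}(1+y_u+y_v)$, with $y$ solving a linear system whose matrix is close to $I+X\succeq\theta c_{\mathrm{lo}}^2 qI$. Your "re-center" alternative is the risky one: a projection such as $\Pi\tilde E\Pi$, with $\Pi$ the orthogonal projection onto $\1^\perp$, would spoil the zero diagonal.

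There is also a slip to fix. The correct identity is $\haf(X)=(\prod_v r_v)\haf(A)$. Your own computation gives $h_A(G)=-\sum_v\log r_v$, so $\prod_v r_v=e^{-h_A(G)}$. Your two inversions cancel in $\haf(A)=e^{h_A(G)}\haf(X)$, but the final step must divide by $\prod_v\tilde r_v$ rather than multiply.
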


For all sufficiently large $n$, the hypotheses of
\cref{thm:scaling-criterion} themselves force the support to contain a
perfect matching.  See \cref{cor:spectral-existence}.

The bipartite analogue has one fewer spectral hypothesis.  The
singular-value gap supplies both the operator-norm bound needed by the
matrix theorem and the strong convexity needed to compute the Sinkhorn
scaling.

\begin{theorem}[Bipartite scaling criterion]
\label{thm:bipartite-scaling-criterion}
Fix $0<\theta\le1$, constants $0<c_{\mathrm{lo}}\le c_{\mathrm{hi}}<\infty$, and
$0<\kappa_{\mathrm b}\le1$.  Let $B$ be a rational $m$ by $m$ matrix with
entries in $\{0\}\cup[\theta,1]$ and total support.  Let
$Y^*_{ij}=b_{ij}r_ic_j$ be its Sinkhorn scaling, and choose the factors in
the gauge
\(\prod_i r_i=\prod_j c_j\).  Suppose that
\[
  \frac{c_{\mathrm{lo}}}{\sqrt m}\le r_i,c_j\le\frac{c_{\mathrm{hi}}}{\sqrt m},
  \qquad
  \sigma_2(Y^*)\le1-\kappa_{\mathrm b}.
\]
Then, given $0<\eps<1$, a deterministic algorithm returns
$\widehat Z_B>0$ satisfying
\[
  e^{-\eps}\operatorname{per}(B)\le\widehat Z_B
  \le e^\eps\operatorname{per}(B).
\]
Its bit complexity is
\[
  m^{O_{\theta,c_{\mathrm{lo}},c_{\mathrm{hi}},\kappa_{\mathrm b}}(1)}
  \eps^{-O_{\theta,c_{\mathrm{lo}},c_{\mathrm{hi}},\kappa_{\mathrm b}}(1)}
  \poly(L_B+L_\eps).
\]
\end{theorem}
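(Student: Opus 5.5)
The plan is to reduce \cref{thm:bipartite-scaling-criterion} to the bipartite case of \cref{thm:matrix}, applied to the symmetric dilation of the centered Sinkhorn matrix. The reduction has three parts: compute the scaling, verify the matrix hypotheses, and track precision.

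\emph{Reduction to a doubly stochastic input.} Since $Y^*_{ij}=b_{ij}r_ic_j$ and every permutation uses each row factor and each column factor once,
\[
  \operatorname{per}(B)=\Bigl(\prod_i r_i\prod_j c_j\Bigr)^{-1}\operatorname{per}(Y^*).
\]
So it suffices to approximate $\operatorname{per}(Y^*)$ and the prefactor, each within $e^{\pm\eps/3}$. Set $n=2m$, $D=Y^*-\mathbf J_m/m$, and let $E$ be the symmetric dilation of $D$. Then $P_{\mathrm b}+E$ is the dilation of $Y^*$, and its hafnian equals $\operatorname{per}(Y^*)$.

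\emph{Checking the hypotheses of \cref{thm:matrix}.}
\begin{itemize}
\item Row and column sums of $D$ vanish, because $Y^*$ is doubly stochastic. Hence $E\1=0$ and $E_{ii}=0$.
\item The scaling bounds give $Y^*_{ij}\le c_{\mathrm{hi}}^2/m$, so $\max|E_{ij}|\le(c_{\mathrm{hi}}^2+1)/m=2(c_{\mathrm{hi}}^2+1)/n$. This fixes $\beta$.
\item The eigenvalues of $E$ are $\pm$ the singular values of $D$. The top singular pair of $Y^*$ is $(1,\1/\sqrt m)$, and subtracting $\mathbf J_m/m$ removes exactly that component. Therefore $\|D\|=\sigma_2(Y^*)\le1-\kappa_{\mathrm b}$, which gives $\kappa=\kappa_{\mathrm b}$.
\item Nonnegativity of $P_{\mathrm b}+E$ holds because $Y^*\ge0$.
\end{itemize}
\Cref{thm:matrix} then returns an approximation of $\operatorname{per}(Y^*)$. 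It is positive because total support gives $\operatorname{per}(B)>0$.

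\emph{Main obstacle: computing the scaling and finite precision.} The real work is producing $r,c$ and the entries of $Y^*$ to polynomially many bits in polynomial time, deterministically, so that errors stay within $e^{\pm\eps/3}$. I would compute the scaling by minimizing the convex potential
\[
  \Phi(u,v)=\sum_{ij}b_{ij}e^{u_i+v_j}-\sum_i u_i-\sum_j v_j,
\]
restricted to the gauge hyperplane $\sum u_i=\sum v_j$. Its Hessian at the optimum is, up to the gauge direction, the Laplacian-type block matrix
\[
  \begin{pmatrix}I&Y^*\\Y^{*\mathsf T}&I\end{pmatrix}.
\]
On the complement of the kernel direction this matrix is $\succeq\kappa_{\mathrm b}I$, by the gap $\sigma_2(Y^*)\le1-\kappa_{\mathrm b}$.

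The two-sided bounds $r_i,c_j=\Theta(m^{-1/2})$ keep the iterates in a bounded box. Inside that box the Hessian is comparable to its value at the optimum, within factors depending on $c_{\mathrm{lo}},c_{\mathrm{hi}},\theta$. Damped Newton, or the ellipsoid method over this box, therefore converges to accuracy $\eta$ in $\poly(m,\log(1/\eta))$ steps. I would take $\eta=\eps/\poly(m)$.

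\emph{Perturbation step.} An approximate scaling $\tilde Y$ has row and column sums within $\eta$ of one. I would correct for this in one of two ways:
\begin{itemize}
\item apply a final exact-rational rebalancing and bound the resulting prefactor change by $e^{O(m\eta)}$; or
\item apply \cref{thm:matrix} directly to the slightly non-stochastic $\tilde Y$, after showing the hafnian is Lipschitz in $\log$-scale in the scaling vectors with constant $O(m)$.
\end{itemize}
In both routes the hypotheses of \cref{thm:matrix} survive with $\beta,\kappa$ slightly worsened. The remaining bit-complexity accounting for the retained subset terms follows the matching analysis of \cref{app:bit-complexity}.
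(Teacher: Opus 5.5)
Your reduction is the paper's: $\operatorname{per}(B)=(\prod_ir_i\prod_jc_j)^{-1}\operatorname{per}(Y^*)=e^{h_B}\operatorname{per}(Y^*)$, the dilation of $Y^*-\mathbf J_m/m$, the entrywise bound, $\|D\|=\sigma_2(Y^*)$, and computing the scaling by minimizing a strongly convex exponential potential on the gauge subspace. Your Hessian computation is right, since the form is $\sum_{ij}b_{ij}e^{u_i+v_j}(h_i+k_j)^2$. The problem is the perturbation step, which is where the paper does its finite-precision work.

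Your second route fails as stated. The hypothesis $E\1=0$ of \cref{thm:matrix} is not a margin that degrades gracefully. It is the exact identity $\sum_i\psi_i=0$ that cancels the linear term in the Gaussian completion of \cref{lem:gaussian-completion}. For a non-stochastic $\widetilde Y$ the dilated $\widetilde E$ has $\widetilde E\1\ne0$. The theorem then gives no guarantee for that input, and your claim that ``the hypotheses survive with $\beta,\kappa$ slightly worsened'' is false. Lipschitz continuity of the permanent in the log-scaling is true, indeed trivially, since $\operatorname{per}(\operatorname{diag}(\tilde r)B\operatorname{diag}(\tilde c))=\prod_i\tilde r_i\prod_j\tilde c_j\operatorname{per}(B)$ exactly. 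But it concerns the target quantity, not what the algorithm computes on an out-of-scope input.

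The paper instead proves stability of the algorithm's explicit output. \Cref{lem:numerical-stability} shows that the truncated formula $Z_R(t,E)$ is Lipschitz in $(t,E)$ with constant $n^{O(1)}\exp(O(R\log(en/R)))=n^{O(1)}\eps^{-O(1)}$. Its hypotheses are open conditions: zero trace, a Frobenius bound, $\|uF\|\le q_{\mathrm{stab}}$, and row-norm bounds on $K_{u,F}[T]$. None of them mentions $F\1=0$. The formula is evaluated at the approximate $\widetilde E$ and compared with its value at the exact $E$, where the truncation theorem applies. On that route the required scaling accuracy is $(\eps/m)^{O(1)}$, not $\eps/\poly(m)$ for a fixed polynomial.

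Your first route can be repaired, but not by a diagonal ``rebalancing.'' The only diagonal scaling that makes $B$ exactly doubly stochastic is the Sinkhorn one. It is irrational already for $B=\bigl(\begin{smallmatrix}1&1\\1&1/2\end{smallmatrix}\bigr)$, whose scaled diagonal entry is $\sqrt2-1$. So there is no rational ``prefactor change'' to bound.

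A workable version uses a non-diagonal correction $\widehat Y_{ij}=\widetilde Y_{ij}(1+\alpha_i+\beta_j)$. Here $(\alpha,\beta)$ solves the rational linear system with matrix $\bigl(\begin{smallmatrix}\operatorname{diag}(\widetilde Y\1)&\widetilde Y\\ \widetilde Y^{\mathsf T}&\operatorname{diag}(\widetilde Y^{\mathsf T}\1)\end{smallmatrix}\bigr)$ and right side the row and column defects. The system is consistent because the two defect sums agree. The spectral gap, transferred to $\widetilde Y$, gives $\delta:=\max|\alpha_i+\beta_j|=O(\sqrt m\,\eta)$. Then $\widehat Y$ is rational, exactly doubly stochastic, and has the same support as $B$. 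Moreover $\operatorname{per}(\widehat Y)/\operatorname{per}(\widetilde Y)\in[(1-\delta)^m,(1+\delta)^m]$, and \cref{thm:matrix} applies exactly to the dilation of $\widehat Y-\mathbf J_m/m$ with slightly worse constants.

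This would be a genuine alternative to the paper's stability lemma: it moves the perturbation analysis from the analytic formula to the input. You would still need the rounding analysis of \cref{app:bit-complexity} for the evaluation circuit itself.
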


The symmetric criterion is proved in
\cref{sec:matching-normalization}, after the normalization has been
constructed.  The bipartite criterion is proved in
\cref{app:bipartite-scaling}.

\subsubsection{Consequences of the scaling criteria}

We first record the regular cases, where the stochastic normalization is
already visible in the input.  The irregular consequences that follow
illustrate the role of entropy scaling.  Proofs of the corollaries in this
subsection are given in \cref{app:scaling-consequences} and
\cref{app:bipartite-scaling}.

\begin{corollary}[Dense regular spectral expanders]
\label{cor:dense-regular-expanders}
Fix $0<c<1$ and $0<\kappa<1$.
\begin{enumerate}
\item[\rm (a)]
Let $G$ be a $d$-regular graph on an even number $n$ of vertices, where
$d\ge cn$, and suppose that
$\|A_Gz\|_2\le(1-\kappa)d\|z\|_2$ for every $z\perp\1$.
There is a deterministic FPTAS for $\#\PM(G)$, with running time
$n^{O_{c,\kappa}(1)}\eps^{-O_{c,\kappa}(1)}$.

\item[\rm (b)]
Let $B$ be the zero-one biadjacency matrix of a $d$-regular bipartite graph
with $m$ vertices on each side, where $d\ge cm$, and suppose that
$\sigma_2(B/d)\le1-\kappa$.  There is a deterministic FPTAS for
$\operatorname{per}(B)$, with running time
$m^{O_{c,\kappa}(1)}\eps^{-O_{c,\kappa}(1)}$.
\end{enumerate}
\end{corollary}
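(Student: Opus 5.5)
For part~(a), the plan is to apply \cref{thm:matrix} directly with the complete-graph reference $P_0$, without invoking entropy scaling. Since $G$ is $d$-regular, the matrix $X=A_G/d$ is symmetric and stochastic with zero diagonal, and $\haf(A_G)=d^{n/2}\haf(X)$ because every perfect matching has $n/2$ edges. Put $E=X-P_0$. Then $E\1=X\1-P_0\1=0$ and $E_{ii}=0$. Its entries are $1/d-1/(n-1)$ or $-1/(n-1)$, so $\max|E_{ij}|\le 1/d\le 1/(cn)$, and we may take $\beta=1/c$. Nonnegativity of $P_0+E=X$ is immediate.

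For the spectral bound, decompose $\R^n=\operatorname{span}(\1)\oplus\1^\perp$. The vector $\1$ lies in the kernel of $E$, and $E$ preserves $\1^\perp$. On $\1^\perp$ we have $P_0z=-z/(n-1)$, so $Ez=A_Gz/d+z/(n-1)$. This gives $\|E\|\le1-\kappa+1/(n-1)\le1-\kappa/2$ once $n\ge 2/\kappa+1$. \Cref{thm:matrix} with $\kappa/2$ then approximates $\haf(X)$, and we multiply by $d^{n/2}$, computed exactly or with the bit control of \cref{app:bit-complexity}. The finitely many $n<2/\kappa+1$ are handled by exact enumeration in $O_\kappa(1)$ time. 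Since the input is a graph, arithmetic on the rational entries $1/d$ and $1/(n-1)$ has polynomial bit length, so the operation count of \cref{thm:matrix} yields the stated running time.

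For part~(b), we take the symmetric dilation with the bipartite reference $P_{\mathrm b}$. Set $Y=B/d$, which is doubly stochastic, and $D=Y-\mathbf J_m/m$, and form $E$ from the block $D$. The dilation of $Y$ equals $P_{\mathrm b}+E$, whose hafnian is $\operatorname{per}(Y)=d^{-m}\operatorname{per}(B)$. Again $E\1=0$, the diagonal is zero, and $|D_{ij}|\le1/d\le1/(cm)=2/(cn)$, so $\beta=2/c$. The singular values of $D$ are those of $Y$ with the top value $1$, on the pair $(\1,\1)$, replaced by $0$. Indeed $D\1=0$, $D^{\mathsf T}\1=0$, and $D$ agrees with $Y$ on $\1^\perp$, which $Y$ maps into $\1^\perp$. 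Hence $\|E\|=\|D\|=\sigma_2(Y)\le1-\kappa$, and \cref{thm:matrix} applies directly.

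I expect no serious obstacle in either part. The only points that need care are the $1/(n-1)$ shift in part~(a), which costs only a halving of the margin together with a finite exceptional range, and the check in part~(b) that removing the rank-one mean leaves exactly $\sigma_2$.
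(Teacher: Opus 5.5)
Your argument is correct, but it takes a more direct route than the paper.

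\textbf{The paper's route.} The paper proves the corollary through the scaling criteria. It observes that $A_G/d$ is the maximum-entropy scaling. In part~(b), it notes that regularity gives total support and Sinkhorn scaling $B/d$. It reads off $r_v=d^{-1/2}\in[n^{-1/2},c^{-1/2}n^{-1/2}]$ and converts the expansion hypothesis into $\lambda_2(L_G)\ge\kappa cn$ and $Q_G\succeq\kappa cnI$. It then invokes \cref{thm:scaling-criterion}, or \cref{thm:bipartite-scaling-criterion} with $\kappa_{\mathrm b}=\kappa$. Those criteria in turn verify the hypotheses of \cref{thm:matrix}. They absorb the $1/(n-1)$ shift coming from $P_0$ by halving the spectral margin and sending small $n$ to the exact branch, which is exactly what you do by hand.

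\textbf{What your route avoids.} You need neither entropy optimality nor total support. You use only the trivial identities $\haf(A_G)=d^{n/2}\haf(A_G/d)$ and $\operatorname{per}(B)=d^m\operatorname{per}(B/d)$. Your check that $\|Y-\mathbf J_m/m\|=\sigma_2(Y)$ is the same one the paper uses for $E_{\mathrm b}$.

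\textbf{What the paper's route buys, and the one step to fix.} Going through the criteria, the paper inherits their bit-complexity statements for free. On your route, the sentence deriving the running time from the short rational encodings of $1/d$ and $1/(n-1)$ is not the right justification. \Cref{thm:matrix} counts real arithmetic and elementary-function operations. The algorithm also manipulates irrational quantities, such as $\det(I+tE)^{-1/2}$, the Gamma density, and the exponentials in \eqref{eq:fixed-support-formula}. You should instead cite \cref{lem:numerical-stability} and the precision remark in \cref{app:bit-complexity}. They apply verbatim here, and with no scaling error, since your $E$ is exactly rational.
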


In both parts the input is already scaled.  Ebrahimnejad, Nagda, and Oveis
Gharan obtain randomized approximate counting and sampling in polynomial
time for regular strong expanders
\cite{EbrahimnejadNagdaOveisGharan2022}.  Gamarnik and Katz obtain a
deterministic $(1+\eta)^m$-factor approximation for constant-degree
bipartite expanders \cite{GamarnikKatz2010}.  The corollary above is
deterministic and gives arbitrary relative accuracy, at the price of a
degree linear in the order and a two-sided spectral gap.

Whether every edge of a graph belongs to a perfect matching can be checked
in polynomial time: for each edge $uv$, one tests whether $G-\{u,v\}$ has
a perfect matching using Edmonds' algorithm \cite{Edmonds1965}.  The next
two consequences are the main applications of entropy scaling beyond the
fixed-margin Dirac class.  They handle irregular supports whose density may
lie below one half.

\begin{corollary}[Dense supports with degree and codegree bounds]
\label{cor:pseudorandom-supports}
Fix constants $p>0$, $0\le\rho<1$, $\eta>0$, and $q>0$ such that
$\eta>2\rho p$.  There is a deterministic FPTAS for $\#\PM(G)$ on every
even-order graph $G$ in which every edge belongs to a perfect matching and
which satisfies
the following conditions, where $d_G(v)$ denotes the degree of a vertex
$v$:
\[
\begin{aligned}
  (1-\rho)pn&\le d_G(v)\le(1+\rho)pn
    &&\text{for every vertex $v$},\\
  |N(u)\cap N(v)|&\ge\eta n
    &&\text{for all distinct $u,v$},\\
  Q_G&\succeq qnI,
\end{aligned}
\]
where $Q_G$ is the signless Laplacian.  Its running time is
$n^{O_{p,\rho,\eta,q}(1)}
\eps^{-O_{p,\rho,\eta,q}(1)}$.
For every fixed $p\in(0,1)$, suitable constants $\rho,\eta,q>0$ make these
hypotheses hold with probability tending to one for the Erd\H{o}s--R\'enyi
graph $G(n,p)$ along even values of $n$.
\end{corollary}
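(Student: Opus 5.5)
The plan is to deduce the corollary from the symmetric scaling criterion, \cref{thm:scaling-criterion}, with $\theta=1$ and $A=A_G$. Three things must be checked: the scaling factors $r_v$ are of order $1/\sqrt n$, $\lambda_2(L_G)\ge q'n$, and $Q_G\succeq qnI$. The last is a hypothesis. Since every edge lies in a perfect matching, averaging the indicator vectors of perfect matchings gives a point of $\mathcal D(G)$ positive on every edge. Hence the maximum-entropy scaling $X_{uv}=r_ur_v$ exists, by the normalization of \cref{prop:entropy-normalization}. The running-time dependence on $(p,\rho,\eta,q)$ then follows from the criterion.

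\emph{Laplacian gap.} I would use the codegree bound. Any two vertices have at least $\eta n$ common neighbours, so $G$ contains at least $\eta n$ paths of length two between each pair. For $z\perp\1$, write $n\|z\|^2=\tfrac12\sum_{u,v}(z_u-z_v)^2$. Bound each $(z_u-z_v)^2$ by $\frac{2}{\eta n}\sum_{w\in N(u)\cap N(v)}\bigl((z_u-z_w)^2+(z_w-z_v)^2\bigr)$. Each edge $uw$ appears in at most $n$ such paths, so
\[
  n\|z\|^2\le \frac{C}{\eta}\sum_{uw\in E}(z_u-z_w)^2 .
\]
This gives $\lambda_2(L_G)\ge c\eta n$. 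This is a standard multicommodity-flow (path-counting) argument.

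\emph{Scaling bounds.} This is the main obstacle. I would use the stationarity equations $\sum_{u\sim v}r_ur_v=1$, that is, $r_v\sum_{u\in N(v)}r_u=1$. Let $r_{\max}=r_a$ and $r_{\min}=r_b$. Then $1=r_a\sum_{N(a)}r_u\ge r_{\max}\,d(a)\,r_{\min}$, so
\[
  r_{\max}r_{\min}\le \frac{1}{(1-\rho)pn}.
\]
Upper bounds on $r_{\min}$ and on products alone are not enough; we also need a comparison of $r_a$ with $r_b$. The codegree condition supplies it. The neighbourhoods $N(a)$ and $N(b)$ share at least $\eta n$ vertices, and the degrees lie in $[(1-\rho)pn,(1+\rho)pn]$. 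Put $S_v=\sum_{N(v)}r_u=1/r_v$. I expect a bound of the form
\[
  S_b\ge \Sigma_{\mathrm{common}}+\bigl(d(b)-|N(a)\cap N(b)|\bigr)r_{\min},
\]
with a similar upper bound for $S_a$. The condition $\eta>2\rho p$ should be exactly what makes the common part dominate the degree fluctuation. The resulting inequality between $S_a/S_b=r_b/r_a$ should force $r_{\max}/r_{\min}\le C(p,\rho,\eta)$. As a fallback, I would run the same extremal argument on the second-order identity obtained by substituting the equation at $u$ into the equation at $v$. Once the ratio is bounded, the identity $\sum_v\sum_{u\sim v}r_ur_v=n$ gives $r_v\asymp 1/\sqrt n$, since
\[
  n\asymp pn^2 r_{\max}^2 .
\]

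\emph{Random graphs.} For $G(n,p)$, Chernoff bounds give degrees $pn(1\pm o(1))$ and codegrees $p^2n(1\pm o(1))$ with high probability. So $\rho$ can be taken small and $\eta=p^2/2>2\rho p$. The bound $Q_G\succeq qnI$ follows from $\|A_G-p(\mathbf J-I)\|=O(\sqrt n)$ together with $D_G\approx pnI$. The property that every edge lies in a perfect matching holds with high probability by a Dirac-type or Hall-type argument on the dense, codegree-rich graph. For example, $G-\{u,v\}$ still has a perfect matching because it satisfies the hypotheses of \cref{thm:scaling-criterion}, or by a direct Tutte-condition count.
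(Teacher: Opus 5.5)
Your reduction to \cref{thm:scaling-criterion} with $\theta=1$, the two-path averaging that gives $\lambda_2(L_G)\ge c\eta n$, and the product bound $r_{\max}r_{\min}\le1/(d_-n)$ all match the paper. Here $d_\pm=(1\pm\rho)p$. The genuine gap is the step you flag as the main obstacle. The bound $r_{\max}/r_{\min}\le C(p,\rho,\eta)$ is only anticipated, and the inequalities you propose point the wrong way. Since $r_{\max}/r_{\min}=S_b/S_a$, you need an \emph{upper} bound on $S_b$ and a \emph{lower} bound on $S_a$. A lower bound on $S_b$ and an upper bound on $S_a$ can only bound this ratio from below.

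The missing computation is short. Write $S_b-S_a=\sum_{N(b)\setminus N(a)}r_u-\sum_{N(a)\setminus N(b)}r_u$ and drop the subtracted sum. Then use $|N(b)\setminus N(a)|\le(d_+-\eta)n$ and $r_u\le r_a$:
\[
  \frac1{r_b}-\frac1{r_a}\le(d_+-\eta)n\,r_a .
\]
Multiply by $r_b$ and insert your product bound. This gives $1-r_b/r_a\le(d_+-\eta)/d_-$, and the right side is less than one exactly when $\eta>2\rho p$. Hence $r_{\max}/r_{\min}\le d_-/(\eta-2\rho p)$, which is the paper's argument.

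Your common/private decomposition also works once it is oriented correctly. Keep the subtracted sum, bounded below by $(d(a)-c)r_{\min}$ with $c=|N(a)\cap N(b)|$, and use $\sum_{N(a)\cap N(b)}r_u\ge c\,r_{\min}$. For $R=r_{\max}/r_{\min}$ this gives $(R-1)c\le R\,(d(b)-d(a))\le2\rho pnR$, so $R\le\eta/(\eta-2\rho p)$, without the product bound. With the ratio bounded, the bounds $1/(d_+n)\le r_{\max}r_{\min}\le1/(d_-n)$ give $r_v\asymp n^{-1/2}$ directly.

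In the random-graph part, edge-extendability is not yet justified. Dirac's theorem does not apply when $p<1/2$. Invoking \cref{thm:scaling-criterion}, through \cref{cor:spectral-existence}, for $G-\{u,v\}$ presupposes that $\mathcal D(G-\{u,v\})$ has a point positive on every edge. That is a fractional form of the very property you are proving, and you have not verified it.

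The paper fixes a pair $u,v$ and splits the other $n-2$ vertices into two sets of size $s=(n-2)/2$. It then checks Hall's condition for the bipartite random graph between them. This fails with probability at most $\sum_{k=1}^s\binom sk\binom s{k-1}(1-p)^{k(s-k+1)}=e^{-\Omega_p(s)}$. A union bound over the $\binom n2$ pairs then gives the property with high probability. Your ``direct Tutte count'' would have to be carried out at this level of detail.
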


The condition $Q_G\succeq qnI$ is a quantitative nonbipartiteness
hypothesis.  The common-neighbor condition makes $G$ connected.  For a
connected graph, $Q_G$ is singular exactly when $G$ is bipartite.  This is
why part~{\rm (b)} of
\cref{thm:main} uses the separate reference matrix $P_{\mathrm b}$ rather
than following from \cref{cor:pseudorandom-supports}.

The bipartite formulation factors out this unavoidable $-1$ mode and
therefore requires no additional signless-Laplacian hypothesis.  The
following consequence reaches zero-one matrices at every fixed positive
density.

\begin{corollary}[Bipartite supports with degree and codegree bounds]
\label{cor:pseudorandom-bipartite}
Fix $0<\theta\le1$, $p>0$, $0\le\rho<1$, and $\eta>0$.  Put
\(d_-=(1-\rho)p, \qquad d_+=(1+\rho)p\),
and suppose that
\begin{equation}
  \eta>d_+-\theta d_-.
  \label{eq:bipartite-pseudorandom-condition}
\end{equation}
Let $B$ be a rational $m$ by $m$ matrix with entries in
$\{0\}\cup[\theta,1]$ and total support.  Suppose that every row and
column of its support has degree between $d_-m$ and $d_+m$, and that every
two distinct rows, and every two distinct columns, have at least $\eta m$
common neighbors.  Then there is a deterministic FPTAS for
$\operatorname{per}(B)$, with bit complexity
\[
  m^{O_{p,\rho,\eta,\theta}(1)}
  \eps^{-O_{p,\rho,\eta,\theta}(1)}
  \poly(L_B+L_\eps).
\]
In particular, if $B$ is a zero-one matrix with independent
$\operatorname{Bernoulli}(p)$ entries for any fixed $p\in(0,1]$, then its
support satisfies these hypotheses with probability tending to one as
$m\to\infty$.
\end{corollary}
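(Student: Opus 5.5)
The plan is to deduce the corollary from \cref{thm:bipartite-scaling-criterion}.  It suffices to verify, for the Sinkhorn scaling $Y^*_{ij}=b_{ij}r_ic_j$ in the balanced gauge, two facts.  First, $r_i,c_j\in[c_{\mathrm{lo}}/\sqrt m,c_{\mathrm{hi}}/\sqrt m]$ with constants depending only on $(p,\rho,\eta,\theta)$.  Second, $\sigma_2(Y^*)\le1-\kappa_{\mathrm b}$ for some fixed $\kappa_{\mathrm b}>0$.  Total support is assumed, so the scaling exists.

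\emph{Scaling factors.}  Row sums equal to one give $r_i\sum_j b_{ij}c_j=1$, and column sums give $c_j\sum_i b_{ij}r_i=1$.  Let $c_{\max}$ and $c_{\min}$ be the extreme column factors, attained at columns $j_1,j_0$.  Since entries lie in $[\theta,1]$ and degrees lie in $[d_-m,d_+m]$, each $r_i$ is within a factor $d_+/(\theta d_-)$ of $(m\,\bar c_i)^{-1}$, where $\bar c_i$ is the average of $c$ over the support of row $i$.

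To compare $c_{j_0}$ with $c_{j_1}$, use the common-neighbour hypothesis.  The two columns share at least $\eta m$ rows, so their supports differ in at most $(d_+-\eta)m$ rows each.  Write $1=c_{j}\sum_i b_{ij}r_i$ for both columns and split each sum into the common part and the private part.  The condition $\eta>d_+-\theta d_-$ is exactly what should make the common part, weighted by $\theta$ from below, dominate the private parts.  Iterating this comparison in a Perron-type argument should yield $c_{\max}/c_{\min}\le C(p,\rho,\eta,\theta)$, and likewise for rows.  The gauge $\prod r_i=\prod c_j$ together with $\sum_{ij}b_{ij}r_ic_j=m$ then pins both scales at $1/\sqrt m$.

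\emph{Spectral gap.}  Given the bounded factors, $Y^*$ has entries $\Theta(1/m)$ on its support.  I would bound $\sigma_2(Y^*)^2$ as the second eigenvalue of the doubly stochastic matrix $Y^*Y^{*\mathsf T}$.  Its $(i,i')$ entry is $\sum_j Y^*_{ij}Y^*_{i'j}\ge\eta m\cdot(\theta c_{\mathrm{lo}}^2/m)^2=\Omega(1/m)$ by the row codegree condition.  A stochastic matrix all of whose entries are at least $\mu/m$ has all nontrivial eigenvalues of modulus at least as small as $1-\mu$, by the Doeblin bound.  Hence $\sigma_2(Y^*)^2\le1-\mu$, which gives $\kappa_{\mathrm b}$.

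\emph{Random case and main obstacle.}  For the Bernoulli statement, Chernoff and union bounds give degrees $pm(1\pm o(1))$ and codegrees $p^2m(1\pm o(1))$.  Take $\theta=1$ and $\rho$ small; then $\eta\approx p^2$ exceeds $d_+-d_-=2\rho p$.  Total support holds with high probability, since min degree is linear and, by Hall-type expansion, every edge lies in a perfect matching.

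The main obstacle is the bounded-ratio estimate for the scaling factors.  The Perron comparison must use the codegree inequality sharply, with the weights $\theta$ entering correctly.  A fallback is to run the argument on the ratio $c_{\max}/c_{\min}$ via the extremal columns directly, choosing the split so that the private mass $(d_+-\eta)m$ is offset by $\theta d_-m$.
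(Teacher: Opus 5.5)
\textbf{The bounded-ratio step is missing.} Your outline matches the paper's proof. You verify the hypotheses of \cref{thm:bipartite-scaling-criterion} by bounding the ratios $R_{\mathrm r}=r_{\max}/r_{\min}$ and $R_{\mathrm c}=c_{\max}/c_{\min}$, fix the scale with the gauge, and get the gap from $W=Y^*(Y^*)^{\mathsf T}$ via the codegree condition. Your Doeblin splitting $W=\mu\mathbf J/m+(1-\mu)W'$, which gives $\|Wz\|\le(1-\mu)\|z\|$ on $\1^\perp$, is a fine substitute for the paper's identity $z^{\mathsf T}(I-W)z=\sum_{i<i'}W_{ii'}(z_i-z_{i'})^2$. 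It also uses the diagonal entries of $W$, which are $\Omega(1/m)$ because every degree is at least $d_-m$.

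The gap is the bounded-ratio estimate itself. You only assert it (``should yield'', ``Perron-type iteration''), yet everything else rests on it, and it is the only place the hypothesis $\eta>d_+-\theta d_-$ enters. Carried out as you describe, the common/private splitting at the extremal columns gives
\[
  \frac1{c_{\min}}\le\frac1{\theta c_{\max}}+(d_+-\eta)m\,r_{\max},
  \qquad\text{i.e.}\qquad
  R_{\mathrm c}\le\theta^{-1}+(d_+-\eta)\,m\,r_{\max}c_{\max}.
\]
Nothing in your proposal converts the product $m\,r_{\max}c_{\max}$, which is not a priori bounded, back into a ratio.

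\textbf{How to close it.} The missing ingredient is an extremal-product bound, already implicit in your estimate $r_i\le(\theta d_-m\bar c_i)^{-1}$. Since $\bar c_i\ge c_{\min}$, the row equation at the maximal row gives $r_{\max}c_{\min}\le(\theta d_-m)^{-1}$. Hence $m\,r_{\max}c_{\max}\le R_{\mathrm c}/(\theta d_-)$. Substituting yields $R_{\mathrm c}\le\theta^{-1}+\lambda_\theta R_{\mathrm c}$ with $\lambda_\theta=(d_+-\eta)/(\theta d_-)$. The hypothesis is exactly $\lambda_\theta<1$, and $R_{\mathrm c}$ is finite, so $R_{\mathrm c}\le1/(\theta(1-\lambda_\theta))$ in one step, with no iteration. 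Rows are handled symmetrically.

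The paper instead compares extremal rows, using the same product bound. It gets the coupled pair $R_{\mathrm r}\le\theta^{-1}+\lambda_\theta R_{\mathrm c}$ and $R_{\mathrm c}\le\theta^{-1}+\lambda_\theta R_{\mathrm r}$, which closes for $\max\{R_{\mathrm r},R_{\mathrm c}\}$. Either pairing works. With the ratios bounded, your gauge argument, the entrywise bounds, and the spectral gap all go through.

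\textbf{Random case.} Linear minimum degree does not by itself give total support, so the expansion argument is genuinely needed. Condition on an entry being one and delete its row and column. Bound the Hall-failure probability of the remaining i.i.d.\ matrix by $e^{-\Omega_p(m)}$, then union-bound over the $m^2$ entries.
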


The proof, including the random-support assertion, is given in
\cref{app:bipartite-scaling}.

\subsection{Maximum-entropy normalization}
\label{sec:matching-normalization}

We now turn from the range of applications to the proof.  This section
factors the leading entropy out of each input and verifies the centered,
entrywise, and spectral hypotheses of \cref{thm:matrix}.  The symmetric and
bipartite constructions differ only at this normalization stage.  After the
symmetric dilation below, the original support graphs and weight matrices
no longer enter the analytic argument.

\subsubsection{The symmetric normalization}

The first step removes the leading weighted entropy of the input matrix.
For an adjacency matrix, this is the entropy parameter used by Cuckler and
Kahn to determine the exponential scale of the count
\cite{CucklerKahn2009,CucklerKahnHamiltonian2009}.  Here it also supplies
an exact diagonal normalization.  Let $A=(a_{uv})$ satisfy part~{\rm (a)}
of \cref{thm:main}, let $G$ be its support graph, and recall the set
$\mathcal D(G)$ defined in the introduction.
Only the degree equations appear in $\mathcal D(G)$.  In particular, this
is not the perfect-matching polytope with its odd-cut inequalities.
We assume in this section that $n\ge1/\gamma+2$.  The finitely many smaller
inputs are assigned to the exact branch.

\begin{lemma}[Every edge extends to a perfect matching]
\label{lem:edge-extendability}
Suppose that \cref{eq:main-degree} holds and that $n\ge1/\gamma+2$.
Every edge of $G$ belongs to a perfect matching.
\end{lemma}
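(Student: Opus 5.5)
Fix an edge $uv$ of $G$ and let $G'=G-\{u,v\}$, a graph on $n-2$ vertices, which is even. The plan is to show that $G'$ satisfies Dirac's condition $\delta(G')\ge (n-2)/2$. Dirac's theorem (cited in the introduction) then makes $G'$ Hamiltonian, so it has a perfect matching, since an even cycle does. Adding $uv$ to that matching gives a perfect matching of $G$ containing $uv$.

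The key step is the degree count. Each vertex $w\ne u,v$ loses at most two neighbours when $u$ and $v$ are deleted, so by \cref{eq:main-degree}
\[
  d_{G'}(w)\ge\Bigl(\frac12+\gamma\Bigr)n-2 .
\]
We need the right-hand side to be at least $(n-2)/2=n/2-1$. This is equivalent to $\gamma n\ge1$, which follows from $n\ge1/\gamma+2$. Hence $\delta(G')\ge |V(G')|/2$.

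Dirac's theorem needs at least three vertices, so small cases need separate checks. If $n-2=2$, the degree bound shows that the two remaining vertices are adjacent, and that edge is the matching. If $n-2=0$, there is nothing to prove. For $n-2\ge4$, Dirac applies directly.

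The only real point of care is this degree arithmetic and the small-case handling. An alternative for the matching step is Tutte's condition, but the Dirac route is cleanest.
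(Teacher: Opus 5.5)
Your proof is correct and is essentially the paper's own argument: delete the endpoints of $uv$, check $(1/2+\gamma)n-2\ge(n-2)/2$ via $\gamma n\ge1$, apply Dirac's theorem, and take alternate edges of the Hamiltonian cycle. Your small-case check is harmless but never needed: $\gamma<1/2$ and $n\ge1/\gamma+2$ force $n>4$, so $n$ even gives $n\ge6$ and $n-2\ge4$.
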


\begin{proof}
Fix an edge $uv$.  To extend $uv$, it is enough to find a perfect matching
after deleting its endpoints.  The remaining graph has $n-2$
vertices and minimum degree at least
\(\left(\frac12+\gamma\right)n-2 \ge \frac{n-2}{2}\).
Dirac's theorem gives a Hamiltonian cycle in this graph.  Alternating edges
of the cycle form a perfect matching, which extends to one of $G$ after
adding $uv$.
\end{proof}

For a vector $w\in\mathcal D(G)$, define its weighted entropy by
\(F_A(w)=\sum_{uv\in E(G)}w_{uv}\log\frac{a_{uv}}{w_{uv}}\),
where a summand with $w_{uv}=0$ is interpreted as zero.  The set
$\mathcal D(G)$ is nonempty because $G$ has a perfect matching.  Let $w^*$
maximize $F_A$, so $F_A(w^*)=h_A(G)$.
By \cref{lem:edge-extendability}, averaging one perfect matching through
each edge gives a point of $\mathcal D(G)$ that is positive on every edge.

\begin{proposition}[Exact entropy normalization]
\label{prop:entropy-normalization}
Let $A$ be any nonnegative symmetric matrix with zero
diagonal and support graph $G$.  Suppose that $\mathcal D(G)$ contains a
point positive on every edge, and let $w^*$ maximize $F_A$ on
$\mathcal D(G)$.  Then $w^*$ is the unique maximizer and is positive on
every edge.
There are positive numbers $r_v$, indexed by the vertices of $G$, such that
\[
  w^*_{uv}=a_{uv}r_ur_v
  \qquad(uv\in E(G)).
\]
Let $X$ be the symmetric matrix with entries $X_{uv}=w^*_{uv}$ on edges of
$G$, and zero entries on nonedges and on the diagonal.  Then
\begin{equation}
  \haf(A)=e^{h_A(G)}\haf(X).
  \label{eq:entropy-normalization}
\end{equation}
\end{proposition}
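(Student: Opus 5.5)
The plan is a standard convex-optimization argument on the polytope $\mathcal D(G)$, followed by a bookkeeping identity over perfect matchings.

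\emph{Step 1: existence and uniqueness.} The function $F_A(w)=\sum_e w_e\log a_e-\sum_e w_e\log w_e$ is continuous on the compact polytope $\mathcal D(G)$, so a maximizer $w^*$ exists. The term $-\sum_e w_e\log w_e$ is strictly concave in $w\in\R_{\ge0}^{E(G)}$, and the other term is linear. Hence $F_A$ is strictly concave on the convex set $\mathcal D(G)$, and the maximizer is unique.

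\emph{Step 2: positivity on every edge.} Let $w^\circ\in\mathcal D(G)$ be the hypothesised point that is positive on every edge. Consider $w_t=(1-t)w^*+tw^\circ$. Suppose $w^*_e=0$ for some edge $e$. Then the one-sided derivative of $F_A(w_t)$ at $t=0^+$ is $+\infty$. Indeed, for such $e$ the summand $w_{t,e}\log(a_e/w_{t,e})$ behaves like $-tw^\circ_e\log t$, whose derivative diverges. The summands with $w^*_e>0$ contribute finite derivatives. This contradicts maximality, so $w^*_e>0$ for all edges.

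\emph{Step 3: the factorization $w^*_{uv}=a_{uv}r_ur_v$.} Since $w^*$ lies in the relative interior of the face $\{w\ge0\}$, it is a critical point of $F_A$ restricted to the affine space $\{w:\sum_{e\ni v}w_e=1\ \forall v\}$. The Lagrange conditions give multipliers $\mu_v$ with
\[
  \log a_{uv}-\log w^*_{uv}-1=\mu_u+\mu_v
\]
on every edge. These conditions follow because $\nabla F_A(w^*)$ is orthogonal to the kernel of the vertex–edge incidence map, hence lies in the row space of that map. Setting $r_v=e^{-\mu_v-1/2}$ yields $w^*_{uv}=a_{uv}r_ur_v$ with $r_v>0$.

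\emph{Step 4: the hafnian identity.} For any perfect matching $M$ of $G$, each vertex lies in exactly one edge of $M$, so
\[
  \prod_{uv\in M}X_{uv}=\prod_{uv\in M}a_{uv}\prod_v r_v .
\]
Matchings of $K_n$ using a non-edge contribute zero to both $\haf(A)$ and $\haf(X)$. Therefore
\[
  \haf(X)=\Bigl(\prod_v r_v\Bigr)\haf(A).
\]
It remains to show $\prod_v r_v=e^{-h_A(G)}$. Using $\log(a_{uv}/w^*_{uv})=-\log r_u-\log r_v$, we get
\[
  h_A(G)=\sum_{uv}w^*_{uv}\bigl(-\log r_u-\log r_v\bigr)=-\sum_v\log r_v\sum_{e\ni v}w^*_e=-\sum_v\log r_v ,
\]
where the last equality uses $\sum_{e\ni v}w^*_e=1$. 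This gives \eqref{eq:entropy-normalization}.

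\emph{Main obstacle.} The only delicate point is Step 2, the boundary argument that forces positivity. Once positivity holds, Steps 3 and 4 are mechanical. In Step 2 the derivative computation must be handled with the zero convention: the directional derivative is $\sum_{e:w^*_e=0}(+\infty)$ plus a finite part.
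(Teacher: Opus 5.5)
Your proposal is correct and follows essentially the same route as the paper. Both arguments use strict concavity for uniqueness, and the perturbation $(1-t)w^*+t\bar w$ to force positivity; your infinite one-sided derivative is the paper's $t\log(1/t)$ gain. Both then use Lagrange stationarity to get $w^*_{uv}=a_{uv}r_ur_v$, and the degree equations to identify $\prod_v r_v=e^{-h_A(G)}$ before summing over perfect matchings. Your compactness argument for existence and your row-space justification of the multipliers fill in details the paper leaves implicit.
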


\begin{proof}
For every $a>0$, the function $x\mapsto x\log(a/x)$, extended continuously
by zero at $x=0$, is strictly concave on $[0,\infty)$.  Thus $F_A$ is
strictly concave on the convex set $\mathcal D(G)$, and its maximizer is
unique.

Let $\bar w\in\mathcal D(G)$ be positive on every edge, and suppose for a
contradiction that $w^*$ has a zero coordinate.  For
$w(t)=(1-t)w^*+t\bar w$,
\[
  F_A(w(t))-F_A(w^*)
  =t\log(1/t)\sum_{e:w_e^*=0}\bar w_e+O(t)>0
\]
for all sufficiently small $t$.  The displayed leading term comes from
the zero coordinates, while the positive coordinates change by $O(t)$.
This contradiction proves that $w^*$ is positive on every edge.

Let $d_v$ be the multiplier for the degree equation at $v$.
Stationarity gives
$\log(w^*_{uv}/a_{uv})=d_u+d_v-1$.  Hence, with
$r_v=e^{d_v-1/2}$, one has $w^*_{uv}=a_{uv}r_ur_v$.  Moreover,
\[
  \sum_v \log r_v
  =\sum_{uv\in E(G)}w^*_{uv}(\log r_u+\log r_v)
  =\sum_{uv\in E(G)}w^*_{uv}\log\frac{w^*_{uv}}{a_{uv}}
  =-h_A(G).
\]
Every perfect matching $M$ uses each factor $r_v$ once, so
$\prod_{uv\in M}a_{uv}=(\prod_vr_v)^{-1}\prod_{uv\in M}X_{uv}$.
Summing over $M$ gives
$\haf(A)=(\prod_v r_v)^{-1}\haf(X)=e^{h_A(G)}\haf(X)$, which is
\cref{eq:entropy-normalization}.
\end{proof}

The degree margin forces every scaling factor to have order $n^{-1/2}$ and
every nonzero entry of $X$ to have order $n^{-1}$.  Put
$\alpha=1/2+\gamma$.  For vertex sets $S,T$, write
$X(S,T)=\sum_{u\in S,v\in T}X_{uv}$, and put
$r(S)=\sum_{v\in S}r_v$.

\begin{lemma}[Uniform bounds for the symmetric scaling]
\label{lem:diffuse-scaling}
The scaling factors satisfy
\begin{equation}
  \theta\sqrt{\frac{2\gamma}{n}}
  \le r_v\le
  \frac1{\theta\sqrt{2\gamma n}}
  \qquad(v\in V(G)).
  \label{eq:graph-factor-bounds}
\end{equation}
Consequently, every edge $uv$ satisfies
\begin{equation}
  \frac{2\gamma\theta^3}{n}
  \le X_{uv}\le
  \frac1{2\gamma\theta^2n}.
  \label{eq:diffuse-X}
\end{equation}
\end{lemma}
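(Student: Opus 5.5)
The plan is to get both bounds in \eqref{eq:graph-factor-bounds} from the stochasticity of $X$ together with the degree margin, without any spectral information. Write $\Sigma=r(V(G))$. By \cref{prop:entropy-normalization}, every row of $X$ sums to one. Since the nonzero $a_{uv}$ lie in $[\theta,1]$, this gives for every vertex $v$ the sandwich
\[
  \theta\, r_v\, r(N(v))\;\le\;\sum_{u}a_{uv}r_ur_v=1\;\le\; r_v\, r(N(v)).
\]
Hence $1/r(N(v))\le r_v\le 1/(\theta\,r(N(v)))$. Both inequalities in \eqref{eq:graph-factor-bounds} then reduce to two-sided control of $r(N(v))$: a lower bound $r(N(v))\ge\sqrt{2\gamma n}$ yields the upper bound on $r_v$, and an upper bound $r(N(v))\le\theta^{-1}\sqrt{n/(2\gamma)}$ yields the lower bound. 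Once \eqref{eq:graph-factor-bounds} holds, \eqref{eq:diffuse-X} follows by multiplying: $X_{uv}=a_{uv}r_ur_v$ with $a_{uv}\in[\theta,1]$.

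The combinatorial input I would use is the codegree consequence of \eqref{eq:main-degree}. Any two vertices $u,v$, equal or not, have $|N(u)\cap N(v)|\ge 2\gamma n$. I would exploit this through the two-step matrix $X^2$, which is again symmetric and stochastic, so $(X^2)_{uv}\le1$ for all $u,v$. For $w\in N(u)\cap N(v)$ we have $X_{uw}X_{wv}\ge\theta^2r_ur_vr_w^2$. Together with the row identities, this should let me compare the $r$-mass of any neighbourhood with the $r$-mass of a common neighbourhood of size $\ge2\gamma n$.

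Concretely, I would double count
\[
  \sum_x r_x\, r(N(v)\cap N(x))=\sum_{w\in N(v)} r_w\, r(N(w)),
\]
in which each summand on the right lies in $[1,1/\theta]$ by the sandwich. This gives
\[
  |N(v)|\;\le\;\sum_x r_x\, r(N(v)\cap N(x))\;\le\;|N(v)|/\theta .
\]
Bounding $r(N(v)\cap N(x))\le r(N(v))$ on the left, and estimating the common neighbourhoods from below on the right via $|N(v)\cap N(x)|\ge 2\gamma n$, should produce inequalities of the form $\Sigma\cdot r(N(v))\gtrsim n$ and $\Sigma\cdot r(N(v))\lesssim n/(2\gamma\theta)$. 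These would then be combined with the global identity $\sum_v r_v\,r(N(v))\in[n,n/\theta]$, obtained by summing the sandwich over $v$.

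The main obstacle is exactly this step: converting the cardinality bound $|N(u)\cap N(v)|\ge2\gamma n$ into a bound on $r$-mass, because a priori the small factors could concentrate on the common neighbourhoods. If the double counting above does not close directly, I would first bound the ratio $r_{\max}/r_{\min}$. The vertices $m$ and $M$ attaining the extremes share $2\gamma n$ common neighbours $w$, and each such $w$ satisfies $\theta r_w(r_m+r_M)\le 1$ from its own row. Comparing this with the row of $m$, which gives $1\le r_m r(N(m))$, should give $r_M/r_m=O(1/(\gamma\theta))$. With a bounded ratio, $r(N(v))$ is within constant factors of $n\,r_v$, and the sandwich immediately yields $r_v\asymp n^{-1/2}$. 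Tracking constants carefully should then recover the exact forms $\theta\sqrt{2\gamma/n}$ and $1/(\theta\sqrt{2\gamma n})$.
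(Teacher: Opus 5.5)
Your sandwich $1/r(N(v))\le r_v\le 1/(\theta\,r(N(v)))$ is correct, and so is the double-counting identity. There is, however, a genuine gap: the proposal never proves the one inequality on which everything rests, namely $r(N(v))\ge\sqrt{2\gamma n}$ for every $v$.

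\begin{itemize}
\item \emph{The double counting.} It only yields $\Sigma\,r(N(v))\ge|N(v)|$. The reverse bound $\Sigma\,r(N(v))\lesssim n/(2\gamma\theta)$ would need $r(N(v)\cap N(x))\gtrsim r(N(v))$. As you note yourself, $|N(v)\cap N(x)|\ge2\gamma n$ says nothing about $r$-mass.
\item \emph{The fallback.} The inequality $\theta r_w(r_m+r_M)\le1$ keeps only two of the $\Theta(n)$ terms in the row of $w$. So it gives only $r_w\le1/(\theta r_M)$, which is of order $\sqrt n$ when the truth is $n^{-1/2}$. Any bound on $r(N(m))$ built from it is a factor $n$ too large, and comparing with $1\le r_m\,r(N(m))$ gives only $r_M/r_m=O(n)$.
\item \emph{Subtracting rows instead.} Subtracting the rows of $m$ and $M$, together with $n\,r_mr_M\le1/(\theta(1/2+\gamma))$ from the row of $M$, gives
\[
  \frac{r_M}{r_m}\le\frac1\theta+\frac{1/2-\gamma}{\theta(1/2+\gamma)}\,\frac{r_M}{r_m}.
\]
This closes only when $\theta(1/2+\gamma)>1/2-\gamma$, so not for all fixed $\gamma,\theta$. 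Even where it closes, the constants depend on the ratio bound. Your claim that careful tracking recovers exactly $\theta\sqrt{2\gamma/n}$ and $1/(\theta\sqrt{2\gamma n})$ is unsupported.
\end{itemize}

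The missing idea is to compare $N(v)$ with the set of smallest factors, not with a common neighbourhood. Let $V_{\mathrm{high}}$ be the $\lfloor(1/2-\gamma)n\rfloor$ vertices with largest $r_v$, and let $V_{\mathrm{low}}$ be its complement in $V=V(G)$.
\begin{itemize}
\item By stochasticity and symmetry,
\[
  X(V_{\mathrm{low}},V_{\mathrm{low}})=|V_{\mathrm{low}}|-X(V_{\mathrm{high}},V_{\mathrm{low}})\ge|V_{\mathrm{low}}|-|V_{\mathrm{high}}|\ge2\gamma n.
\]
\item Since $a_{uv}\le1$, we also have $X(V_{\mathrm{low}},V_{\mathrm{low}})\le r(V_{\mathrm{low}})^2$. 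Hence $r(V_{\mathrm{low}})\ge\sqrt{2\gamma n}$.
\item Every non-neighbourhood satisfies $|V\setminus N(v)|\le|V_{\mathrm{high}}|$, and $V_{\mathrm{high}}$ carries the largest factors. Therefore $r(N(v))\ge r(V)-r(V_{\mathrm{high}})=r(V_{\mathrm{low}})\ge\sqrt{2\gamma n}$.
\item Your sandwich then gives the upper bound $r_v\le1/(\theta\sqrt{2\gamma n})$.
\end{itemize}
This is how the paper defuses your concentration worry. Any set whose complement is no larger than $V_{\mathrm{high}}$ has $r$-mass at least $r(V_{\mathrm{low}})$, however the factors are arranged. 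The lower bound then needs no separate control of $r(N(v))$: $r(N(v))\le(n-1)r_{\max}\le\theta^{-1}\sqrt{n/(2\gamma)}$, so $r_v\ge\theta\sqrt{2\gamma/n}$. Finally, \eqref{eq:diffuse-X} follows by multiplication, as you say.
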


\begin{proof}
We first show that the vertices with smaller scaling factors carry
substantial total mass.  Every neighborhood must meet that mass, which
gives the upper factor bound.  The row equations then give the lower one.

Write $N(v)$ for the neighborhood of a vertex $v$.
Let $V_{\mathrm{high}}$ consist of the $\lfloor(1-\alpha)n\rfloor$
vertices with largest $r$-values, and put
$V_{\mathrm{low}}=V(G)\setminus V_{\mathrm{high}}$.  Symmetry and
stochasticity give
$X(V_{\mathrm{low}},V_{\mathrm{low}})
\ge |V_{\mathrm{low}}|-|V_{\mathrm{high}}|\ge2\gamma n$.
Indeed, the mass from $V_{\mathrm{low}}$ to $V_{\mathrm{high}}$ is at most
the total column mass $|V_{\mathrm{high}}|$.
On the other hand, $a_{uv}\le1$ gives
$X(V_{\mathrm{low}},V_{\mathrm{low}})\le r(V_{\mathrm{low}})^2$.  Hence
$r(V_{\mathrm{low}})\ge\sqrt{2\gamma n}$.

For every vertex $v$, the set $V(G)\setminus N(v)$ has at most
$|V_{\mathrm{high}}|$ vertices.  Since $V_{\mathrm{high}}$ contains the
largest factors,
\(\sum_{u\in N(v)}r_u =r(V(G))-r(V(G)\setminus N(v)) \ge r(V(G))-r(V_{\mathrm{high}})=r(V_{\mathrm{low}})\).
The row equation and $a_{uv}\ge\theta$ on the support now give
\(1=r_v\sum_{u\in N(v)}a_{uv}r_u \ge\theta r_v\sqrt{2\gamma n}\),
so $r_v\le(\theta\sqrt{2\gamma n})^{-1}$.  Conversely, the same row
equation and $a_{uv}\le1$ give, with
$r_{\max}=\max_{u\in V(G)}r_u$,
\(1\le r_v(n-1)r_{\max}\),
and therefore
\(r_v\ge\frac{\theta\sqrt{2\gamma n}}{n-1} \ge\theta\sqrt{\frac{2\gamma}{n}}\).
Since $X_{uv}=a_{uv}r_ur_v$ on an edge, these two factor bounds prove
\cref{eq:diffuse-X}.
\end{proof}

We next separate the constant mode.  Recall the complete-graph reference
matrix $P_0$ from \cref{sec:matching-results}, and put $E=X-P_0$.  Both $X$
and $P_0$ are stochastic, so $E\1=0$.  They also have zero diagonal.

To control $E$ spectrally, we use the Laplacian $L_G$ and signless
Laplacian $Q_G$ introduced before the scaling criteria.  The Laplacian
controls fluctuations orthogonal to $\1$, while the signless Laplacian
rules out an almost-bipartite negative mode.

\begin{lemma}[Spectral bounds for the support graph]
\label{lem:centered-gap}
Suppose that \cref{eq:main-degree} holds.  Then
\[
  \lambda_2(L_G)\ge2\gamma n,
  \qquad
  Q_G\succeq2\gamma nI.
\]
\end{lemma}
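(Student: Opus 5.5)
Both bounds follow from the same elementary mechanism: two distinct vertices $u,v$ of $G$ have many common neighbours, since $|N(u)\cap N(v)|\ge d(u)+d(v)-(n-2)\ge 2\gamma n+2$.

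\emph{Signless Laplacian.} We use the quadratic form
\[
  z^{\mathsf T}Q_Gz=\sum_{uv\in E(G)}(z_u+z_v)^2
\]
and aim to show it is at least $2\gamma n\|z\|^2$. The first attempt is to average over common neighbours. For a pair $u\ne v$ and a common neighbour $w$,
\[
  (z_u+z_w)^2+(z_w+z_v)^2\ge\tfrac12(z_u-z_v)^2 .
\]
This controls differences, not sums, so it is not the right mechanism here.

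The direct route works better. Write
\[
  z^{\mathsf T}Q_Gz=\sum_v d(v)z_v^2+2\sum_{uv\in E}z_uz_v .
\]
Compare with the complete graph, where $z^{\mathsf T}Q_{K_n}z=(n-2)\|z\|^2+(\sum z)^2$, and with the complement $\bar G$, which has maximum degree at most $(1/2-\gamma)n-1$. Since $Q_G=Q_{K_n}-Q_{\bar G}$ and $\|Q_{\bar G}\|\le 2\Delta(\bar G)$, we get
\[
  Q_G\succeq (n-2)I-2\Bigl(\bigl(\tfrac12-\gamma\bigr)n-1\Bigr)I=2\gamma nI .
\]

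\emph{Laplacian.} The same comparison applies to $L_G=L_{K_n}-L_{\bar G}$. On $\1^\perp$ we have $L_{K_n}=nI$, and $\|L_{\bar G}\|\le2\Delta(\bar G)\le(1-2\gamma)n-2$. Hence on $\1^\perp$,
\[
  L_G\succeq\bigl(n-(1-2\gamma)n+2\bigr)I\succeq 2\gamma nI ,
\]
which gives $\lambda_2(L_G)\ge2\gamma n$.

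\emph{Facts used.} Both bounds rest on two standard facts: for any graph $H$, the largest eigenvalues of $L_H$ and $Q_H$ are at most $2\Delta(H)$ (by Gershgorin's theorem, or by $(z_u\pm z_v)^2\le2z_u^2+2z_v^2$), and the complement satisfies $\Delta(\bar G)=n-1-\delta(G)\le(1/2-\gamma)n-1$. The only point needing care is the additive constants, and these work out exactly as shown.
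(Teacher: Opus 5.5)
Your argument is correct and is essentially the paper's proof: both compare with the complement via $L_G=L_{K_n}-L_{\overline G}$ on $\1^\perp$ and $Q_G+Q_{\overline G}=(n-2)I+\mathbf J$, and use $\lambda_{\max}(L_{\overline G}),\lambda_{\max}(Q_{\overline G})\le 2\Delta(\overline G)\le 2\bigl((\tfrac12-\gamma)n-1\bigr)$. The opening common-neighbour sentence and the abandoned first attempt play no role and can be deleted; note also that the sentence is wrong for adjacent $u,v$, where the correct bound is only $|N(u)\cap N(v)|\ge d(u)+d(v)-n$, not $d(u)+d(v)-(n-2)$.
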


\begin{proof}
Let $H=\overline G$ be the complement of $G$, and write $\Delta(H)$ for
its maximum degree.  We have
\(\Delta(H)\le(1/2-\gamma)n-1\).
On $\1^\perp$, the identity $L_G=L_{K_n}-L_H$ and the bound
$\lambda_{\max}(L_H)\le2\Delta(H)$ give
$\lambda_2(L_G)\ge n-2\Delta(H)\ge2\gamma n$.
Likewise,
$Q_G+Q_H=(n-2)I+\mathbf J$ and
$\lambda_{\max}(Q_H)\le2\Delta(H)$.  Hence
$Q_G\succeq((n-2)-2\Delta(H))I\succeq2\gamma nI$.
\end{proof}

\subsubsection{The bipartite normalization}

The permanent case uses the same analytic theorem after a different
normalization.  Weighted Sinkhorn scaling first removes the entropy, and a
symmetric dilation then represents the permanent of the scaled matrix as a
hafnian.  We state the estimates needed for this reduction here and defer
their elementary but longer proofs to \cref{app:bipartite-scaling}.

Let $B=(b_{ij})$ be a nonnegative $m$ by $m$ matrix.  Define
$\mathcal D_{\mathrm{bi}}(B)$ to be the polytope of doubly stochastic
matrices whose nonzero entries occur only where $B$ is positive:
\[
  \mathcal D_{\mathrm{bi}}(B)
  =
  \left\{Y\in\R_{\ge0}^{m\times m}:
  Y_{ij}=0\text{ if }b_{ij}=0,\quad
  Y\1=\1,\quad Y^{\mathsf T}\1=\1\right\}.
\]
For $Y\in\mathcal D_{\mathrm{bi}}(B)$, define
$F_B(Y)=\sum_{i,j:b_{ij}>0}Y_{ij}\log(b_{ij}/Y_{ij})$, with a zero
summand when $Y_{ij}=0$.  When $\mathcal D_{\mathrm{bi}}(B)$ is nonempty,
write
$h_B=\max_{Y\in\mathcal D_{\mathrm{bi}}(B)}F_B(Y)$.

Recall that total support means that the bipartite support has a perfect
matching and every positive entry belongs to one.

\begin{proposition}[Bipartite entropy normalization]
\label{prop:bipartite-normalization}
Suppose that $B$ has total support.  Then the weighted entropy maximizer
$Y^*$ in $\mathcal D_{\mathrm{bi}}(B)$ is positive at every positive entry
and has the form
\(Y^*_{ij}=b_{ij}r_ic_j\qquad(b_{ij}>0)\)
for positive row factors $r_i$ and column factors $c_j$.  Moreover,
\begin{equation}
  \operatorname{per}(B)=e^{h_B}\operatorname{per}(Y^*).
  \label{eq:bipartite-entropy-normalization}
\end{equation}
\end{proposition}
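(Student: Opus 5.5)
The plan mirrors the proof of \cref{prop:entropy-normalization}, with the polytope $\mathcal D_{\mathrm{bi}}(B)$ in place of $\mathcal D(G)$ and separate multipliers for row and column constraints.

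\emph{Existence and uniqueness of the maximizer.} First I would show that $\mathcal D_{\mathrm{bi}}(B)$ contains a point positive at every positive entry of $B$. Total support says each positive entry $(i,j)$ lies in some perfect matching $\sigma_{ij}$ of the bipartite support. Averaging the permutation matrices of these $\sigma_{ij}$ over all positive entries gives a doubly stochastic $\bar Y$. It is supported inside the support of $B$ and is positive on all of it. The polytope is compact and nonempty, and $F_B$ is continuous and strictly concave, since $x\mapsto x\log(b/x)$ is strictly concave on $[0,\infty)$ for each $b>0$. Hence the maximizer $Y^*$ exists and is unique.

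\emph{Positivity on the support.} I would argue by contradiction as before. Suppose $Y^*$ vanishes at some positive entry, and move toward $\bar Y$ along $Y(t)=(1-t)Y^*+t\bar Y$. The coordinates where $Y^*$ vanishes contribute $t\log(1/t)\sum\bar Y_{ij}+O(t)$, and the other coordinates change by $O(t)$. So $F_B(Y(t))>F_B(Y^*)$ for small $t$, which is a contradiction.

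\emph{Multiplier form.} Since $Y^*$ lies in the relative interior of the face determined by the support, Lagrange stationarity applies with multipliers $\mu_i$ for the row equations and $\nu_j$ for the column equations. It reads $\log(b_{ij}/Y^*_{ij})-1=\mu_i+\nu_j$ on the support. Setting $r_i=e^{-\mu_i-1/2}$ and $c_j=e^{-\nu_j-1/2}$ gives $Y^*_{ij}=b_{ij}r_ic_j$. One caveat is that the equality constraints are linearly dependent, since the row sums and column sums have the same total. This affects only the uniqueness of the multipliers, not their existence, because the constraints are affine.

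\emph{Permanent identity.} Every permutation $\sigma$ with $\prod_i b_{i\sigma(i)}>0$ uses each $r_i$ once and each $c_j$ once. Therefore
\[\prod_i Y^*_{i\sigma(i)}=\Bigl(\prod_i r_i\Bigr)\Bigl(\prod_j c_j\Bigr)\prod_i b_{i\sigma(i)}.\]
Permutations with a zero entry of $B$ contribute zero on both sides, since $Y^*$ vanishes off the support of $B$. Summing over $\sigma$ gives $\operatorname{per}(Y^*)=\prod_i r_i\prod_j c_j\operatorname{per}(B)$.

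It remains to evaluate the prefactor, using that $Y^*$ is doubly stochastic and that $\log(Y^*_{ij}/b_{ij})=\log r_i+\log c_j$:
\[\sum_i\log r_i+\sum_j\log c_j=\sum_{ij}Y^*_{ij}(\log r_i+\log c_j)=\sum_{ij}Y^*_{ij}\log\frac{Y^*_{ij}}{b_{ij}}=-h_B.\]
This yields \cref{eq:bipartite-entropy-normalization}.

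The main obstacle is the positivity step, because it is what justifies using interior Lagrange conditions. Everything after it is a mechanical adaptation of the symmetric case.
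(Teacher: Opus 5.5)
Your proposal is correct and follows essentially the same route as the paper: average supported permutation matrices to obtain a feasible point positive on the support, use the $t\log(1/t)$ boundary argument to place $Y^*$ in the relative interior, read off $Y^*_{ij}=b_{ij}r_ic_j$ from separate row and column multipliers, and compute $\sum_i\log r_i+\sum_j\log c_j=-h_B$ before summing over supported permutations. Your remarks on strict concavity (for uniqueness) and on the linear dependence of the affine constraints spell out points that the paper's terse proof leaves implicit.
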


The factorization follows from the same relative-interior and
Lagrange-multiplier argument used in
\cref{prop:entropy-normalization}, with separate multipliers for rows and
columns.  Existence and uniqueness of the doubly stochastic scaled matrix
under total support are due to Sinkhorn and Knopp
\cite{SinkhornKnopp1967}.  See \cref{app:bipartite-scaling}.  The fixed
degree margin gives the additional
quantitative estimates needed below.

We now impose the degree-margin hypothesis in part~{\rm (b)} of
\cref{thm:main}.  The factors are unchanged if all $r_i$ are multiplied
by one positive constant and all $c_j$ are divided by the same constant.
We remove this ambiguity by requiring
$\prod_i r_i=\prod_j c_j$.  Equivalently, there are vectors $x^*$ and
$y^*$ with $\sum_i x_i^*=\sum_j y_j^*$ such that
$r_i=m^{-1/2}e^{x_i^*}$ and $c_j=m^{-1/2}e^{y_j^*}$.

\begin{proposition}[Uniform bounds for the bipartite scaling]
\label{prop:bipartite-scaling}
Suppose that every row and column of the support of $B$ contains at least
$(1/2+\gamma)m$ entries, that its nonzero entries lie in $[\theta,1]$,
and that $m\ge\lceil(2\gamma)^{-1}\rceil$.  Then $B$ has total support.
Every positive entry satisfies
\begin{equation}
  \frac{2\gamma\theta^3}{m}
  \le Y^*_{ij}\le
  \frac1{2\gamma\theta^2m}.
  \label{eq:bipartite-diffuse}
\end{equation}
Under this normalization, there are positive constants
$c_{\gamma,\theta}$ and $C_{\gamma,\theta}$ such that the individual
factors satisfy
\begin{equation}
  \frac{c_{\gamma,\theta}}{\sqrt m}
  \le r_i,c_j\le
  \frac{C_{\gamma,\theta}}{\sqrt m}.
  \label{eq:bipartite-factor-box}
\end{equation}
Finally, let $\sigma_2(Y^*)$ denote the second-largest singular value of
$Y^*$.  Then
\begin{equation}
  \sigma_2(Y^*)\le1-4\gamma^3\theta^6.
  \label{eq:bipartite-gap}
\end{equation}
\end{proposition}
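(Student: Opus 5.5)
Total support comes first, by the same Dirac-type argument as \cref{lem:edge-extendability}, now using Hall's theorem. Fix a positive entry $(i,j)$ and delete row $i$ and column $j$. Every remaining row and column has at least $(1/2+\gamma)m-1\ge(m-1)/2$ support entries, since $m\ge(2\gamma)^{-1}$. For a bipartite graph with $m-1$ vertices per side and minimum degree at least $(m-1)/2$, Hall's condition holds: take a row set $S$. If $|S|\le(m-1)/2$, any single row already has enough neighbors. Otherwise every column has a neighbor in $S$, because the column's non-neighborhood among the rows has fewer than $|S|$ elements. So $(i,j)$ lies in a perfect matching. \Cref{prop:bipartite-normalization} then supplies $Y^*$ with $Y^*_{ij}=b_{ij}r_ic_j$.

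Next comes the entry bound \eqref{eq:bipartite-diffuse}, copying \cref{lem:diffuse-scaling}. The one-sided structure now requires a bilinear version. Let $R_{\mathrm{low}}$ be the rows outside the $\lfloor(1/2-\gamma)m\rfloor$ rows with the largest $r$, and define $C_{\mathrm{low}}$ for columns likewise. Column sums equal $1$, so the mass from $R_{\mathrm{low}}$ into the high columns is at most $|C_{\mathrm{high}}|$. Row sums equal $1$, so $Y^*(R_{\mathrm{low}},C_{\mathrm{low}})\ge|R_{\mathrm{low}}|-|C_{\mathrm{high}}|\ge2\gamma m$. Since $b\le1$, this mass is at most $r(R_{\mathrm{low}})c(C_{\mathrm{low}})$.

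For each row $i$, the neighborhood misses at most $|C_{\mathrm{high}}|$ columns, so $1=r_i\sum_{j\in N(i)} b_{ij}c_j\ge\theta r_i c(C_{\mathrm{low}})$. The same holds for columns. Hence $r_ic_j\le(\theta^2 r(R_{\mathrm{low}})c(C_{\mathrm{low}}))^{-1}\le(2\gamma\theta^2m)^{-1}$, which is the upper entry bound.

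For the lower bound, use $1\le r_i\, m\, c_{\max}$ together with the symmetric inequality $1\le c_j\, m\, r_{\max}$. Combining these with the upper bounds via the products gives $r_ic_j\ge 1/(m^2r_{\max}c_{\max})$. Then $r_{\max}c_{\max}\le(2\gamma\theta^2m)^{-1}$, applied to the pair attaining the maxima, yields $r_ic_j\ge 2\gamma\theta^2/m$ and so $Y^*_{ij}\ge2\gamma\theta^3/m$.

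Now the individual factor bounds \eqref{eq:bipartite-factor-box}. The previous step controls every product $r_ic_j$ up to constants: $r_{\max}c_{\max}\lesssim1/m$ and $r_{\min}c_{\min}\gtrsim1/m$. Rows satisfy $1\le m r_i c_{\max}$, so $r_{\min}\ge 1/(mc_{\max})$ and $r_{\max}/r_{\min}\le m r_{\max}c_{\max}=O(1)$. The same argument gives $c_{\max}/c_{\min}=O(1)$. Under the gauge $\prod r_i=\prod c_j$, the geometric means of the $r_i$ and of the $c_j$ coincide; call the common value $g$. Moreover $g^2$ is comparable to every product $r_ic_j$, which is $\Theta(1/m)$. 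Hence each $r_i$ and $c_j$ is within a constant factor of $g=\Theta(m^{-1/2})$.

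The main obstacle is the spectral gap \eqref{eq:bipartite-gap}. I would show that for unit $u\perp\1$ and every unit $v$ (the top singular pair is $(\1,\1)/\sqrt m$), $|u^{\mathsf T}Y^*v|^2\le1-4\gamma^3\theta^6$. Write $2|u^{\mathsf T}Yv|\le\sum_{ij}Y_{ij}(u_i^2+v_j^2)-\sum_{ij}Y_{ij}(u_i-v_j)^2$, which gives $|u^{\mathsf T}Yv|\le1-\frac12\sum Y_{ij}(u_i- v_j)^2$ after replacing $v$ by $\pm v$. It then suffices to prove the Dirichlet lower bound $\sum_{ij}Y_{ij}(u_i-v_j)^2\ge c$. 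Use $Y_{ij}\ge2\gamma\theta^3/m$ on the support. Any two rows share at least $2\gamma m$ common columns, so $\sum_{j}Y_{ij}(u_i-v_j)^2+Y_{i'j}(u_{i'}-v_j)^2\ge\frac{2\gamma\theta^3}{m}\cdot\frac{2\gamma m}{2}(u_i-u_{i'})^2$. Averaging over pairs $i,i'$ and using $\sum_{i,i'}(u_i-u_{i'})^2=2m$ for $u\perp\1$ yields a constant of order $\gamma^2\theta^3$. Finally, I would track the constants to reach $4\gamma^3\theta^6$, possibly with slack. I expect this step to need care with the singular-value reduction and the constants.
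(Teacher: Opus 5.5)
Your total-support, mass-comparison and entry-bound steps are correct and essentially follow the paper. For the factor box you argue more directly than the paper, which bounds factor ratios through common supported entries and then uses the gauge in log coordinates. You use $r_{\max}/r_{\min}\le m\,r_{\max}c_{\max}\le(2\gamma\theta^2)^{-1}$, and likewise for columns, together with the common geometric mean $g$ and $r_{\min}c_{\min}\le g^2\le r_{\max}c_{\max}$. This is valid and gives explicit constants $(2\gamma\theta^2)^{\pm3/2}$.

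The gap is in the spectral step. The statement fixes the constant $4\gamma^3\theta^6$, and your sketch does not reach it.

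First, your stated target $|u^{\mathsf T}Y^*v|^2\le1-4\gamma^3\theta^6$ would only give $\sigma_2(Y^*)\le\sqrt{1-4\gamma^3\theta^6}$, which is larger than $1-4\gamma^3\theta^6$. The identity you then use, $u^{\mathsf T}Y^*v=1-\frac12D$ with $D=\sum_{ij}Y^*_{ij}(u_i-v_j)^2$, is linear in $u^{\mathsf T}Y^*v$. So what you actually need is $D\ge8\gamma^3\theta^6$.

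Second, write $D_i$ for the row-$i$ part of $D$. Your pairwise averaging gives only $(m-1)D\ge2\gamma^2\theta^3m$, that is, $\sigma_2\le1-\gamma^2\theta^3\,m/(m-1)$. This does not imply the claim whenever $4\gamma\theta^3>m/(m-1)$, for instance $\gamma=2/5$, $\theta=1$, $m\ge3$. Sharper bookkeeping of the same inequalities cannot fix this. Take $u=\sqrt{m/(m-1)}\,(e_1-\1/m)$: every constraint $D_i+D_{i'}\ge2\gamma^2\theta^3(u_i-u_{i'})^2$ is met with $D_1=2\gamma^2\theta^3m/(m-1)$ and $D_i=0$ otherwise. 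The loss comes from $(a-v)^2+(b-v)^2\ge\frac12(a-b)^2$, which cannot be tight for all pairs sharing the same $v_j$.

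Work column by column instead. Let $N_j$ be the set of rows supported in column $j$. Then
\[
  \sum_{i\in N_j}(u_i-v_j)^2
  \ge\frac1{2|N_j|}\sum_{i,i'\in N_j}(u_i-u_{i'})^2
  \ge\frac1{2m}\sum_{i,i'\in N_j}(u_i-u_{i'})^2 .
\]
Sum over $j$, using $Y^*_{ij}\ge2\gamma\theta^3/m$ on the support and at least $2\gamma m$ common support columns for any two rows. This gives $D\ge\frac{2\gamma\theta^3}{m}\cdot\frac{1}{2m}\cdot2\gamma m\sum_{i,i'}(u_i-u_{i'})^2=4\gamma^2\theta^3$, since $\sum_{i,i'}(u_i-u_{i'})^2=2m$ for unit $u\perp\1$. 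Hence $\sigma_2(Y^*)\le1-2\gamma^2\theta^3$. This implies the claim because $2\gamma\theta^3<1$ gives $2\gamma^2\theta^3\ge4\gamma^3\theta^6$.

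With this repair, your route is a genuine alternative to the paper's. The paper passes to $W=Y^*(Y^*)^{\mathsf T}$, where common columns give $W_{ii'}\ge8\gamma^3\theta^6/m$. The Laplacian form $z^{\mathsf T}(I-W)z=\sum_{i<i'}W_{ii'}(z_i-z_{i'})^2$ then gives $\sigma_2^2\le1-8\gamma^3\theta^6$, and $\sqrt{1-x}\le1-x/2$ finishes. That argument is shorter and avoids the bilinear reduction, but it uses the entry lower bound twice and then a square root. Your Dirichlet form uses the entry bound once and yields the better gap $2\gamma^2\theta^3$.
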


The proof combines Hall's theorem, a mass comparison for the row and
column factors, and a common-neighbor estimate for the second singular
value.  See \cref{app:bipartite-scaling}.  Only the entrywise bounds, the
uniform bounds on the factors, and the singular-value gap will be used in
the main argument.

To apply the matrix theorem, represent the permanent as a hafnian by
symmetric dilation.  Recall $P_{\mathrm b}$ from
\cref{sec:matching-results}, and put $D=Y^*-\mathbf J_m/m$ and
\[
  E_{\mathrm b}
  =
  \begin{pmatrix}
  0&D\\
  D^{\mathsf T}&0
  \end{pmatrix}.
\]
Then $E_{\mathrm b}\1=0$, its diagonal blocks vanish, and, in the ambient
dimension $n=2m$, one has
$\max_{i,j}|(E_{\mathrm b})_{ij}|\le(\gamma\theta^2n)^{-1}$ and
$\|E_{\mathrm b}\|\le1-4\gamma^3\theta^6$.
Indeed, the entrywise estimate follows from
\cref{eq:bipartite-diffuse}, while
$\|Y^*-\mathbf J_m/m\|=\sigma_2(Y^*)$ and the symmetric dilation has the same norm.
Moreover, $\haf(P_{\mathrm b}+E_{\mathrm b})=\operatorname{per}(Y^*)$
and $\haf(P_{\mathrm b})=m!/m^m$.

Thus the bounds on the factors control the entries of $E_{\mathrm b}$,
while the singular-value gap controls its operator norm.  The hypotheses of
the matrix theorem follow.  This dilation is the only step in which the
bipartite reduction differs from the symmetric one.

\subsubsection{Proof of the scaling criteria}

The normalizations above give the identities and quantitative bounds used
in the two criteria stated in \cref{sec:matching-results}.  We now verify
their remaining hypotheses.

\begin{proof}[Proof of \cref{thm:scaling-criterion}]
It suffices to verify the hypotheses of the matrix theorem after the exact
entropy normalization.  The bounds on the factors give the entrywise bound,
while
the Laplacian and signless Laplacian keep the spectrum of $E=X-P_0$ away
from $+1$ and $-1$, respectively.

The support hypothesis allows us to apply
\cref{prop:entropy-normalization}.  The factor bounds give
$\theta c_{\mathrm{lo}}^2/n\le X_{uv}\le c_{\mathrm{hi}}^2/n$ for
$uv\in E(G)$.
For $z\perp\1$, comparison with the two unweighted quadratic forms gives
\[
  z^{\mathsf T}(I-X)z
  \ge\frac{\theta c_{\mathrm{lo}}^2}{n}z^{\mathsf T}L_Gz,
  \qquad
  z^{\mathsf T}(I+X)z
  \ge\frac{\theta c_{\mathrm{lo}}^2}{n}z^{\mathsf T}Q_Gz.
\]
Thus the spectrum of $X$ on $\1^\perp$ lies in
\([-1+\theta c_{\mathrm{lo}}^2q,\,1-\theta c_{\mathrm{lo}}^2q']\).
Put $E=X-P_0$.  Choose fixed constants satisfying
$\beta>c_{\mathrm{hi}}^2+2$ and
$0<\kappa<\min\{1/2,(\theta c_{\mathrm{lo}}^2/2)\min\{q,q'\}\}$,
and enlarge a fixed threshold $n_0$ so that
$1/(n-1)\le\kappa$ for $n\ge n_0$.
These choices verify every condition in
\cref{eq:matrix-hypotheses}, as well as $P_0+E=X\ge0$.
Although $E$ need not be rational, \cref{thm:matrix} applies to it in real
arithmetic.  The analysis in \cref{app:bit-complexity} shows that the
scaling and the truncated formula can
be evaluated with polynomially many bits.  Restoring the entropy factor
recovers $\haf(A)$.  The remaining finitely many dimensions
are evaluated exactly.  The exact branch returns zero when no perfect
matching exists.  If one exists, both branches return a positive number.
\end{proof}

The normalization identity and symmetric dilation above reduce the permanent
to \cref{thm:matrix}.  The bounds on the factors supply the entrywise bound
and the singular-value gap supplies the norm bound.  The formal reduction is
recorded in \cref{app:bipartite-scaling}, and its finite-precision
implementation follows from \cref{app:bit-complexity}.

This completes the normalization stage.  In either model, the original
count is an explicit entropy factor times $\haf(P+E)$, where
$P\in\{P_0,P_{\mathrm b}\}$ and $E$ satisfies
\cref{eq:matrix-hypotheses} with constants determined only by the fixed
input parameters.

\subsection{An inverse-Gamma representation}
\label{sec:matching-radial}

The proof of \cref{thm:matrix} begins with an exact one-dimensional
representation.  Fix one of the two reference matrices $P$ from
\cref{sec:matching-results} and a perturbation $E$ satisfying
\cref{eq:matrix-hypotheses}.  Expanding $\haf(P+E)$ according to the edges
chosen from $E$ leaves coefficients that depend only on the number of chosen
edges.  We identify these coefficients with negative moments of one Gamma
random variable, obtaining an average of a signed matching generating
polynomial.

Fix $\beta\ge0$ and $\kappa>0$, and suppose that $E$ satisfies
\cref{eq:matrix-hypotheses}.  If $\kappa=1$, then $E=0$ and the hafnian is
explicit.  If $\kappa>1$, there are no admissible inputs.  We therefore
assume $0<\kappa<1$.  To
treat the two reference matrices in one formula, let $\tau=0$ for the
complete-graph reference $P=P_0$, and let $\tau=1$ for the
complete-bipartite reference $P=P_{\mathrm b}$.  Define
\(a_n=\frac{n+1+\tau}{2}, \qquad b_n=\frac{n-1+\tau}{2}\),
and let $U$ be a Gamma random variable with shape $a_n$ and rate $b_n$.
The variable $U$ therefore has mode $(a_n-1)/b_n=1$ in both cases.  Thus the Gamma mass
is centered at the parameter value where the subsequent subset expansion
is controlled.  For a matching
$M$ on $[n]$, let
$\mathrm{wt}_E(M)$ denote the product of the entries $E_{uv}$ over its
edges.  Define the weighted matching generating polynomial
\[
  Z_E(t)=\sum_{M\text{ a matching}}t^{|M|}\mathrm{wt}_E(M).
\]
We normalize the desired hafnian by its reference value and write
\(Q_E=\frac{\haf(P+E)}{\haf(P)}\).

\begin{lemma}[Inverse-Gamma identity]
\label{lem:radial-identity}
The normalized hafnian has the exact representation
\begin{equation}
  Q_E=\E Z_E(U^{-1}).
  \label{eq:radial-identity}
\end{equation}
For the complete-graph reference,
$\haf(P)=(n-1)!!(n-1)^{-n/2}$.  For the complete-bipartite reference,
where $n=2m$, one has $\haf(P)=m!/m^m$.
\end{lemma}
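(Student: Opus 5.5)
We expand $\haf(P+E)$ multilinearly over perfect matchings of $K_n$: each edge contributes either $P_{uv}$ or $E_{uv}$. Grouping by the set $M$ of edges that take the $E$-factor gives
\[
  \haf(P+E)=\sum_{M}\mathrm{wt}_E(M)\,\haf\bigl(P[V\setminus V(M)]\bigr),
\]
where $M$ ranges over matchings and $V(M)$ is the set of covered vertices. In the bipartite case the block structure of $E$ forces $M$ to be bipartite. The remaining vertex set is then balanced, with $m-k$ vertices on each side when $|M|=k$.

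Next we compute the reference coefficients. For $P_0$, the uncovered set has $n-2k$ vertices, every pair is joined by weight $1/(n-1)$, and there are $(n-2k-1)!!$ perfect matchings. So the coefficient is $c_k=(n-2k-1)!!\,(n-1)^{-(n/2-k)}$. For $P_{\mathrm b}$, the coefficient is $c_k=(m-k)!\,m^{-(m-k)}$. Setting $k=0$ gives the stated values $\haf(P)=(n-1)!!(n-1)^{-n/2}$ and $m!/m^m$. Hence
\[
  Q_E=\sum_M \mathrm{wt}_E(M)\,\frac{c_{|M|}}{c_0},
\]
and it remains to prove $c_k/c_0=\E U^{-k}$.

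For the complete graph, the ratio of double factorials is
\[
  \frac{(n-2k-1)!!}{(n-1)!!}=2^{-k}\frac{\Gamma((n+1)/2-k)}{\Gamma((n+1)/2)} .
\]
This follows from $(2j-1)!!=2^j\Gamma(j+1/2)/\sqrt\pi$ with $j=n/2$ and $j=n/2-k$. Multiplying by $(n-1)^k$ gives
\[
  \frac{c_k}{c_0}=\Bigl(\frac{n-1}{2}\Bigr)^k\frac{\Gamma(a_n-k)}{\Gamma(a_n)},\qquad a_n=\frac{n+1}{2}.
\]
This is exactly $\E U^{-k}=b_n^k\Gamma(a_n-k)/\Gamma(a_n)$ for $U\sim\mathrm{Gamma}(a_n,b_n)$ with $b_n=(n-1)/2$. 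For the bipartite case,
\[
  \frac{c_k}{c_0}=m^k\frac{(m-k)!}{m!}=m^k\frac{\Gamma(m+1-k)}{\Gamma(m+1)},
\]
which matches $a_n=m+1=(n+2)/2$ and $b_n=m=n/2$, i.e.\ $\tau=1$.

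The only point needing care is finiteness of the moments. The negative moment $\E U^{-k}$ is finite precisely when $k<a_n$. Matchings have $k\le n/2<a_n$, so every moment used is finite. Since $Z_E$ is a polynomial, exchanging the finite sum over $M$ with the expectation is immediate, and this gives $Q_E=\E Z_E(U^{-1})$. There is no real obstacle here; the main risk is an off-by-one in the Gamma parameters, which the explicit computations above settle.
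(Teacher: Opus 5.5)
Your proposal is correct and follows the paper's proof: expand $\haf(P+E)$ by the edges selected from $E$, identify the reference coefficient $(n-1)^{-(n/2-k)}(n-2k-1)!!$ (respectively $m^{-(m-k)}(m-k)!$), and match its ratio to the negative Gamma moment $b_n^k\Gamma(a_n-k)/\Gamma(a_n)$. Your explicit check that $k\le n/2<a_n$, so that every moment used is finite, is a small point the paper leaves implicit.
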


\begin{proof}
We compare the coefficients of $Z_E$ with the negative
moments of $U$, treating the two reference matrices separately.

For $0\le k\le n/2$, let $m_k(E)$ be the sum of $\mathrm{wt}_E(M)$ over all
$k$-matchings.  In the complete-graph case, expanding according to the
edges on which $E$ is selected gives
\begin{align*}
  \haf(P_0+E)&=\sum_{k=0}^{n/2}
  (n-1)^{-(n/2-k)}(n-2k-1)!!\,m_k(E),\\
  \E U^{-k}
  &=\left(\frac{n-1}{2}\right)^k
   \frac{\Gamma((n+1)/2-k)}{\Gamma((n+1)/2)}
  =(n-1)^k\frac{(n-2k-1)!!}{(n-1)!!}.
\end{align*}
Here and below, $(-1)!!=1$.
This proves the identity for $P_0$.

Now let $n=2m$ and $P=P_{\mathrm b}$.  The block condition on $E$ makes
every matching selected from $E$ bipartite.  A $k$-matching leaves
$m-k$ vertices on each side, which the reference matrix matches in
$(m-k)!$ ways.  Consequently,
\begin{align*}
  \haf(P_{\mathrm b}+E)
  &=\sum_{k=0}^m m^{-(m-k)}(m-k)!\,m_k(E),\\
  \E U^{-k}
  &=m^k\frac{\Gamma(m+1-k)}{\Gamma(m+1)}
  =m^k\frac{(m-k)!}{m!}.
\end{align*}
Here $U$ has shape $m+1$ and rate $m$.
Since $\haf(P_{\mathrm b})=m!/m^m$, substitution into $Z_E$ proves the
bipartite case.
\end{proof}

The inverse-Gamma identity reduces the matrix problem to controlling
$Z_E(t)$ for parameters $t$ near one.  We represent this polynomial by
Wick's rule, using the complex Gaussian convention introduced at the
beginning of \cref{sec:gaussian-principle}.  In particular,
$A_E=E_+^{1/2}+iE_-^{1/2}$ has bilinear second-moment matrix $E$.

Let $\xi=A_Ex$ and define the energy
$\mathcal E_E(x)=x^{\mathsf T}|E|x$.  Wick's rule gives
\[
  Z_E(t)=\E_x\prod_{i=1}^n(1+\sqrt t\,\xi_i)
  \qquad(t\ge0).
\]
Since $E\1=0$, one has $A_E\1=0$ and hence $\sum_i \xi_i=0$.  The squared
moduli of the factors therefore satisfy
\(\sum_{i=1}^n|1+\sqrt t\,\xi_i|^2 =n+t\sum_{i=1}^n|\xi_i|^2 =n+t\,x^{\mathsf T}|E|x\).
Applying AM--GM now gives
\begin{equation}
  |Z_E(t)|\le
  \E_x\left(1+\frac{t\mathcal E_E(x)}{n}\right)^{n/2}.
  \label{eq:radial-amgm}
\end{equation}

The density of $U$ is
\begin{equation}
  p_{n,\tau}(u)=\frac{b_n^{a_n}}{\Gamma(a_n)}
  u^{a_n-1}e^{-b_nu}\qquad(u>0).
  \label{eq:gamma-density}
\end{equation}

For an event $\mathcal A$, the notation $\ind{\mathcal A}$ denotes its
indicator.

\begin{proposition}[Localization of the Gamma average]
\label{prop:radial-localization}
Fix $0\le\beta<\infty$ and $0<\kappa<1$.  Suppose that $E$ satisfies the
four conditions in \cref{eq:matrix-hypotheses}.  Choose fixed numbers
$r_0$ and
$r_1$ with
\(0<1-\kappa<r_0<1<r_1\).
There are constants $c>0$ and $n_0$, depending only on
$\beta,\kappa,r_0,r_1$, such that
\begin{equation}
  \E\left[|Z_E(U^{-1})|
  \ind{U\notin[r_0,r_1]}\right]
  \le e^{-cn}
  \label{eq:radial-tail}
\end{equation}
whenever $n\ge n_0$.
\end{proposition}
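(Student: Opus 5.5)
We start from the AM--GM bound \eqref{eq:radial-amgm} with $t=U^{-1}$, and integrate out the Gaussian first, conditionally on $U=u$. Using $(1+y)^{n/2}\le e^{ny/2}$ is too crude for large $u^{-1}$, so we treat the two tails separately.

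\emph{Upper tail} $u>r_1$: here $t=1/u<1$. We use $(1+t\mathcal E/n)^{n/2}\le e^{t\mathcal E/2}$ and integrate with respect to the Gaussian. Since $\|E\|\le1-\kappa<1$ and $t\le1$, we get $\E_x e^{t x^{\mathsf T}|E|x/2}=\det(I-t|E|)^{-1/2}$. By \cref{lem:partial-product-envelope}, $\operatorname{tr}|E|\le\beta\sqrt n$, so this determinant factor is at most $e^{C\sqrt n}$ with $C$ depending on $\beta,\kappa$. The Gamma tail satisfies $\Pr(U>r_1)\le e^{-c_1n}$ by a Chernoff bound, since $U$ has mean and mode near $1$ and shape and rate of order $n/2$. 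Together these give $e^{-c_1n+C\sqrt n}$.

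\emph{Lower tail} $u<r_0$: here $t$ may be huge, and this is the main obstacle. The exponential bound now fails because $t\|E\|$ may exceed $1$. Instead we keep the power form and bound
\[
\E_x(1+t\mathcal E/n)^{n/2}.
\]
The plan is to combine the $u$-integral with the $x$-integral explicitly. Multiplying by the density \eqref{eq:gamma-density} gives the integrand
\[
u^{a_n-1}e^{-b_nu}\,(1+\mathcal E/(nu))^{n/2}=u^{a_n-1-n/2}(u+\mathcal E/n)^{n/2}e^{-b_nu}.
\]
We diagonalize $|E|$ with eigenvalues $\mu_j\le1-\kappa$ and write $\mathcal E/n=\sum_j\mu_jy_j^2/n$. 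Typically $\mathcal E/n\approx\operatorname{tr}|E|/n=O(n^{-1/2})$, but $\mathcal E$ can be as large as $(1-\kappa)\|x\|^2\approx(1-\kappa)n$.

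Bounding $\mathcal E\le(1-\kappa)\|x\|^2$ gives $(u+\mathcal E/n)^{n/2}\le(u+(1-\kappa)\|x\|^2/n)^{n/2}$. This reduces everything to a two-variable integral over $u$ and $\rho=\|x\|^2/n$, whose exponent after Laplace asymptotics is
\[
\tfrac n2\bigl[\log(u+(1-\kappa)\rho)+\log\rho-\rho+1-u+O(1/n)\bigr]
\]
(chi-square density times Gamma density, normalized). Denote by $\Phi(u,\rho)$ the bracketed function. At the typical point $\rho=1$ this crude bound gives $\log(u+1-\kappa)$, which is negative only when $u<\kappa$. So it does not by itself give $\Phi<0$ on all of $u<r_0$; the point $u=1$ is only marginal after the crude bound.

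To fix this, we use the sharper typical estimate $\mathcal E\approx\operatorname{tr}|E|\ll n$ off a large-deviation event. We split according to whether $\mathcal E\le\eta n$ for a small fixed $\eta$.

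\emph{On $\mathcal E\le\eta n$}: we get $(1+\eta/u)^{n/2}$ times $u^{a_n-1}e^{-b_nu}$. Maximizing over $u<r_0$ with $r_0<1$, the exponent $\tfrac n2[\log(u+\eta)-u+1]$ is strictly negative for $u\le r_0$ and small $\eta$, because $\log u-u+1<0$ strictly away from $u=1$. This also handles $u\to0$, since $a_n-1-n/2\ge-1/2$ and the $u^{-1/2}$ singularity is integrable.

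\emph{On $\mathcal E>\eta n$}: by Cauchy--Schwarz in $x$, we pair the crude bound $(1+(1-\kappa)\|x\|^2/(nu))^{n}$ with $\Pr(\mathcal E>\eta n)^{1/2}$. Since $\E e^{\lambda\mathcal E}\le e^{C\sqrt n}$ for $\lambda<1/(2(1-\kappa))$, this probability is $\le e^{-c\eta n}$. For the crude factor, after the $u$-integration we get $\E_x(\ldots)$ of order $e^{C'n}$, with $C'$ depending on $\kappa$. This is the weak point: $C'$ might exceed $c\eta/2$.

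If so, we instead interpolate. Write $(u+\mathcal E/n)\le(u+\eta)+(\mathcal E/n)\mathbf 1$, tilt with $e^{\lambda\mathcal E}$ where $\lambda$ is chosen just below $1/(2\|E\|)$, and use the margin $1-\kappa<r_0$. The point is that $\sup_{u}\{\log(u+s)-u\}$ against $e^{-\lambda n s}$ is controlled because $\lambda(1-\kappa)... <1/2$ matches the hypothesis $1-\kappa<r_0$. We would optimize this one-dimensional exponent in $s=\mathcal E/n$ to confirm strict negativity. This is where the choice $r_0>1-\kappa$ should enter, and it is the step we expect to require the most care.
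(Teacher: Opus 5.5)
Your upper-tail argument is correct and is the paper's: for $U\ge r_1$ you bound $|Z_E(U^{-1})|$ by $\det(I-|E|/r_1)^{-1/2}=e^{O(\sqrt n)}$ and multiply by the Gamma Chernoff bound. Your estimate on the low-energy event $\mathcal E\le\eta n$ of the lower tail is also fine. The gap is the high-energy event $\mathcal E>\eta n$ with $U<r_0$. This is where all the difficulty lies, and it is the only place the hypothesis $r_0>1-\kappa$ is used.

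Your Cauchy--Schwarz step fails for the reason you suspect. The crude factor $\bigl(\E_x(1+(1-\kappa)\|x\|^2/(nu))^n\bigr)^{1/2}$, integrated against the Gamma density, grows at an order-one exponential rate that does not depend on $\eta$. By Jensen this rate is at least $\frac12[\log(r_0+1-\kappa)-r_0+1]$, which is positive, for instance, whenever $\kappa\le1/2$. Your bound on $\Pr(\mathcal E>\eta n)^{1/2}$ decays only at rate $O(\eta)$, and $\eta$ must be small for the low-energy piece to have a negative exponent.

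The replacement ``interpolation'' is not yet an argument. The inequality written there is unspecified, it contains an ellipsis, and the one-dimensional optimization in $s=\mathcal E/n$ that is supposed to confirm strict negativity is never carried out. So the lower tail is unproved as it stands.

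The missing idea is a single linearization that makes the energy split unnecessary. Write $\mathcal E=\mathcal E_E(x)$. The tangent line of $\log$ at $r_0$ gives, for $0<u\le r_0$ and every $x$,
\[
  \Bigl(u+\frac{\mathcal E}{n}\Bigr)^{n/2}
  \le r_0^{n/2}\exp\Bigl\{\frac{n(u-r_0)}{2r_0}+\frac{\mathcal E}{2r_0}\Bigr\}.
\]
This separates $u$ from $x$. The Gaussian factor is
\[
  \E_xe^{\mathcal E/(2r_0)}=\det(I-|E|/r_0)^{-1/2}\le\exp\{\tr|E|/(2(r_0-\|E\|))\}=e^{O(\sqrt n)}.
\]
This is exactly where $r_0>1-\kappa\ge\|E\|$ enters; it is the tilt $\lambda=1/(2r_0)<1/(2\|E\|)$ you were reaching for.

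Since $a_n-1-n/2=(\tau-1)/2$, what remains is
\[
  \frac{b_n^{a_n}}{\Gamma(a_n)}\,r_0^{n/2}\int_0^{r_0}u^{(\tau-1)/2}e^{-n/2+\lambda_nu}\,du,
  \qquad \lambda_n=\frac{n}{2r_0}-b_n=\Omega(n).
\]
The exponent is increasing in $u$, so the integral is $O(e^{-b_nr_0})$. Stirling then turns the whole lower-tail contribution into $\exp\{-\frac n2(r_0-1-\log r_0)+O(\sqrt n+\log n)\}$. This is a strictly positive linear rate, valid for all values of $\mathcal E$ at once.
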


The upper and lower tails are controlled by combining
\cref{eq:radial-amgm} with the corresponding Gamma Chernoff bounds.  The
Gaussian factor contributes only $e^{O(\sqrt n)}$, while the Gamma density
has a strictly positive linear rate away from its mode at one.  The details
are given in \cref{subsec:gamma-localization-proof}.

Henceforth we work only on the compact parameter interval
\(\mathcal I=[1/r_1,1/r_0]\).
For every $t\in\mathcal I$,
\(t\|E\|\le\frac{1-\kappa}{r_0}<1\).
This strict margin permits the Gaussian change of measure used next.

\subsection{Matching-specific Gaussian completion}
\label{sec:matching-completion}

The goal of this subsection is to turn every retained parameter value into
an admissible Gaussian product, uniformly over the localization interval,
and then derive a real formula for each retained coefficient.

The Wick representation still contains one factor
$1+\sqrt t\,\xi_i$ at every vertex.  Centering removes their combined
linear term.  We then absorb the quadratic term into the Gaussian measure,
leaving a residual factor whose Taylor expansion begins in degree three.
For $t\in\mathcal I$, define the transformed matrix, the corresponding
single-coordinate factor, and its nonconstant part by
\[
  K_t=tE(I+tE)^{-1},\qquad
  g(z)=(1+z)e^{-z+z^2/2},
  \qquad f(z)=g(z)-1.
\]
The key point is that $f(z)=O(z^3)$ at the origin.

\begin{lemma}[Matching scalar factor]
\label{lem:matching-factor}
The function $g$ is admissible with envelope weights
$a_{\mathrm R}=1$ and $a_{\mathrm I}=0$, vanishing order $\ell=3$,
$C_f=3$, and $\delta_f=1/2$.
\end{lemma}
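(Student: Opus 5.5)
Three things must be verified: $g(0)=1$ and $g$ is entire; the global envelope $|g(z)|\le e^{(\operatorname{Re}z)^2}$ on all of $\C$; and the local bound $|g(z)-1|\le3|z|^3$ for $|z|\le1/2$. The first is immediate, since $g$ is a polynomial times an exponential of a polynomial and $g(0)=1$.

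\emph{Global envelope.} Write $z=x+iy$. Then $\operatorname{Re}(-z+z^2/2)=-x+(x^2-y^2)/2$, so
\[
  |g(z)|^2=\bigl((1+x)^2+y^2\bigr)e^{-2x+x^2-y^2}.
\]
The claim $|g(z)|^2\le e^{2x^2}$ is therefore equivalent to
\[
  \bigl((1+x)^2+y^2\bigr)e^{-y^2}\le e^{x^2+2x}=e^{(1+x)^2-1}.
\]
Set $u=(1+x)^2\ge0$ and $v=y^2\ge0$. The inequality becomes $(u+v)e^{-v}\le e^{u-1}$. We use $w\le e^{w-1}$ for real $w$, which holds because $e^{w-1}$ is convex with tangent line $w$ at $w=1$. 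This gives $u+v\le e^{u+v-1}$, hence $(u+v)e^{-v}\le e^{u-1}$. Taking square roots yields the envelope with $(a_{\mathrm R},a_{\mathrm I})=(1,0)$.

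\emph{Local vanishing.} Put $h(z)=\log(1+z)-z+z^2/2$ on $|z|<1$. Then
\[
  h(z)=\sum_{k\ge3}(-1)^{k+1}z^k/k,
\]
so for $|z|\le1/2$,
\[
  |h(z)|\le\frac{|z|^3}{3}\sum_{j\ge0}2^{-j}=\frac{2}{3}|z|^3\le\frac1{12}.
\]
Since $g=e^{h}$, we have $|g(z)-1|\le|h|e^{|h|}\le\frac23e^{1/12}|z|^3<3|z|^3$. This gives $\ell=3$, $C_f=3$ and $\delta_f=1/2$. The constants are generous, so there is no real obstacle. The only step needing care is arranging the envelope inequality so that the elementary bound $w\le e^{w-1}$ applies cleanly. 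One should also note that $\operatorname{Re}$ and $\operatorname{Im}$ are taken with $z=x+iy$, so that for negative $x$ the factor $e^{-2x}$ is absorbed correctly; the substitution $u=(1+x)^2$ handles every real $x$ uniformly.
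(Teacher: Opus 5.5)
Your proof is correct and follows essentially the same route as the paper. Your inequality $w\le e^{w-1}$ is the exponentiated form of the paper's $\log u\le u-1$ applied to $u=|1+z|^2$, and your local bound uses the same Taylor series of $\log(1+z)-z+z^2/2$ together with $|e^h-1|\le|h|e^{|h|}$. The differences are cosmetic: your bound of $\tfrac23|z|^3$ on the logarithm is sharper than the paper's $2|z|^3$, and you treat $z=-1$ directly rather than by continuity.
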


The elementary proof is given in
Appendix~\ref{subsec:matching-factor-proof}.

\begin{lemma}[Gaussian change of measure]
\label{lem:gaussian-completion}
For every $t\in\mathcal I$,
\begin{equation}
  Z_E(t)=\det(I+tE)^{-1/2}
  \E_{K_t}\prod_{i=1}^ng(\phi_i).
  \label{eq:gaussian-completion}
\end{equation}
The square root is the positive one.
\end{lemma}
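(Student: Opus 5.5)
We prove \eqref{eq:gaussian-completion} by analytic continuation in a complex scaling parameter, after checking it formally at the level of Taylor coefficients. Write $\xi=A_Ex$, so that $Z_E(t)=\E_x\prod_i(1+\sqrt t\,\xi_i)$. Because $E\1=0$ we have $\sum_i\xi_i=0$, and therefore
\[
  \prod_i(1+\sqrt t\,\xi_i)
  =\prod_i g(\sqrt t\,\xi_i)\,
   \exp\Bigl\{-\tfrac t2\textstyle\sum_i\xi_i^2\Bigr\}.
\]
The extra factor $e^{\sum_i\sqrt t\,\xi_i}$ equals one, and $\sum_i\xi_i^2=x^{\mathsf T}A_E^{\mathsf T}A_Ex=x^{\mathsf T}Ex$, since $A_E$ is complex symmetric with $A_E^2=E$. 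This gives the exact pointwise identity
\[
  Z_E(t)=\E_x\Bigl[e^{-\frac t2x^{\mathsf T}Ex}\prod_i g(\sqrt t\,(A_Ex)_i)\Bigr].
\]
The question is whether this integral converges absolutely. For $\zeta\in\C$ replace $\sqrt t$ by $\zeta$, so the integrand becomes $e^{-\frac{\zeta^2}2x^{\mathsf T}Ex}\prod_i g(\zeta\xi_i)$. When $\zeta$ is real and small, \cref{lem:matching-factor} bounds $|g(\zeta\xi_i)|$ by $e^{\zeta^2(\operatorname{Re}\xi_i)^2}$. The integrand is then dominated by $\exp\{\zeta^2(x^{\mathsf T}E_+x-\tfrac12x^{\mathsf T}Ex)\}$, which is integrable against the Gaussian weight as long as $\zeta^2\|E\|<1$.

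Next we compute the right-hand side. On a Gaussian with density $e^{-|x|^2/2}$, the weight $e^{-\frac{s}2x^{\mathsf T}Ex}$ with $s=\zeta^2$ is a tilted Gaussian. For real $s$ with $I+sE\succ0$ its total mass is $\det(I+sE)^{-1/2}$, and its covariance is $(I+sE)^{-1}$. Under the tilted law, the vector $\zeta A_Ex$ has bilinear second moments $sA_E(I+sE)^{-1}A_E=sE(I+sE)^{-1}=K_s$, because $A_E$ commutes with $E$. It is a complex linear image of a real Gaussian with exactly these bilinear moments. Wick's rule, together with the polynomial-growth moments of an entire function of exponential-quadratic type, then identifies
\[
  \E_{\rm tilt}\prod_i g(\zeta\xi_i)=\E_{K_s}\prod_i g(\phi_i)
\]
at the level of the Taylor series in $\zeta$. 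This is the same mechanism as in \cref{lem:gaussian-marginalization}. To make it rigorous, we argue as in that lemma. Both sides are holomorphic in $\zeta$ on a neighbourhood of the real segment $\zeta\in[0,\sqrt{t}]$, and they agree to all orders at $\zeta=0$, so they agree identically. For the left side this uses the majorant above. For the right side we need admissibility of $K_s$ for the matching factor.

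The main obstacle is exactly this domain issue. We must show that $I-2(K_s)_+\succeq d I$ for all real $s\in[0,t]$, with $t\le 1/r_0$, and that the original integral is holomorphic in a complex neighbourhood of this path. The eigenvalues of $K_s$ are $s\lambda/(1+s\lambda)$, where $\lambda\in\operatorname{spec}E\subseteq[-(1-\kappa),1-\kappa]$. The positive ones are at most $s\lambda/(1+s\lambda)<1/2$ exactly when $s\lambda<1$, and this holds by the strict margin $t\|E\|\le(1-\kappa)/r_0<1$. So $(K_s)_+\preceq\frac{q}{1+q}I$ with $q=(1-\kappa)/r_0$, which gives a uniform margin $d=(1-q)/(1+q)$. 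For the left side, for complex $\zeta$ near the segment we majorize the integrand by $\exp\{x^{\mathsf T}Mx\}$ using $|e^{-\frac{\zeta^2}2x^{\mathsf T}Ex}|=e^{-\frac12\operatorname{Re}(\zeta^2)x^{\mathsf T}Ex}$ and $|g(w)|\le e^{(\operatorname{Re}w)^2}$. A short eigen-decomposition check shows $M\prec\frac12I$ on a thin complex neighbourhood, again because $t\|E\|<1$; the decisive direction is the one where $\operatorname{Re}\xi$ is large, where the combined exponent is $\tfrac{s}{2}x^{\mathsf T}E_+x+\cdots$. Locally uniform domination then gives holomorphy, and the positive square root comes from continuity of $\det(I+sE)^{-1/2}$ from $s=0$. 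Setting $\zeta=\sqrt t$ proves \eqref{eq:gaussian-completion}.
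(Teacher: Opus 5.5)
Your pointwise identity $\prod_i(1+\sqrt t\,\xi_i)=e^{-\frac t2x^{\mathsf T}Ex}\prod_i g(\sqrt t\,\xi_i)$ is right and matches the paper, and so is your spectral bound $(K_t)_+\preceq\frac{q_0}{1+q_0}I$. The gap is in the step that identifies the tilted integral with $\E_{K_t}\prod_i g(\phi_i)$.

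You tie the tilt to the dilation by setting $s=\zeta^2$. You then assert that both sides are holomorphic near $[0,\sqrt t]$ and agree to all orders at $\zeta=0$. Neither assertion holds in that form.

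\emph{Holomorphy.} For non-real $\zeta$ the matrix $K_{\zeta^2}$ is complex symmetric, so $\E_{K_{\zeta^2}}$ is not defined in the paper's framework: there is no factor $A_K$. Checking admissibility of $K_s$ for real $s\in[0,t]$ does not produce a holomorphic extension off the real axis.

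\emph{Taylor agreement.} The mechanism of \cref{lem:gaussian-marginalization} matches Taylor coefficients only for a pure dilation $\omega\mapsto\E F(\omega\phi)$ of two Gaussian vectors sharing one fixed bilinear moment matrix. In your setup both laws move with $\zeta$, so the coefficients are not Wick moments of a fixed Gaussian. Your left side is just the polynomial $Z_E(\zeta^2)$, and matching its coefficients with the right side is essentially the content of the lemma itself.

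The missing observation is that the identification is exact, not merely an equality of bilinear moments. Substituting $x=(I+tE)^{-1/2}y$ turns $\E_x[e^{-\frac t2x^{\mathsf T}Ex}F(x)]$ into $\det(I+tE)^{-1/2}\E_yF((I+tE)^{-1/2}y)$. Spectral calculus gives $\sqrt t\,A_E(I+tE)^{-1/2}=A_{K_t}$:
\begin{itemize}
\item on an eigenvector of $E$ with eigenvalue $\lambda>0$, both sides act by $\sqrt{t\lambda/(1+t\lambda)}$;
\item for $\lambda=-a<0$, both act by $i\sqrt{ta/(1-ta)}$.
\end{itemize}
Hence $\sqrt t\,\xi=A_{K_t}y$ identically. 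The only thing left is absolute convergence, $\E_y\prod_i|g((A_{K_t}y)_i)|\le\det(I-2(K_t)_+)^{-1/2}<\infty$, which follows from your spectral bound and the envelope in \cref{lem:matching-factor}. This is the paper's proof.

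If you prefer to keep the continuation argument, fix the tilt at $s=t$ and dilate with a separate parameter. Compare $\omega\mapsto\E_x[e^{-\frac t2x^{\mathsf T}Ex}\prod_ig(\omega\sqrt t\,\xi_i)]$ with $\omega\mapsto\det(I+tE)^{-1/2}\E_{K_t}\prod_ig(\omega\phi_i)$ on a complex neighbourhood of $[0,1]$. That version is valid, but it takes more work than the substitution.
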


\begin{proof}
Put $C=tE$ and let
\(\psi=(C_+^{1/2}+iC_-^{1/2})x\).
Wick's rule gives
\(Z_E(t)=\E_x\prod_{i=1}^n(1+\psi_i)\).
Since $E\1=0$, spectral calculus gives
$C_+^{1/2}\1=C_-^{1/2}\1=0$.  Hence $\sum_i\psi_i=0$.  The positive and
negative spectral parts have disjoint supports, so
$\sum_i\psi_i^2=x^{\mathsf T}Cx$.  The definition of $g$ now gives
\[
  \prod_{i=1}^n(1+\psi_i)
  =e^{-\frac12x^{\mathsf T}Cx}
   \prod_{i=1}^ng(\psi_i).
\]
Since $\|C\|<1$, the matrix $I+C$ is positive definite.  Absorbing the
quadratic factor into the Gaussian density and using
$y=(I+C)^{1/2}x$ contributes $\det(I+C)^{-1/2}$.  With
$K=C(I+C)^{-1}$, spectral calculus gives
$A_C(I+C)^{-1/2}=A_K$, proving \cref{eq:gaussian-completion}.

Every positive eigenvalue of $K$ is smaller than $1/2$.  Therefore
\cref{lem:matching-factor} gives
\[
  \E_y\prod_i|g((A_Ky)_i)|
  \le\E_ye^{y^{\mathsf T}K_+y}
  =\det(I-2K_+)^{-1/2}<\infty.
\]
This justifies the change of variables absolutely.  The square root is the
positive one because $I+C$ is positive definite.
\end{proof}

The expansion over coordinate subsets requires three facts uniformly in $t$: the
transformed complex Gaussian integral retains an integrability margin, the entries
of $K_t$ remain $O(1/n)$, and the determinant prefactor stays of constant
order.  Put $\rho_0=1-\kappa$ and
\(q_0=\frac{\rho_0}{r_0}<1\).
Thus $q_0$ is a uniform upper bound for $t\|E\|$ on $\mathcal I$.

\begin{lemma}[Uniform bounds for $K_t$]
\label{lem:uniform-completion}
Under the hypotheses and notation of \cref{prop:radial-localization}, there
are constants $0<d\le1$, $\beta'<\infty$, and $C<\infty$
that depend only on $\beta,\kappa,r_0,r_1$, such that
\begin{equation}
  \begin{gathered}
  I-2K_t\succeq dI,
  \qquad
  \max_{i,j}|(K_t)_{ij}|\le\frac{\beta'}n,\\[2pt]
  e^{-C}\le\det(I+tE)^{-1/2}\le e^C.
  \end{gathered}
  \label{eq:uniform-K}
\end{equation}
for every $t\in\mathcal I$.
\end{lemma}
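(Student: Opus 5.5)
All three bounds follow from spectral calculus applied to $C=tE$. For every $t\in\mathcal I$ we have $\|C\|\le q_0<1$, so each eigenvalue $\mu$ of $C$ lies in $[-q_0,q_0]$. The eigenvalues of $K_t$ are then $\mu/(1+\mu)$, and those of $I+tE$ are $1+\mu\in[1-q_0,1+q_0]$.

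For the margin, note that $\mu\mapsto\mu/(1+\mu)$ is increasing. Hence $\lambda_{\max}(K_t)\le q_0/(1+q_0)<1/2$, and therefore
\[
  I-2K_t\succeq\Bigl(1-\frac{2q_0}{1+q_0}\Bigr)I=\frac{1-q_0}{1+q_0}\,I .
\]
We take $d=(1-q_0)/(1+q_0)\in(0,1]$. With envelope weights $(1,0)$ we have $K_{\mathrm{env}}=(K_t)_+$, so this is exactly the admissibility margin. Indeed, $I-2(K_t)_+\succeq dI$ follows because the negative eigenvalues contribute $1\ge d$.

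For the entrywise bound, write $K_t=C-C^2(I+C)^{-1}$. The first term has entries at most $\beta/n$, since $t\le 1/r_0$ gives $|C_{ij}|\le\beta/(r_0n)$. For the second term, we bound $|(C^2(I+C)^{-1})_{ij}|\le\|Ce_i\|_2\,\|(I+C)^{-1}\|\,\|Ce_j\|_2$. Each column norm satisfies $\|Ce_i\|_2^2=\sum_k C_{ki}^2\le n(\beta/(r_0n))^2$. The middle factor satisfies $\|(I+C)^{-1}\|\le(1-q_0)^{-1}$. Together these give the estimate $\beta^2/(r_0^2(1-q_0)n)$. So we may take $\beta'=\beta/r_0+\beta^2/(r_0^2(1-q_0))$.

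This column-norm step is the only place where the diffuseness hypothesis is genuinely used rather than the operator norm alone. The point is that we avoid expanding $(I+C)^{-1}$ as a Neumann series entrywise, which would require a row-norm bound that we do not have. I expect it to be the main, though mild, obstacle.

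For the determinant, we use $\log\det(I+C)=\sum_\mu\log(1+\mu)$. On $|\mu|\le q_0$ we have the bound $|\log(1+\mu)-\mu|\le\mu^2/(2(1-q_0)^2)$. Since $E$ has zero diagonal, $\operatorname{tr}C=0$, so the linear terms cancel. What remains is bounded by
\[
  \frac{\operatorname{tr}C^2}{2(1-q_0)^2}\le\frac{n^2(\beta/(r_0n))^2}{2(1-q_0)^2}=\frac{\beta^2}{2r_0^2(1-q_0)^2}.
\]
Halving this gives the constant $C$. All constants depend only on $\beta,\kappa,r_0,r_1$, which completes the three bounds.
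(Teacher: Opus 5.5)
Your proof is correct and follows the paper's argument. The same increasing eigenvalue map gives $d=(1-q_0)/(1+q_0)$, and the same Cauchy--Schwarz sandwich through the column norms $\|Ce_i\|_2$ gives the identical $\beta'=\beta/r_0+\beta^2/(r_0^2(1-q_0))$. The determinant bound uses $\operatorname{tr}E=0$ and $\|E\|_{\mathrm F}\le\beta$ exactly as the paper does. Your constant $(1-q_0)^{-2}$ there is slightly weaker than the available $(1-q_0)^{-1}$, which is harmless.

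The one presentational difference is that the paper sums the resolvent series termwise, using $|(E^\ell)_{ij}|\le\|E\mathbf e_i\|_2\|E\|^{\ell-2}\|E\mathbf e_j\|_2$. So the series route also needs no row-norm bound, contrary to your remark. Also, the linear term is bounded by $\beta/(r_0n)$ rather than $\beta/n$, since $r_0<1$; your choice of $\beta'$ already accounts for this.
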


\begin{proof}
The spectral calculus, resolvent, and trace estimates are recorded in
\cref{subsec:matching-resolvent-estimates}.
\end{proof}

\Cref{lem:matching-factor} shows that $g$ is admissible with envelope weights
$a_{\mathrm R}=1$ and $a_{\mathrm I}=0$, vanishing order $\ell=3$, and
the fixed local constants $C_f=3$ and $\delta_f=1/2$.
By \cref{lem:uniform-completion}, the matrices $K_t$ satisfy the diffuse
entrywise bound and $I-2K_t\succeq dI$.  Since $d\le1$, spectral calculus gives
$I-2(K_t)_+\succeq dI$.  Thus the matrices $K_t$ are admissible with
fixed $\beta'$ and $d$, uniformly for $t\in\mathcal I$.
The truncation theorem in \cref{thm:gaussian-product-truncation} therefore
applies throughout this interval.  We retain its notation $W_{K_t}(S)$
for the contribution of a coordinate set $S$.

\subsubsection{Evaluation of a retained subset}

The truncation theorem identifies the subsets that the algorithm retains.
We now evaluate the contribution of one such subset without enumerating
Gaussian pairings.  Because $f=g-1$, inclusion--exclusion reduces
$W_K(S)$ to Gaussian products of $g$, and a weighted monomer--dimer
recurrence evaluates each product.

\begin{proposition}[Fixed-subset evaluation]
\label{prop:fixed-support}
Let $K$ be a real symmetric matrix.  Suppose that $S$ has size $s$ and
\(\|K[S]\|_{\mathrm{row}}\le1/4\).
Then $W_K(S)$ can be computed in $3^s\poly(s)$ arithmetic and
elementary-function operations.
\end{proposition}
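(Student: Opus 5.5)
We evaluate $W_K(S)$ by inclusion--exclusion, which turns it into Gaussian products of $g$ over subsets. Since $f=g-1$,
\[
  \prod_{i\in S}f(\phi_i)=\sum_{T\subseteq S}(-1)^{s-|T|}\prod_{i\in T}g(\phi_i).
\]
Each term has the Gaussian envelope required by \cref{lem:gaussian-marginalization}, and the condition $\|K[S]\|_{\mathrm{row}}\le1/4$ gives $I-2K[S]_+\succeq\frac12 I$, so every term converges absolutely. Marginalizing to $T$ then gives
\[
  W_K(S)=\sum_{T\subseteq S}(-1)^{s-|T|}\,\E_{K[T]}\prod_{i\in T}g(\phi_i).
\]
It therefore suffices to evaluate $G(T):=\E_{K[T]}\prod_{i\in T}g(\phi_i)$ in $2^{|T|}\poly(s)$ operations. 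Summing $2^{|T|}$ over $T\subseteq S$ gives $3^s$.

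To compute $G(T)$, we run the Gaussian completion of \cref{lem:gaussian-completion} backwards. Write $L=K[T]$. Because $\|L\|\le\|L\|_{\mathrm{row}}\le1/4$, the matrix $I-L$ is positive definite, and we may set $C=L(I-L)^{-1}$, so that $L=C(I+C)^{-1}$. Undoing the change of measure gives
\[
  \E_L\prod_{i\in T}g(\phi_i)=\det(I+C)^{1/2}\,\E_C\prod_{i\in T}(1+\psi_i)e^{-\psi_i},
\]
where $\psi$ has bilinear moments $C$. Here we can no longer use centering, because $C\1\ne0$ on a subset, so the linear factor $e^{-\sum_i\psi_i}$ remains. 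We handle it by a shift. For a linear functional, $\E_C[F(\psi)e^{-\langle\1,\psi\rangle}]=e^{\frac12\1^{\mathsf T}C\1}\,\E_C F(\psi-C\1)$. Proving this on polynomials and extending by analyticity, exactly as in \cref{lem:gaussian-marginalization}, yields
\[
  G(T)=\det(I+C)^{1/2}e^{\frac12\1^{\mathsf T}C\1}\,\E_C\prod_{i\in T}(1-m_i+\psi_i),
  \qquad m=C\1.
\]
By Wick's rule, the remaining expectation is the weighted monomer--dimer partition function
\[
  \sum_M\prod_{ij\in M}C_{ij}\prod_{i\notin V(M)}(1-m_i)
\]
on the complete graph on $T$. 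Equivalently, it is the hafnian-type sum obtained from vertex deletion.

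The main obstacle is evaluating this monomer--dimer sum within the budget. For a fixed vertex $v$, vertex deletion gives
\[
  Z(U)=(1-m_v)Z(U\setminus v)+\sum_{u\in U\setminus v}C_{uv}\,Z(U\setminus\{u,v\}).
\]
Memoizing over all subsets $U\subseteq T$ costs $2^{|T|}\poly(s)$. Better still, we compute $Z(U)$ once for every $U\subseteq S$ in a single table of size $2^s$. The total work is then $2^s\poly(s)$ for the table plus $\sum_T\poly(s)$ for the prefactors. The prefactors need the determinant, inverse and row sums of $C$ restricted to $T$, which costs $\poly(s)$ per $T$, within $3^s\poly(s)$ overall. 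The determinant square root is positive because $I+C=(I-L)^{-1}\succ0$, and this step uses one elementary-function operation. The remaining point is to check rigorously that the shift identity holds for indefinite $C$. We expect this to follow by complex dilation $\zeta\mapsto\zeta C$ and identity of Taylor coefficients, as in the proof of \cref{lem:gaussian-marginalization}.
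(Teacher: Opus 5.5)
Your argument is essentially the paper's proof. It uses the same inclusion--exclusion, justified by marginalization with margin $d=1/2$, and your formula for $G(T)$ is literally the paper's \eqref{eq:fixed-support-formula}: indeed $C=K[T](I-K[T])^{-1}=D_T^{-1}C_T=\Sigma_T$, $-C\1=\mu_T$, and $\det(I+C)^{1/2}=\det(D_T)^{-1/2}$. The paper completes the square directly and shifts the contour by Cauchy's theorem. The shift identity you flag as unchecked is also immediate: for polynomial $F$, both sides are entire in the shift vector $a=A_C\1$, and they agree for real $a$ by the real Cameron--Martin substitution.

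One step should be dropped. The \emph{single table over $U\subseteq S$} shortcut is invalid. The matrix $C=K[T](I-K[T])^{-1}$ is not the restriction to $T$ of one fixed matrix, because the inverse changes with $T$. Hence the dimer weights $C_{ij}$ and monomer weights $1-m_i$ depend on $T$, and one table cannot serve all $T$. The per-$T$ memoization you state first, at cost $2^{|T|}\poly(s)$, summed over $T\subseteq S$, is what gives $3^s\poly(s)$, exactly as in the paper.
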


\begin{proof}
We first evaluate the product of the residual factors on a fixed subset
$T$, using a Gaussian shift and a monomer--dimer recurrence.
Inclusion--exclusion then recovers $W_K(S)$.

For $T\subseteq S$, write
\(G_K(T)=\E_{K[T]}\prod_{i\in T}g(\phi_i), \qquad G_K(\varnothing)=1\).
For nonempty $T$, let $C_T=K[T]$ be the restricted bilinear second-moment
matrix, let $A_T=A_{C_T}$ be the complex square-root factor defined
in \cref{sec:gaussian-principle}, and put
\(D_T=I-C_T,\qquad a_T=A_T^{\mathsf T}\1\).
The matrix $D_T$ is the Gaussian precision matrix left after the quadratic
term in $\prod_{i\in T}g(\phi_i)$ is absorbed.  For every $T\subseteq S$,
$\|K[T]\|_{\mathrm{row}}\le\|K[S]\|_{\mathrm{row}}\le1/4$.
Hence $\|C_T\|\le1/4$ and $D_T\succeq3I/4$.  The definition of $g$ now
gives the absolutely convergent integral
\[
  G_K(T)
  =(2\pi)^{-|T|/2}\int_{\R^T}
  e^{-x^{\mathsf T}D_Tx/2-a_T^{\mathsf T}x}
  \prod_{i\in T}\bigl(1+(A_Tx)_i\bigr)\,dx.
\]
Complete the square and set $y=x+D_T^{-1}a_T$.  The new contour is the
complex translate $\R^T+D_T^{-1}a_T$.  The integrand is entire, and the
positive definiteness of $D_T$ makes the Gaussian dominate the polynomial
on every intermediate contour, so Cauchy's theorem shifts the coordinates
back to $\R^T$.  Put
$\Sigma_T=D_T^{-1}C_T$ and $\mu_T=-\Sigma_T\1$.
Because $A_T$ and $D_T$ are commuting spectral functions of $C_T$, the
shifted affine Gaussian has mean $\mu_T$, bilinear second-moment matrix
$\Sigma_T$, and normalization exponent
$a_T^{\mathsf T}D_T^{-1}a_T=\1^{\mathsf T}\Sigma_T\1$.
Define $P_T$ to be the weighted monomer--dimer partition function on $T$,
with monomer weight $1+(\mu_T)_i$ at $i$ and dimer weight
$(\Sigma_T)_{ij}$ on $\{i,j\}$ \cite{HeilmannLieb1972}.  Explicitly,
$P_T(\varnothing)=1$.  For nonempty $U\subseteq T$, choose a fixed $i\in U$
and use
\[
  P_T(U)=\bigl(1+(\mu_T)_i\bigr)P_T(U\setminus\{i\})
  +\sum_{j\in U\setminus\{i\}}
  (\Sigma_T)_{ij}P_T(U\setminus\{i,j\}).
\]
Wick's formula therefore gives
\begin{equation}
  G_K(T)=\det(D_T)^{-1/2}
  \exp\left(\frac12\1^{\mathsf T}\Sigma_T\1\right)P_T(T).
  \label{eq:fixed-support-formula}
\end{equation}

The row-norm bound gives
$\max_{i,j\in S}|K_{ij}|\le1/4$ and $I-2K[S]_+\succeq I/2$.
Thus, as an $s$ by $s$ moment matrix, $K[S]$ satisfies the hypotheses of
\cref{lem:gaussian-marginalization} with $\beta=s/4$ and $d=1/2$.
Consequently, inclusion--exclusion and
\cref{lem:gaussian-marginalization} give
\[
  W_K(S)=\sum_{T\subseteq S}(-1)^{|S|-|T|}G_K(T).
\]
The recurrence for a fixed $T$ uses $2^{|T|}\poly(s)$ operations, and
\(\sum_{T\subseteq S}2^{|T|}=3^s\).
\end{proof}

Although the derivation uses complex Gaussian contours, the evaluation
formula itself is real.  Indeed, $D_T$ is real positive definite,
$\Sigma_T$ and $\mu_T$ are real, and the monomer--dimer recurrence has real
weights.  Thus the algorithm uses only real arithmetic.  The complex square
roots and contour shifts are analytic devices used to establish the
formula.

\subsection{The approximation algorithm}
\label{sec:matching-algorithm}

The analytic estimates are now in place.  Localization restricts the
Gamma average to a fixed interval, and Gaussian completion followed by
\cref{thm:gaussian-product-truncation} controls the resulting subset
expansion.
We now choose the truncation order and approximate the remaining
one-dimensional integral.

\subsubsection{Truncation, positivity, and evaluation bounds}

Apply \cref{thm:gaussian-product-truncation} to $K_t$ using
\cref{lem:uniform-completion}, and write $c_*$ for the resulting value of
$c_{\mathrm{tr}}$.  Uniformly for $t\in\mathcal I$, the theorem gives
$\|K_t[T]\|_{\mathrm{row}}\le1/8$ whenever $|T|\le c_*n$.
The unused margin between $1/8$ and $1/4$ will accommodate
finite-precision errors.
All constants in this section may depend on the fixed parameters in
\cref{thm:matrix}.

For an integer $R\ge2$ and a parameter $t\in\mathcal I$, define
the truncated approximation
\[
  Z_{E,R}(t)=
  \det(I+tE)^{-1/2}\sum_{|S|<R}W_{K_t}(S).
\]

By \cref{eq:diffuse-gaussian-truncation-bound,lem:uniform-completion}, there are
constants $C,c>0$ and $n_0$ such that
\[
  |Z_E(t)-Z_{E,R}(t)|
  \le \left(C\sqrt{\frac Rn}\right)^R+e^{-cn}
\]
whenever $n\ge n_0$, $t\in\mathcal I$, and $2\le R\le c_*n$.
Here we also used the uniform upper bound on the determinant factor from
\cref{lem:uniform-completion}.

Because the expansion over coordinate subsets is signed, the additive estimate must be
paired with a uniform positive lower bound.
Indeed, \cref{eq:diffuse-gaussian-near-one}, applied uniformly to $K_t$, gives
$\mathcal Z_{K_t}=1+O(n^{-1})$.  Together with the determinant bounds in
\cref{lem:uniform-completion}, this gives constants
$0<a_{\mathrm{pos}}\le b_{\mathrm{pos}}<\infty$ such that
\begin{equation}
  a_{\mathrm{pos}}\le Z_E(t)\le b_{\mathrm{pos}}
  \quad(t\in\mathcal I),
  \qquad
  a_{\mathrm{pos}}\le Q_E\le b_{\mathrm{pos}}
  \label{eq:positive-middle}
\end{equation}
for all sufficiently large $n$, after adjusting the two constants.  To
obtain the bounds for $Q_E$, average the first pair of inequalities over
$U\in[r_0,r_1]$ and use \cref{prop:radial-localization} for the discarded
Gamma tails.

\begin{corollary}[Perfect-matching existence]
\label{cor:spectral-existence}
For every fixed choice of the parameters in
\cref{thm:scaling-criterion}, there is an $n_0$ such that the support of
every input covered by that theorem with $n\ge n_0$ contains a perfect
matching.
\end{corollary}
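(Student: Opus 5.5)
The plan is to deduce existence from the positivity bound \cref{eq:positive-middle}: $\haf(A)>0$ forces a perfect matching in the support. Fix the parameters $\theta,c_{\mathrm{lo}},c_{\mathrm{hi}},q',q$ of \cref{thm:scaling-criterion} and take an input $A$ covered by that theorem. The theorem's hypotheses already include that $\mathcal D(G)$ contains a point positive on every edge, so \cref{prop:entropy-normalization} applies. It produces the maximum-entropy matrix $X$ with $X_{uv}=a_{uv}r_ur_v$ and the identity $\haf(A)=e^{h_A(G)}\haf(X)$.

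Next I will rerun the verification from the proof of \cref{thm:scaling-criterion}. The factor bounds give $\theta c_{\mathrm{lo}}^2/n\le X_{uv}\le c_{\mathrm{hi}}^2/n$ on edges. The comparisons with $L_G$ and $Q_G$ place the spectrum of $X$ on $\1^\perp$ inside $[-1+\theta c_{\mathrm{lo}}^2q,\,1-\theta c_{\mathrm{lo}}^2q']$. Hence $E=X-P_0$ satisfies all four conditions of \cref{eq:matrix-hypotheses}, with constants $\beta,\kappa$ depending only on the fixed parameters, once $n$ exceeds a threshold making $1/(n-1)\le\kappa$. Since $P_0+E=X\ge0$, the hypotheses of \cref{thm:matrix} hold with $P=P_0$.

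Now invoke \cref{eq:positive-middle}. Its derivation uses \cref{prop:radial-localization}, \cref{lem:uniform-completion}, and \cref{eq:diffuse-gaussian-near-one} applied uniformly to $K_t$, so it depends only on $(\beta,\kappa)$ and the fixed choice of $r_0,r_1$. It gives $Q_E\ge a_{\mathrm{pos}}>0$ for all $n\ge n_0(\beta,\kappa)$. Therefore $\haf(X)=\haf(P_0)Q_E>0$, because $\haf(P_0)=(n-1)!!(n-1)^{-n/2}>0$ by \cref{lem:radial-identity}. It follows that $\haf(A)>0$.

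Since $A$ is entrywise nonnegative, $\haf(A)$ is a sum of nonnegative terms. It is positive only if some perfect matching $M$ of $K_n$ has $\prod_{uv\in M}a_{uv}>0$, and such an $M$ is a perfect matching of the support $G$. The only delicate point is uniformity: the threshold $n_0$ must depend only on the fixed parameters and not on $A$. This holds because every constant above was derived from $(\beta,\kappa)$ alone, and these in turn are determined by $\theta,c_{\mathrm{lo}},c_{\mathrm{hi}},q',q$.
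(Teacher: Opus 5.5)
Your proposal is correct and follows the paper's own proof: both verify the hypotheses of \cref{thm:matrix} for $E=X-P_0$ as in the proof of \cref{thm:scaling-criterion}, then use \cref{eq:positive-middle} to get $\haf(X)=\haf(P_0)Q_E>0$ for $n\ge n_0$, with $n_0$ depending only on the fixed parameters. The only difference is cosmetic: the paper reads off the perfect matching directly from $\haf(X)>0$, using that $X\ge0$ has the same support as $A$, whereas you pass through $\haf(A)=e^{h_A(G)}\haf(X)>0$.
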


\begin{proof}
The entropy-scaled matrix is $X=P_0+E$ and has the same support as the
input.  By \cref{eq:positive-middle},
\(\haf(X)=\haf(P_0)Q_E>0\) once $n\ge n_0$.  Thus the support contains a
perfect matching.  In particular, only the exact branch of the general
criterion can encounter a zero hafnian.
\end{proof}

We next collect the quantitative estimates used by the algorithm.  Recall the Gamma
density $p_{n,\tau}$ from \cref{eq:gamma-density}.  For an integer $R\ge2$, define
\[
  \varphi_R(u)=p_{n,\tau}(u)Z_{E,R}(u^{-1}),
  \qquad
  Q_R=\int_{r_0}^{r_1}\varphi_R(u)\,du.
\]
Thus $\varphi_R$ is the truncated Gamma-average integrand, and $Q_R$ is its integral
over the retained interval.

There are constants $C,c>0$, $0<\eta_0\le c_*/2$, and $n_0$ such that,
whenever $n\ge n_0$ and $2\le R\le2\eta_0 n$,
\begin{equation}
  |Q_E-Q_R|
  \le\left(C\sqrt{\frac Rn}\right)^R+e^{-cn}.
  \label{eq:direct-radial-truncation}
\end{equation}
Moreover,
\[
  C^{-1}\le Q_E\le C,
  \qquad
  \|\varphi_R'\|_{L^\infty([r_0,r_1])}
  \le n^C\left(\frac{3n}{R}\right)^{CR}.
\]
At a specified point $u\in[r_0,r_1]$, the expression $\varphi_R(u)$ has a
real-arithmetic evaluation circuit of size at most
$n^C(3n/R)^{CR}$.  The truncation estimate follows from localization and
the uniform estimate above.  For the remaining bounds, the resolvent
formulas show that $K_t$ and $K_t'$ have entries $O(n^{-1})$.  If
$s=|T|<R$, then $\|K_t[T]\|_{\mathrm{row}}\le1/4$ and
$\|(I-K_t[T])^{-1}\|_{\mathrm{row}}\le4/3$,
and the derivatives of the shifted covariance and mean are $O(1)$.
Differentiating the fixed-subset formula and its recurrence therefore
bounds a fixed $T$ contribution and its derivative by
$e^{O(s)}\poly(s)$, with $2^s\poly(s)$ operations.
Inclusion--exclusion costs $3^s\poly(s)$, and summing over $|S|<R$ gives
the claimed bound.  Finally,
$\|p_{n,\tau}\|_\infty=O(\sqrt n)$ and
$\|p_{n,\tau}'\|_\infty=O(n^{3/2})$ by Stirling's formula, so the same
bound holds for $\varphi_R'$.

\subsubsection{Choosing the cutoff and quadrature}

We now choose the subset cutoff so that the analytic tail is smaller than
the requested error, and then choose the quadrature mesh from the derivative
bound.

We may assume that $C\ge12$.  Put
\[
  b=\left\lceil\log_2\frac{C^2}{\eps}\right\rceil,
\]
and choose $\eta>0$ no larger than $\eta_0$, $c/(4\log2)$, and the
constant supplied by \cref{lem:generic-cutoff} for
$\alpha=1/2$, $C_2=C$, and $c_{\mathrm{tr}}=2\eta_0$.  We may decrease
$\eta$ further without changing that lemma.
If $n<n_0$ or $b\ge\eta n$, we use exact evaluation.
Otherwise, let
$R\ge2$ be the least integer satisfying
\begin{equation}
  \left(\frac{C^2R}{n}\right)^R\le2^{-2b}.
  \label{eq:simple-cutoff}
\end{equation}
The algorithm finds this least integer by testing successive values of
$R$.
The cutoff condition is equivalently
$(C\sqrt{R/n})^R\le2^{-b}$.  Hence \cref{lem:generic-cutoff} shows that
$R$ exists, satisfies $R\le2\eta_0n$, and obeys
\begin{equation}
  R\log\frac{en}{R}=O(b+\log n).
  \label{eq:R-runtime}
\end{equation}
The cutoff and the choice of $\eta$ also give
$\bigl(C\sqrt{R/n}\bigr)^R\le2^{-b}$ and $e^{-cn}\le2^{-4b}$,
and the derivative and circuit bounds in
the preceding display are $n^{O(1)}2^{O(b)}$.

For the quadrature, divide $[r_0,r_1]$ into $L$ equal subintervals and use
their midpoints $u_1,\ldots,u_L$.  Choose $L$ to be a sufficiently large
constant multiple of $2^b n^C(3n/R)^{CR}$, rounded up to an integer.
With $h=(r_1-r_0)/L$, put
\begin{equation}
  Q_{R,L}=h\sum_{\ell=1}^L \varphi_R(u_\ell).
  \label{eq:midpoint-rule}
\end{equation}
The elementary Lipschitz midpoint estimate and
the derivative estimate above give
\(|Q_{R,L}-Q_R|\le2^{-b}\).
Thus the truncation, Gamma tail, and quadrature errors together are at
most $3\cdot2^{-b}$.  The number of nodes and the work at each node are
polynomial in $n$ and $1/\eps$ by \cref{eq:R-runtime}.

\subsubsection{The exact branch}

For a symmetric weighted matrix $Y$ and a vertex set $S$, let $F_Y(S)$ be
the hafnian of the principal matrix $Y[S]$, with $F_Y(\varnothing)=1$.
For nonempty even $S$, choose its least vertex $v$ and use the standard
subset recurrence
\[
  F_Y(S)=\sum_{u\in S\setminus\{v\}}Y_{uv}
  F_Y(S\setminus\{u,v\}).
\]
The recurrence has $2^{O(n)}$ states.  If $b\ge\eta n$, then
$2^{O(n)}\le2^{O_{\beta,\kappa}(b)}=\eps^{-O_{\beta,\kappa}(1)}$.  The other
exact branch has $n<n_0$, a fixed threshold depending only on $\beta$ and
$\kappa$.  Thus both branches have polynomial real-arithmetic complexity
in $n$ and $1/\eps$.

For rational inputs the recurrence is applied to the original matrix rather
than to the generally irrational entropy scaling.  Its bit implementation,
and the zero-output case, are recorded in \cref{app:bit-complexity}.

\begin{proof}[Proof of \cref{thm:matrix}]
We verify accuracy and arithmetic cost in the approximation branch, then
invoke the subset recurrence for the complementary parameter range.  In
the approximation branch, evaluate
\cref{eq:midpoint-rule}.  At each
node, compute the weights $W_{K_t}(S)$ for $|S|<R$ by
\cref{prop:fixed-support}.  The number of arithmetic operations is bounded by
\[
  L\sum_{s<R}\binom ns3^s\poly(n)
  \le
  L\exp\left(O\left(R\log\frac{en}{R}\right)\right)\poly(n).
\]
By \cref{eq:R-runtime} and the choice of $L$, this is
$n^{O(1)}2^{O(b)}=n^{O(1)}\eps^{-O(1)}$.  The estimates of the preceding
subsection show that the resulting value differs from $Q_E$ by at most
$3\cdot2^{-b}\le3\eps/C^2$.  Since $Q_E\ge C^{-1}$, this is a relative
error of at most $3\eps/C\le\eps/4$, which lies within the requested
factor $e^{\pm\eps}$.
Multiplication by the explicit value of $\haf(P)$ gives the requested
hafnian.  The exact branch has already been discussed, so the theorem
follows.
\end{proof}

The preceding proof is in real arithmetic.  The entropy scaling and the
truncated formula are stable under polynomially small perturbations, and the
matrices inverted by the evaluators remain uniformly well conditioned.
Consequently, polynomially many working bits suffice.  The complete
bit-model argument is given in \cref{app:bit-complexity}.

We finish the matching section by deriving the main theorem and the three
corollaries stated in the introduction.

\begin{proof}[Proof of \cref{thm:main}{\rm (a)}]
All bounded exceptional dimensions arising in the reduction are handled
by exact evaluation.  Above the resulting fixed threshold,
\cref{lem:edge-extendability} shows that every edge of the support graph
belongs to a perfect matching.
The factor bounds in \cref{eq:graph-factor-bounds} hold with
$c_{\mathrm{lo}}=\theta\sqrt{2\gamma}$ and
$c_{\mathrm{hi}}=(\theta\sqrt{2\gamma})^{-1}$.
Moreover, \cref{lem:centered-gap} gives
$\lambda_2(L_G)\ge2\gamma n$ and $Q_G\succeq2\gamma nI$.
Thus \cref{thm:scaling-criterion}, with $q'=q=2\gamma$, gives the stated
algorithm and bit complexity.
\end{proof}

\begin{proof}[Proof of \cref{thm:main}{\rm (b)}]
The finitely many dimensions below the threshold in
\cref{prop:bipartite-scaling} are handled exactly.  In the remaining
dimensions, that proposition gives total support, the factor bounds
\cref{eq:bipartite-factor-box}, and
$\sigma_2(Y^*)\le1-4\gamma^3\theta^6$.  Therefore
\cref{thm:bipartite-scaling-criterion}, with the fixed factor bounds from
\cref{eq:bipartite-factor-box} and
$\kappa_{\mathrm b}=4\gamma^3\theta^6$, gives the stated algorithm and bit
complexity.
\end{proof}

\begin{proof}[Proof of \cref{cor:positive-dense}]
Apply \cref{thm:main} with, for example, $\gamma=1/4$.  Complete supports
satisfy the required degree inequalities in all sufficiently large
dimensions, and the remaining dimensions are handled exactly.
\end{proof}

\begin{proof}[Proof of \cref{cor:near-perfect}]
Let $n=|V(G)|$ and take $\gamma'=\gamma/2$.  For every vertex $v$, the graph
$G-v$ satisfies
$\delta(G-v)\ge(1/2+\gamma)n-1\ge(1/2+\gamma')(n-1)$
when $n$ is sufficiently large in terms of $\gamma$.  Apply
\cref{thm:main}{\rm(a)} to every $G-v$ and add the estimates.  Each
near-perfect matching occurs once, according to its unmatched vertex.  A sum
of positive $e^{\pm\eps}$-approximations has the same relative guarantee.
The remaining bounded values of $n$ are handled exactly.
\end{proof}

\begin{proof}[Proof of \cref{cor:constant-scale-entropy}]
The entropy normalization and the definition of $Q_E$ give
\[
  \haf(A)=e^{h_A(G)}\haf(P_0)Q_E,
  \qquad
  \haf(P_0)=(n-1)!!(n-1)^{-n/2}.
\]
\Cref{eq:positive-middle} bounds $Q_E$ above and below by
positive constants that depend only on $\gamma$ and $\theta$.  Decreasing
the lower constant $a_{\gamma,\theta}$ and increasing the upper constant
$b_{\gamma,\theta}$ handles the finitely many values of $n$ sent to the
exact branch.  Every admissible support has a perfect matching by Dirac's
theorem.  There are only finitely many such supports in these dimensions,
and compactness of $[\theta,1]^{E(G)}$ gives uniformity over the weights for
each support.  This proves the first assertion.  Stirling's formula gives
$\log\haf(P_0)=-n/2+O(1)$.
Taking logarithms proves the equivalent form.
\end{proof}

\section{The zero-field Ising model with diffuse couplings}
\label{sec:ising}

The Ising application is shorter because no entropy scaling or Gamma average
is needed.  Hubbard--Stratonovich converts the spin sum into a product of
hyperbolic cosines.  Quadratic completion then produces the same
Gaussian-product form as in the matching argument, but with growth in the
imaginary rather than the real direction.  We first verify the required
one-variable bound.  We then prove the representation, bound the completed
moment matrix, and evaluate the retained subsets.

Let $J$ be a real symmetric $n$ by $n$ matrix with zero diagonal.  We use the
normalization
\begin{equation}
  Z(J)
  =\sum_{\sigma\in\{\pm1\}^n}
    \exp\!\left(\frac12\sigma^{\mathsf T}J\sigma\right).
  \label{eq:ising-partition}
\end{equation}
The diagonal restriction is harmless.  An arbitrary diagonal contributes
the known factor $\exp(\operatorname{tr}J/2)$.

The one-variable functions needed below are
$g_{\mathrm{Is}}(z)=e^{-z^2/2}\cosh z$ and
$f_{\mathrm{Is}}(z)=g_{\mathrm{Is}}(z)-1$.

\begin{lemma}[Ising scalar factor]
\label{lem:ising-factor}
The function $g_{\mathrm{Is}}$ satisfies
\begin{align}
  |g_{\mathrm{Is}}(x+iy)|&\le e^{y^2/2},
  \label{eq:ising-envelope}\\
  f_{\mathrm{Is}}(z)&=-\frac{z^4}{12}+O(z^6)
  \qquad(z\longrightarrow0).
  \label{eq:ising-quartic-zero}
\end{align}
It is therefore admissible with envelope weights
$a_{\mathrm R}=0$ and $a_{\mathrm I}=1/2$, vanishing order $\ell=4$,
and local constants $C_f=1/8$ and $\delta_f=1/2$.
\end{lemma}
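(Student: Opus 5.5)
The envelope \eqref{eq:ising-envelope} follows from one explicit identity, and the local bound follows from a Taylor expansion with a majorant series controlling the tail. For the envelope, I would write $z=x+iy$, so that $\operatorname{Re}(z^2)=x^2-y^2$ and $|e^{-z^2/2}|=e^{-x^2/2+y^2/2}$. Using $\cosh(x+iy)=\cosh x\cos y+i\sinh x\sin y$,
\[
  |\cosh(x+iy)|^2=\cosh^2x\cos^2y+\sinh^2x\sin^2y=\cosh^2x-\sin^2y\le\cosh^2x .
\]
Comparing Taylor coefficients termwise gives $\cosh x\le e^{x^2/2}$, since $(2k)!\ge 2^kk!$. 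Together these yield
\[
  |g_{\mathrm{Is}}(z)|\le\cosh x\,e^{-x^2/2}e^{y^2/2}\le e^{y^2/2},
\]
which is \eqref{eq:factor-global-envelope} with $(a_{\mathrm R},a_{\mathrm I})=(0,1/2)$. Since $g_{\mathrm{Is}}(0)=1$ and $g_{\mathrm{Is}}$ is entire, only the local vanishing condition remains.

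For \eqref{eq:ising-quartic-zero}, I would multiply the series $\cosh z=1+z^2/2+z^4/24+\cdots$ and $e^{-z^2/2}=1-z^2/2+z^4/8-\cdots$. The $z^2$ coefficient is $1/2-1/2=0$, and the $z^4$ coefficient is $1/24-1/4+1/8=-1/12$. The function is even, so the remainder is $O(z^6)$. Hence $f_{\mathrm{Is}}(z)=\sum_{k\ge2}c_kz^{2k}$ with $c_2=-1/12$, and the vanishing order is $\ell=4$.

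The only delicate step is the explicit constant $C_f=1/8$ on $|z|\le1/2$, because a crude majorant is too weak. Bounding every coefficient $|c_k|$ by the corresponding coefficient $b_k$ of the positive-coefficient series $M(w)=\cosh w\,e^{w^2/2}$ only gives roughly $0.44|z|^4$ at $w=|z|=1/2$. I would therefore keep the exact quartic term and majorize only the terms with $k\ge3$:
\[
  |f_{\mathrm{Is}}(z)|\le\frac{|z|^4}{12}+\sum_{k\ge3}b_k|z|^{2k}.
\]
The low-order coefficients of $M$ are $b_0=1$, $b_1=1$ and $b_2=1/24+1/4+1/8=5/12$. Since $\sum_{k\ge3}b_kw^{2k-4}$ is increasing in $w$, for $|z|\le1/2$ it is at most its value at $w=1/2$:
\[
  \sum_{k\ge3}b_kw^{2k-4}\le 16\Bigl(M(1/2)-1-\tfrac14-\tfrac5{12}\cdot\tfrac1{16}\Bigr).
\]
Numerically, $M(1/2)=\cosh(1/2)e^{1/8}\approx1.2778$, so the bracket is about $0.0017$ and the right-hand side is about $0.028$. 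The total coefficient is then at most $1/12+0.028<0.112<1/8$, which gives $C_f=1/8$ with $\delta_f=1/2$ and completes the admissibility claim. I would make this numerical step rigorous using simple rational upper bounds for $\cosh(1/2)$ and $e^{1/8}$ obtained from truncated series with geometric tail bounds.
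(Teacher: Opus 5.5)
Your proof is correct. The envelope argument matches the paper's: your identity $\cosh^2x-\sin^2y$ is the same as the paper's $\sinh^2x+\cos^2y$. Your quartic coefficient computation is also fine.

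For the constant $C_f=1/8$ you take a different route from the paper. The paper factors
\[
  g_{\mathrm{Is}}(z)-1=e^{-z^2/2}\bigl(\cosh z-e^{z^2/2}\bigr).
\]
It then observes that $e^{w^2/2}-\cosh w=\sum_{k\ge2}d_kw^{2k}$ has nonnegative coefficients $d_k=(2^kk!)^{-1}-((2k)!)^{-1}$. These satisfy $d_2=1/12$ and $d_k\le[12\cdot2^{k-2}(k-2)!]^{-1}$ for $k\ge3$. This gives the closed-form bound $|g_{\mathrm{Is}}(z)-1|\le\frac{|z|^4}{12}e^{|z|^2}$ for every $z$, and at $|z|=1/2$ only $e^{1/4}\le4/3$ is needed.

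You instead expand the product series directly. There the alternating signs of $e^{-z^2/2}$ make termwise majorization lossy, so you must keep the exact quartic coefficient and bound the tail of $\cosh w\,e^{w^2/2}$ numerically. That works, but the margin is thin. The required inequality is $\cosh(1/2)e^{1/8}\le491/384\approx1.27865$, while the true value is about $1.27777$. Your rational upper bounds for $\cosh(1/2)$ and $e^{1/8}$ must therefore be accurate to roughly $10^{-4}$.

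In short, the paper's factorization avoids numerics entirely and gives an estimate valid at every radius. Your argument is more mechanical and is tied to the specific choice $\delta_f=1/2$.
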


The elementary proof is given in
Appendix~\ref{subsec:ising-factor-proof}.

\subsection{Gaussian completion}

We now remove the spin sum and simultaneously resum the quadratic part of
the resulting Gaussian product.

\begin{lemma}[Hubbard--Stratonovich representation and completion]
\label{lem:ising-completion}
Suppose that $\lambda_{\max}(J)<1$ and define
\begin{equation}
  K=J(I-J)^{-1}.
  \label{eq:ising-kernel}
\end{equation}
Then
\begin{equation}
  Z(J)
  =2^n\det(I-J)^{-1/2}
    \mathbb E_K\prod_{i=1}^n g_{\mathrm{Is}}(\phi_i).
  \label{eq:ising-completed}
\end{equation}
The square root is the positive one, and the expectation in
\cref{eq:ising-completed} converges absolutely.
\end{lemma}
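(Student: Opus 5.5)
The plan is to follow the matching completion in \cref{lem:gaussian-completion}, using $\psi=A_Jx$ with $A_J=J_+^{1/2}+iJ_-^{1/2}$, so that $\E\psi\psi^{\mathsf T}=J$ and $\sum_i\psi_i^2=x^{\mathsf T}Jx$.

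\emph{Step 1 (Hubbard--Stratonovich).} Fix $\sigma\in\{\pm1\}^n$. Since $\sigma^{\mathsf T}\psi=(A_J\sigma)^{\mathsf T}x$ is a linear functional of the real Gaussian $x$,
\[
  \E_x e^{\sigma^{\mathsf T}\psi}
  =e^{\frac12\sigma^{\mathsf T}A_J^{\mathsf T}A_J\sigma}
  =e^{\frac12\sigma^{\mathsf T}J\sigma}.
\]
Here $A_J$ is complex symmetric with $A_J^2=J$, so $A_J^{\mathsf T}A_J=J$. The integral converges absolutely, because $|e^{\sigma^{\mathsf T}\psi}|=e^{\sigma^{\mathsf T}\operatorname{Re}\psi}$ is a real exponential of a linear form. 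Summing over $\sigma$ and exchanging the finite sum with the expectation gives
\[
  Z(J)=\E_x\sum_\sigma e^{\sigma^{\mathsf T}\psi}=2^n\E_x\prod_i\cosh\psi_i.
\]

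\emph{Step 2 (completion).} Write $\cosh\psi_i=e^{\psi_i^2/2}g_{\mathrm{Is}}(\psi_i)$. Then
\[
  \prod_i\cosh\psi_i=e^{\frac12x^{\mathsf T}Jx}\prod_ig_{\mathrm{Is}}(\psi_i).
\]
Combined with the Gaussian density, the quadratic exponent becomes $-\frac12x^{\mathsf T}(I-J)x$. The matrix $I-J$ is positive definite because $\lambda_{\max}(J)<1$. Substitute $y=(I-J)^{1/2}x$. This produces the Jacobian factor $\det(I-J)^{-1/2}$, and $\psi=A_J(I-J)^{-1/2}y$. All of these matrices are commuting spectral functions of $J$, so on each eigenvector with eigenvalue $\lambda$ the product $A_J(I-J)^{-1/2}$ acts by $\sqrt{\lambda/(1-\lambda)}$ when $\lambda\ge0$ and by $i\sqrt{|\lambda|/(1-\lambda)}$ when $\lambda<0$. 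The eigenvalue map $\lambda\mapsto\lambda/(1-\lambda)$ preserves sign on $(-\infty,1)$, so this operator equals $K_+^{1/2}+iK_-^{1/2}=A_K$ with $K=J(I-J)^{-1}$. This gives \cref{eq:ising-completed}.

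\emph{Step 3 (absolute convergence).} This is the step needing care: the interchanges above require Fubini and Tonelli, which I would justify by bounding the completed integral. By \cref{eq:ising-envelope},
\[
  \prod_i|g_{\mathrm{Is}}(\phi_i)|\le e^{\frac12\sum_i(\operatorname{Im}\phi_i)^2}=e^{\frac12y^{\mathsf T}K_-y}.
\]
The negative eigenvalues of $K$ are $\lambda/(1-\lambda)$ with $\lambda<0$, and these lie in $(-1,0)$. Hence $I-K_-\succ0$, and the bound integrates to $\det(I-K_-)^{-1/2}<\infty$. In the original variables the same estimate reads
\[
  |\cosh\psi_i|\le e^{\frac12(\operatorname{Re}\psi_i)^2}\cdot e^{\frac12(\operatorname{Im}\psi_i)^2}\cdot e^{-\frac12(\operatorname{Re}\psi_i)^2+\cdots}.
\]
The cleaner route is simply to reverse the change of variables on absolute values, since it is a real linear bijection. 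Integrability of $\prod_i\cosh\psi_i$ also follows directly from Step 1, because $\E_x\prod_i|\cosh\psi_i|\le\E_x\sum_\sigma|e^{\sigma^{\mathsf T}\psi}|<\infty$. The positive square root is the correct one because $I-J\succ0$.
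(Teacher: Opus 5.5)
Your proposal is correct and follows the paper's proof essentially step for step: the Hubbard--Stratonovich identity via $\E_x e^{v^{\mathsf T}x}=e^{v^{\mathsf T}v/2}$ with $v=A_J\sigma$, the completion using $\sum_i\psi_i^2=x^{\mathsf T}Jx$, the substitution $y=(I-J)^{1/2}x$ with $A_J(I-J)^{-1/2}=A_K$, and the majorant $e^{y^{\mathsf T}K_-y/2}$, integrable because the eigenvalues of $K_-$ are below one. The only difference is the integrability bound before substitution: your $\prod_i|\cosh\psi_i|\le2^{-n}\sum_\sigma|e^{\sigma^{\mathsf T}\psi}|$ replaces the paper's majorant $e^{x^{\mathsf T}J_+x/2}$ and does not even need $\lambda_{\max}(J)<1$, while the garbled display ending in $\cdots$ is never used and should simply be deleted.
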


\begin{proof}
If $x$ is a standard real Gaussian vector, then, for every complex vector
$v$, one has $\mathbb E e^{v^{\mathsf T}x}=e^{v^{\mathsf T}v/2}$.  Take
$v=A_J\sigma$.  Since $A_J$ is symmetric and $A_J^2=J$, we have
$v^{\mathsf T}v=\sigma^{\mathsf T}J\sigma$.  Summing the resulting identity
over the spin configurations gives
$Z(J)=2^n\mathbb E_x\prod_{i=1}^n\cosh((A_Jx)_i)$.
If $\psi=A_Jx$, then
$\sum_i\psi_i^2=x^{\mathsf T}Jx$.  The definition of
$g_{\mathrm{Is}}$ therefore gives
\[
  \mathbb E_x\prod_i\cosh(\psi_i)
  =\mathbb E_x e^{x^{\mathsf T}Jx/2}
     \prod_i g_{\mathrm{Is}}(\psi_i).
\]
The assumption on $J$ makes $I-J$ positive definite.  The change of
variables $y=(I-J)^{1/2}x$ absorbs the quadratic exponential into the
Gaussian density and contributes $\det(I-J)^{-1/2}$.  Spectral calculus
gives $A_J(I-J)^{-1/2}=A_K$, where $K=J(I-J)^{-1}$,
which proves the identity.

It remains only to justify absolute convergence.  Before the change of
variables,
$\prod_i|\cosh((A_Jx)_i)|\le\exp(x^{\mathsf T}J_+x/2)$,
and $\lambda_{\max}(J_+)<1$.  After the change of variables,
\cref{eq:ising-envelope} gives the majorant
$\exp(y^{\mathsf T}K_-y/2)$.  If $-a<0$ is an eigenvalue of $J$, the
corresponding eigenvalue of $K_-$ is $a/(1+a)<1$.  The second Gaussian
integral is therefore absolutely convergent as well.
\end{proof}

\subsection{The transformed moment matrix}

The completed moment matrix retains the diffuse entrywise scale and also
satisfies the spectral margin needed below.

\begin{lemma}[Bounds for the completed moment matrix]
\label{lem:ising-kernel}
Fix $\beta>0$ and $0<\kappa\le1$.  Suppose that $J$ is real symmetric,
has zero diagonal, and satisfies
\begin{equation}
  \max_{i,j}|J_{ij}|\le\frac{\beta}{n},
  \qquad
  \lambda_{\max}(J)\le1-\kappa.
  \label{eq:ising-hypotheses}
\end{equation}
For the matrix $K$ in \cref{eq:ising-kernel}, put
$\beta_K=\beta+\beta^2/\kappa$ and $d_K=(1+\beta)^{-1}$.
Then
\begin{equation}
  \max_{i,j}|K_{ij}|\le\frac{\beta_K}{n},
  \qquad I-K_-\succeq d_KI.
  \label{eq:ising-kernel-bounds}
\end{equation}
The negative eigenvalues of $K$ lie in $[-\beta/(1+\beta),0]$.  Finally,
\begin{equation}
  |\log\det(I-J)|\le\frac{\beta^2}{2\kappa}.
  \label{eq:ising-determinant-bound}
\end{equation}
\end{lemma}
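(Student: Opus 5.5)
The argument is spectral calculus on $K=J(I-J)^{-1}$, which is the function $h(\lambda)=\lambda/(1-\lambda)$ applied to $J$. The eigenvalues of $J$ lie in $[-\|J\|,1-\kappa]$, and $\|J\|\le\|J\|_{\mathrm{row}}\le\beta$ by the entrywise bound. Since $h$ is increasing on $(-\infty,1)$, the negative eigenvalues of $K$ are $h(\lambda)$ with $-\beta\le\lambda<0$. Each satisfies $|h(\lambda)|=|\lambda|/(1+|\lambda|)\le\beta/(1+\beta)$, which puts them in $[-\beta/(1+\beta),0]$. Consequently $K_-\preceq\frac{\beta}{1+\beta}I$, so $I-K_-\succeq(1+\beta)^{-1}I=d_KI$.

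For the entrywise bound I will use the identity $K=J+J^2(I-J)^{-1}$, which follows from $h(\lambda)=\lambda+\lambda^2/(1-\lambda)$. The first term has entries at most $\beta/n$. The second term needs more care, since $(I-J)^{-1}$ has norm at most $1/\kappa$ but its entries are not obviously small. I will write $(J^2(I-J)^{-1})_{ij}=e_i^{\mathsf T}J\,(I-J)^{-1}\,Je_j$ and bound it by $\|Je_i\|_2\,\|(I-J)^{-1}\|\,\|Je_j\|_2$. Each column of $J$ has $n$ entries of modulus at most $\beta/n$, so $\|Je_i\|_2\le\beta/\sqrt n$. On the operator side, $I-J\succeq\kappa I$ because $\lambda_{\max}(J)\le1-\kappa$, so $\|(I-J)^{-1}\|\le1/\kappa$. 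Together these give $|(J^2(I-J)^{-1})_{ij}|\le\beta^2/(\kappa n)$, hence $\max_{i,j}|K_{ij}|\le(\beta+\beta^2/\kappa)/n=\beta_K/n$. This Cauchy--Schwarz step is the main point: it must avoid any row-norm bound on $(I-J)^{-1}$, since none is available when negative eigenvalues exceed one in magnitude.

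For the determinant, I will use $\log\det(I-J)=\sum_\lambda\log(1-\lambda)$ together with $\operatorname{tr}J=0$, which comes from the zero diagonal. Subtracting the trace gives $\log\det(I-J)=\sum_\lambda(\log(1-\lambda)+\lambda)$. Let $u(\lambda)=\log(1-\lambda)+\lambda$. Then $u(0)=0$, $u'(0)=0$ and $u''(\lambda)=-(1-\lambda)^{-2}$. For $\lambda\le1-\kappa$ we have $(1-\lambda)^{-2}\le\kappa^{-2}$, but for $\lambda\in[0,1-\kappa]$ this is too crude. Instead I will use the bound $|u(\lambda)|\le\lambda^2/(2(1-\lambda))$ for $\lambda<1$, obtained by integrating the Taylor remainder with $\min(1,1-\lambda)$ on the segment. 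For $\lambda\ge0$ this gives $\lambda^2/(2\kappa)$. For $\lambda<0$ it gives $\lambda^2/2\le\lambda^2/(2\kappa)$, since $\kappa\le1$. Summing, $|\log\det(I-J)|\le\operatorname{tr}(J^2)/(2\kappa)=\|J\|_{\mathrm F}^2/(2\kappa)\le n^2(\beta/n)^2/(2\kappa)=\beta^2/(2\kappa)$. The only real check in this last part is the elementary inequality $|u(\lambda)|\le\lambda^2/(2\min(1,1-\lambda))$, which I will verify by integrating $u''$ on the segment from $0$ to $\lambda$.
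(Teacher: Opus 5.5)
Your proposal is correct and follows the paper's proof essentially step for step. It uses the same spectral map $\lambda\mapsto\lambda/(1-\lambda)$ for the negative eigenvalues and the margin $d_K$, the same bound $\|Je_i\|_2\,\|(I-J)^{-1}\|\,\|Je_j\|_2\le\beta^2/(\kappa n)$ on the correction term $J(I-J)^{-1}J$, and the same trace-zero subtraction for the determinant. In that last step, a uniform bound on $u''$ would only give $\lambda^2/(2\kappa^2)$; the paper's route is to integrate the exact derivative $u'(t)=-t/(1-t)$ using $1-t\ge\min\{1,1-\lambda\}$ on the segment, which yields your stated bound $\lambda^2/(2\min\{1,1-\lambda\})$.
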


\begin{proof}
The entrywise hypothesis gives $\|J\|\le\|J\|_{\mathrm{row}}\le\beta$.
Using $K=J+J(I-J)^{-1}J$
and $\|(I-J)^{-1}\|\le\kappa^{-1}$, we obtain
\[
  |(J(I-J)^{-1}J)_{ij}|
  \le \|Je_i\|_2\,\|(I-J)^{-1}\|\,\|Je_j\|_2
  \le\frac{\beta^2}{\kappa n}.
\]
Here $e_i$ and $e_j$ are standard coordinate vectors.
This proves the entrywise part of \cref{eq:ising-kernel-bounds}.

The map from the spectrum of $J$ to that of $K$ is
$\lambda\mapsto\lambda/(1-\lambda)$.  If $\lambda=-a\le0$, then the corresponding
eigenvalue of $K$ is $-a/(1+a)$.  Since $a\le\beta$, this proves the
remaining spectral assertions and $I-K_-\succeq d_KI$.  In view of
\cref{eq:ising-envelope}, these are precisely the entrywise and Gaussian
integrability margins required by \cref{thm:gaussian-product-truncation}.

For the determinant estimate, let $\lambda_1,\ldots,\lambda_n$ be the
eigenvalues of $J$.  The zero diagonal gives $\sum_i\lambda_i=0$, and hence
$\log\det(I-J)=\sum_i(\log(1-\lambda_i)+\lambda_i)$.
For $0\le\lambda\le1-\kappa$, integration of
$t/(1-t)$ gives
$0\le-\log(1-\lambda)-\lambda\le\lambda^2/(2\kappa)$.
For $\lambda=-a\le0$, the inequality
$0\le a-\log(1+a)\le a^2/2$ gives the same bound.  Thus
\[
  |\log\det(I-J)|
  \le\frac{\|J\|_{\mathrm F}^2}{2\kappa}
  \le\frac{\beta^2}{2\kappa},
\]
as claimed.
\end{proof}

Combining \cref{eq:ising-envelope,eq:ising-quartic-zero} with
\cref{lem:ising-kernel} shows that the pair $(g_{\mathrm{Is}},K)$ satisfies
the hypotheses of \cref{thm:gaussian-product-truncation}, with constants
depending only on $\beta$ and $\kappa$.  Indeed, its envelope matrix is
$K_{\mathrm{env}}=K_-/2$, so the required spectral margin is precisely
$I-K_-\succeq d_KI$.  In particular, writing
$\mathcal Z_K=\mathbb E_K\prod_{i=1}^n g_{\mathrm{Is}}(\phi_i)$, we have
\begin{equation}
  \mathcal Z_K=1+O_{\beta,\kappa}(1/n).
  \label{eq:ising-normalized-asymptotic}
\end{equation}

\subsection{Evaluation of retained subsets}

For a subset $T\subseteq[n]$, define
$G_K(T)=\mathbb E_{K[T]}\prod_{i\in T}g_{\mathrm{Is}}(\phi_i)$.
The retained contribution has an explicit sum over the spin configurations
on $T$.

\begin{lemma}[Retained-subset evaluation]
\label{lem:ising-retained-term}
Let $K_T$ be a real symmetric $t$ by $t$ matrix with
$\|K_T\|_{\mathrm{row}}\le1/4$.  Then
\begin{equation}
  \mathbb E_{K_T}\prod_{i=1}^t g_{\mathrm{Is}}(\phi_i)
  =\det(I+K_T)^{-1/2}2^{-t}
    \sum_{\sigma\in\{\pm1\}^t}
    \exp\!\left(
      \frac12\sigma^{\mathsf T}K_T(I+K_T)^{-1}\sigma
    \right).
  \label{eq:ising-subset-evaluator}
\end{equation}
In particular, the right-hand side is a positive real number, so the
full-product terms $G_K(T)$ used below are positive.  It is computable using
$2^t\operatorname{poly}(t)$ arithmetic operations.  Every exponent in
\cref{eq:ising-subset-evaluator} has absolute value at most $t/6$.
\end{lemma}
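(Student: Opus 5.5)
The plan is to run the Hubbard--Stratonovich computation of \cref{lem:ising-completion} in reverse.  We expand $\cosh$ into a spin sum, evaluate each resulting Gaussian exponential moment in closed form, and then sum over spins.

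Write $C=K_T$ and $A=A_C$, so that $\phi=Ax$ with $x$ standard real Gaussian in $\R^t$.  Since $\|C\|\le\|C\|_{\mathrm{row}}\le1/4$, the matrix $I+C$ is positive definite.  Moreover, $\sum_i\phi_i^2=x^{\mathsf T}A^{\mathsf T}Ax=x^{\mathsf T}Cx$, because $A$ is complex symmetric with $A^2=C$.  Using $\cosh z=\frac12\sum_{\sigma_i=\pm1}e^{\sigma_iz}$, we obtain
\[
  \prod_i g_{\mathrm{Is}}(\phi_i)
  =2^{-t}\sum_{\sigma\in\{\pm1\}^t}
   \exp\!\left(-\tfrac12x^{\mathsf T}Cx+\sigma^{\mathsf T}Ax\right).
\]
Absolute integrability follows from \cref{eq:ising-envelope}: the integrand is dominated by $e^{x^{\mathsf T}C_-x/2}$, and $\|C_-\|\le1/4$.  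We may therefore interchange the finite sum with the expectation.

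Each term is a Gaussian integral with positive definite precision $I+C$ and complex linear term $A^{\mathsf T}\sigma=A\sigma$.  Completing the square, and justifying the contour shift by Cauchy's theorem as in \cref{prop:fixed-support} (or simply by analyticity of both sides in the linear coefficient), gives
\[
  \det(I+C)^{-1/2}
  \exp\!\left(\tfrac12\sigma^{\mathsf T}A(I+C)^{-1}A\sigma\right).
\]
Since $A$ and $(I+C)^{-1}$ are commuting spectral functions of $C$, we have $A(I+C)^{-1}A=C(I+C)^{-1}$.  This yields \cref{eq:ising-subset-evaluator}, and the positive square root is the correct one because $I+C\succ0$.

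Positivity and realness are then immediate, since every summand is a positive real exponential.  For the cost, we form $(I+C)^{-1}$ and the determinant in $\poly(t)$ operations.  We then enumerate all $2^t$ spin vectors, evaluating each quadratic form in $\poly(t)$ operations; if desired, Gray-code updates reduce the per-term cost to $O(t)$.

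For the exponent bound, write $M=C(I+C)^{-1}$.  Its eigenvalues are $\lambda/(1+\lambda)$ with $|\lambda|\le1/4$, so $\|M\|\le(1/4)/(3/4)=1/3$.  Hence
\[
  \left|\tfrac12\sigma^{\mathsf T}M\sigma\right|
  \le\tfrac12\cdot\tfrac13\cdot\|\sigma\|_2^2=\tfrac t6.
\]
The only delicate step is the complex contour shift with the non-real linear term $A\sigma$.  It is handled exactly as in \cref{prop:fixed-support}: the Gaussian with real positive definite precision $I+C$ dominates on every intermediate contour.
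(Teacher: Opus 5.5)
Your proposal is correct and follows essentially the same route as the paper: expand the hyperbolic cosines into a spin sum, evaluate each term as a Gaussian integral with real positive definite precision $I+K_T$ and a complex linear term, and collapse $A(I+K_T)^{-1}A$ to $K_T(I+K_T)^{-1}$ by spectral calculus. The only difference is in the exponent bound. You use the eigenvalues $\lambda/(1+\lambda)$ together with $\|\sigma\|_2^2=t$, whereas the paper uses the Neumann-series estimate $\|K_T(I+K_T)^{-1}\|_{\mathrm{row}}\le1/3$; both yield $t/6$.
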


\begin{proof}
Let $\phi=A_{K_T}x$.  Expanding the hyperbolic cosines gives
\[
  \prod_i g_{\mathrm{Is}}(\phi_i)
  =2^{-t}\sum_{\sigma\in\{\pm1\}^t}
    \exp\!\left(-\frac12x^{\mathsf T}K_Tx
                    +\sigma^{\mathsf T}A_{K_T}x\right).
\]
Since $I+K_T$ is positive definite, Gaussian integration with a complex
linear term gives \cref{eq:ising-subset-evaluator}.  Here we used
$A_{K_T}(I+K_T)^{-1}A_{K_T}=K_T(I+K_T)^{-1}$, which follows from
spectral calculus.
All matrices in the expression on the right of
\cref{eq:ising-subset-evaluator} are real, so the value is positive and
real.

The Neumann series gives
$\|(I+K_T)^{-1}\|_{\mathrm{row}}\le4/3$, and hence
$\|K_T(I+K_T)^{-1}\|_{\mathrm{row}}\le1/3$.  Therefore
\[
  \left|\frac12\sigma^{\mathsf T}K_T(I+K_T)^{-1}\sigma\right|
  \le\frac t2\|K_T(I+K_T)^{-1}\|_{\mathrm{row}}
  \le\frac t6.
\]
\end{proof}

The subset coefficients in the truncation theorem are
$W_K(S)=\mathbb E_{K[S]}\prod_{i\in S}f_{\mathrm{Is}}(\phi_i)$.
By inclusion--exclusion and \cref{lem:gaussian-marginalization},
\begin{equation}
  W_K(S)=\sum_{T\subseteq S}(-1)^{|S|-|T|}G_K(T).
  \label{eq:ising-inclusion-exclusion}
\end{equation}
Thus a direct evaluation of $W_K(S)$ takes
$3^{|S|}\operatorname{poly}(|S|)$ operations: after a set $T$ is chosen,
\cref{eq:ising-subset-evaluator} sums over its $2^{|T|}$ spin
configurations, and $\sum_{T\subseteq S}2^{|T|}=3^{|S|}$.

\subsection{The approximation scheme}

We can now prove the main result.

\begin{proof}[Proof of \cref{thm:ising}]
The asymptotic statements follow at once from
\cref{eq:ising-completed,eq:ising-determinant-bound,%
eq:ising-normalized-asymptotic}.  We prove the algorithmic assertion.

Apply \cref{thm:gaussian-product-truncation} with the fixed Ising factor, the
entrywise constant $\beta_K$, and the margin $d_K$.  Denote its constants
by $C_1,c_{\mathrm{tr}},a_{\mathrm{tail}}>0$ and $n_0$.  Increase $n_0$, if
necessary, so that
\begin{equation}
  \mathcal Z_K\ge\frac12
  \qquad(n\ge n_0).
  \label{eq:ising-positive-normalization}
\end{equation}
Set $C_{\mathrm{cut}}=2C_1$ and
$b=\lceil\log_2(16/\eps)\rceil$.
Fix a sufficiently small constant $\eta>0$, depending only on
$\beta$ and $\kappa$, for which \cref{lem:generic-cutoff} applies with
$\alpha=1$ and $C_2=C_{\mathrm{cut}}$, and such that
$\eta\le a_{\mathrm{tail}}/(3\log2)$.

We first dispose of the enumeration branch.  If $n<n_0$ or
$b\ge\eta n$, evaluate the positive sum in
\cref{eq:ising-partition} directly.  In the second case,
$2^n\le2^{b/\eta}$, while in the first case $n$ is bounded by a constant.
Moreover,
$|\sigma^{\mathsf T}J\sigma|/2\le\beta n/2$.  Standard evaluation of the
positive exponential terms to a common relative error below $\eps/4$
therefore gives the required estimate in
$2^{O_{\beta,\kappa}(b)}\operatorname{poly}(L_J+b)$ bit operations.

Suppose from now on that $n\ge n_0$ and $b<\eta n$.  Let $R\ge2$ be the
least integer satisfying
\begin{equation}
  \left(C_{\mathrm{cut}}\frac Rn\right)^R\le2^{-b}.
  \label{eq:ising-cutoff}
\end{equation}
By \cref{lem:generic-cutoff}, after decreasing $\eta$ once, this integer
exists and satisfies $R\le c_{\mathrm{tr}}n$.  Minimality gives
\begin{equation}
  R\log\frac{en}{R}=O_{\beta,\kappa}(b+\log n).
  \label{eq:ising-cutoff-size}
\end{equation}
Since $R\ge2$, replacing $C_1$ by $C_{\mathrm{cut}}=2C_1$ divides the first
term of the truncation bound by at least four.  Our choice of $\eta$ gives
$e^{-a_{\mathrm{tail}}n}\le2^{-3b}$.  Hence
\begin{align*}
  \left|
    \mathcal Z_K-\sum_{|S|<R}W_K(S)
  \right|
  &\le
    \left(C_1\frac Rn\right)^R+e^{-a_{\mathrm{tail}}n} \\
  &\le2^{-b-2}+2^{-3b}
  \le2^{-b-1}.
\end{align*}
Compute the retained sum by
\cref{eq:ising-subset-evaluator,eq:ising-inclusion-exclusion} to absolute
error at most $2^{-b-2}$, and call the result
$\widetilde{\mathcal Z}_K$.  Then
\begin{equation}
  |\widetilde{\mathcal Z}_K-\mathcal Z_K|
  \le3\cdot2^{-b-2}.
  \label{eq:ising-normalized-error}
\end{equation}
Approximate the positive prefactor
$2^n\det(I-J)^{-1/2}$ to relative error at most $2^{-b-2}$ and multiply it
by $\widetilde{\mathcal Z}_K$.  From
\cref{eq:ising-positive-normalization,eq:ising-normalized-error}, the
relative error in $\widetilde{\mathcal Z}_K$ is at most $3\cdot2^{-b-1}$.
In particular, $\widetilde{\mathcal Z}_K>0$, because
$\mathcal Z_K\ge1/2$ and $3\cdot2^{-b-2}<1/2$.
The relative error in the final product is therefore less than
$2^{1-b}\le\eps/8$.  Since $0<\eps<1$, this is contained in the interval
$[e^{-\eps},e^\eps]$.

Finally, \cref{eq:ising-cutoff-size} gives
$\sum_{s<R}\binom ns3^s=n^{O_{\beta,\kappa}(1)}2^{O_{\beta,\kappa}(b)}$.
Together with \cref{lem:ising-retained-term}, this proves the polynomial
real-arithmetic bound.  The bit implementation is given in
\cref{app:bit-complexity}, completing the proof.
\end{proof}

\section{Scope and open problems}
\label{sec:scope}

We close with three limitations of the present method and the questions
they suggest.

\overviewstage{1}{The truncation principle}
The entrywise $O(1/n)$ hypothesis is intrinsic to the present truncation
argument and is not implied by the spectral margin, as simple rank-one
examples show.  This prevents a direct application to sparse expanders,
where a graph-specific resummation or correlation estimate would be needed.
Both applications also eliminate a linear mode, through $E\1=0$ for
matchings and zero external field for Ising.  We do not know whether a
surviving finite-rank mode can be integrated separately while the theorem is
applied to the centered fluctuations.  Such an extension would require
uniform envelope and evaluation bounds that we do not yet have.

\overviewstage{2}{Perfect matchings}
The present method does not reach the exact graph Dirac threshold.  There
the entropy scaling may place order-one weight on an edge.  Indeed, split
$n=2m$ vertices into two $m$-vertex sets $U$ and $V$, include all cross
edges, all edges within $V$, and one edge within $U$.  If $t$ is the
entropy-maximizing weight of that last edge, symmetry gives
$t/(1-t)=\sqrt{(m-1)/(2m)}$, so $t\to\sqrt2-1$.  Separately, a copy of
$K_{m,m}$ with one added edge in each part shows that the smallest scaled
eigenvalue can approach $-1$.  Whether a deterministic FPTAS exists at the
endpoint for all Dirac graphs remains open.

A separate argument gives a deterministic FPTAS for the structured endpoint
family obtained by replacing the edge inside $U$ by a graph $F$ of maximum
degree at most a fixed constant $\Delta$.  Let $m_k(F)$ denote the number of
$k$-edge matchings in $F$, and let $\mu$ be the maximum size of a matching
in $F$.  Then
$\#\PM=m!\sum_{k=0}^{\mu} m_k(F)/(2^k k!)$.  If
$\mu\ge\max\{8e^2\Delta,2\log_2(8/\eps)\}$, the bound
$|E(F)|\le 2\Delta\mu$ makes the terms with $k>\mu/2$ contribute at most
$\eps/4$ of the sum, while the algorithm of Jain, Perkins, Sah, and Sawhney
\cite{JainPerkinsSahSawhney2022}, with $\delta=1/2$, approximates all terms
with $k\le\mu/2$.  Otherwise, a dynamic program over the endpoints of a
maximum matching evaluates the sum exactly in
$2^{O(\mu)}\poly(m)=\poly(m,1/\eps)$ time.

Below the half-degree threshold,
\cref{cor:pseudorandom-supports} controls unweighted pseudorandom supports,
but the analogous weighted statement remains open in the full
degree--codegree range.  Our counting algorithm also does not
immediately give a sampler, since conditioning on $k$ matching edges reduces
the guaranteed degree surplus from $\gamma n$ to $\gamma n-k$.

\overviewstage{3}{The Ising model}
The bounded-remainder conclusion in \cref{thm:ising} also requires
diffuseness.  A single fixed bond of strength $q\in(0,1)$ gives the
normalized factor $\sqrt{1-q^2}\cosh q$, independently of the number of
isolated spins, so the $1+O(1/n)$ conclusion fails without the entrywise
hypothesis.  Deterministic approximation under a spectral hypothesis alone
remains open.  Our argument also does not reach the $n^{-1/2}$ interaction
scale of dense spin glasses: the effective entrywise parameter grows like
$\sqrt n$, and the small-support bound no longer survives the sum over
supports.

\bibliographystyle{alpha}
\bibliography{diffuse_gaussian_counting}

\clearpage
\appendix

\section{Proofs of the scaling-criterion consequences}
\label[appendix]{app:scaling-consequences}

This appendix supplies the deductions deferred from the consequences of
the scaling criteria.  The first proof treats the regular cases, whose
stochastic normalization is explicit.  The remaining argument establishes
the symmetric degree--codegree consequence.

\subsection{Regular inputs}

\begin{proof}[Proof of \cref{cor:dense-regular-expanders}]
For part~\textup{(a)}, the maximum-entropy scaling is $X=A_G/d$, with
$r_v=d^{-1/2}$ and
$n^{-1/2}\le r_v\le c^{-1/2}n^{-1/2}$.  The spectral hypothesis gives
$\lambda_2(L_G)\ge\kappa cn$ and $Q_G\succeq\kappa cnI$.
Apply \cref{thm:scaling-criterion} with
$\theta=1$, $c_{\mathrm{lo}}=1$, $c_{\mathrm{hi}}=c^{-1/2}$, and $q'=q=\kappa c$.

For part~\textup{(b)}, regularity gives total support and the Sinkhorn
scaling $Y^*=B/d$, with $r_i=c_j=d^{-1/2}$.  The bounds are again between
$m^{-1/2}$ and $c^{-1/2}m^{-1/2}$.  Apply
\cref{thm:bipartite-scaling-criterion} with
$\theta=1$, $c_{\mathrm{lo}}=1$, $c_{\mathrm{hi}}=c^{-1/2}$, and
$\kappa_{\mathrm b}=\kappa$.
\end{proof}

\subsection{Symmetric degree and codegree bounds}

\begin{proof}[Proof of \cref{cor:pseudorandom-supports}]
The edge-extension hypothesis makes the maximum-entropy scaling positive
on every edge.  Write it as $X_{uv}=r_ur_v$, and put
\(d_-=(1-\rho)p, \qquad d_+=(1+\rho)p, \qquad R_0=\frac{d_-}{\eta-2\rho p}\).
At vertices $a,b$ carrying the largest and smallest factors, subtraction
of the row equations and the degree bounds give
$1/r_b-1/r_a\le(d_+-\eta)nr_a$ and
$1/(d_+n)\le r_ar_b\le1/(d_-n)$.
Multiplying the first inequality by $r_b$ and using the second yields
$r_{\max}/r_{\min}\le R_0$.  Consequently,
$1/\sqrt{R_0d_+n}\le r_v\le\sqrt{R_0/(d_-n)}$ for every vertex $v$,
and hence $1/(R_0d_+n)\le X_{uv}\le R_0/(d_-n)$ on every edge.

For the Laplacian gap, sum
$(z_u-z_w)^2\le2(z_u-z_v)^2+2(z_v-z_w)^2$ over common neighbors $v$ of
each pair $u,w$.  Every squared edge difference is counted at most $2n$
times.  Since $\sum_{u<w}(z_u-z_w)^2=n\|z\|_2^2$ for $z\perp\1$, this gives
\[
  z^{\mathsf T}L_Gz\ge\frac{\eta n}{4}\|z\|_2^2,
  \qquad
  \lambda_2(L_G)\ge\frac{\eta n}{4}.
\]
The displayed factor bounds, this Laplacian estimate, and
$Q_G\succeq qnI$ verify the hypotheses of
\cref{thm:scaling-criterion}, which proves the algorithmic claim.

For $G(n,p)$, take $\rho=p/8$, $\eta=p^2/2$, and any fixed $q<p/2$.
Chernoff bounds and the F\"uredi--Koml\'os estimate give the degree,
codegree, and signless-Laplacian conditions with high probability
\cite{FurediKomlos1981}.  For a fixed pair $u,v$, put $s=(n-2)/2$ and
partition the remaining vertices into two $s$-sets.  Hall's condition gives
\begin{equation}
  \Pr\{G(s,s,p)\text{ has no perfect matching}\}
  \le
  \sum_{k=1}^s
  \binom sk\binom{s}{k-1}(1-p)^{k(s-k+1)}
  =e^{-\Omega_p(s)}.
  \label{eq:random-bipartite-hall}
\end{equation}
Pairing $k$ with $s-k+1$ bounds the sum by a geometric series with ratio
$s^2(1-p)^{s/2}$, proving the last equality.  A union bound over all pairs
then shows that every present edge extends to a perfect matching.
\end{proof}

\section{Bipartite scaling details}
\label[appendix]{app:bipartite-scaling}

The permanent reduction uses only standard scaling and spectral estimates.
For reference, we give the details here in the order in which they are
invoked in the main text.

\subsection{Entropy normalization}

\begin{proof}[Proof of \cref{prop:bipartite-normalization}]
Total support lets us average supported permutation matrices to obtain a
feasible point positive on every supported entry.  The boundary argument
from \cref{prop:entropy-normalization} puts $Y^*$ in the same relative
interior, and the row and column multiplier equations give
$Y^*_{ij}=b_{ij}r_ic_j$.  Moreover,
$\sum_i\log r_i+\sum_j\log c_j
=\sum_{i,j}Y^*_{ij}\log(Y^*_{ij}/b_{ij})=-h_B$.
Every supported permutation uses each row and column factor once.
Its term in $\operatorname{per}(B)$ is therefore $e^{h_B}$ times its term
in $\operatorname{per}(Y^*)$.  Summing over supported permutations proves
\cref{eq:bipartite-entropy-normalization}.
\end{proof}

\subsection{Fixed-margin Sinkhorn estimates}

\begin{proof}[Proof of \cref{prop:bipartite-scaling}]
Put $\alpha=1/2+\gamma$, and let $G_B$ denote the bipartite support graph
of $B$.  If $b_{ij}>0$, deleting row $i$ and column $j$
from the support leaves a balanced bipartite graph of minimum degree at least
$\alpha m-1\ge(m-1)/2$ by the assumed lower bound on $m$.  Hall's theorem
therefore extends $ij$ to a perfect matching.  Thus $B$ has total support.

For row sets $S$ and column sets $T$, write
$Y^*(S,T)=\sum_{i\in S,j\in T}Y^*_{ij}$,
$r(S)=\sum_{i\in S}r_i$, and $c(T)=\sum_{j\in T}c_j$.
Let $U_{\rm r}$ and $U_{\rm c}$ consist of the
$\lfloor(1-\alpha)m\rfloor$ largest row and column factors, and let
$W_{\rm r}$ and $W_{\rm c}$ be their complements.  Stochasticity and
$Y^*_{ij}\le r_ic_j$ give
\[
  Y^*(W_{\rm r},W_{\rm c})
  =|W_{\rm r}|-Y^*(W_{\rm r},U_{\rm c})
  \ge |W_{\rm r}|-|U_{\rm c}|
  \ge2\gamma m,
  \qquad r(W_{\rm r})c(W_{\rm c})\ge2\gamma m.
\]

Write $r_{\max}=\max_i r_i$, $r_{\min}=\min_i r_i$, and define
$c_{\max},c_{\min}$ analogously.  The nonneighbors of any row have at most
$|U_{\rm c}|$ elements.  The row and column equations, together with the
preceding mass bound, give the extremal products as follows.  The
neighboring column-factor sum is at least $c(W_{\rm c})$, and symmetrically
the neighboring row-factor sum is at least $r(W_{\rm r})$.  Thus
$r_{\max}\le1/(\theta c(W_{\rm c}))$ and
$c_{\max}\le1/(\theta r(W_{\rm r}))$.  Conversely, $b_{ij}\le1$ gives
$r_i\ge1/(mc_{\max})$ and $c_j\ge1/(mr_{\max})$.  Hence
$r_{\max}c_{\max}\le1/(2\gamma\theta^2m)$ and
$r_{\min}c_{\min}\ge2\gamma\theta^2/m$.
Since $Y^*_{ij}=b_{ij}r_ic_j$ with $b_{ij}\in[\theta,1]$ on the support,
these estimates prove \cref{eq:bipartite-diffuse}.  They also give the
looser uniform range
$2\gamma\theta^3/m\le r_ic_j\le1/(2\gamma\theta^3m)$
on every supported entry.  Any two rows share a supported column, and any
two columns share a supported row.  Taking ratios through these common
entries shows that the ratio of any two row factors, and likewise of any
two column factors, is bounded by a constant depending only on $\gamma$
and $\theta$.

Put $\ell_0=\log(2\gamma\theta^3)$ and
$u_0=\log(1/(2\gamma\theta^3))$.  On every supported entry,
$\ell_0\le x_i^*+y_j^*\le u_0$.  The gauge gives equal row and column
means, while the ratio bound shows that every coordinate differs from its
mean by a constant depending only on $\gamma$ and $\theta$.  Applying
these facts at one supported entry places every coordinate in a fixed
interval depending only on these two parameters.  Exponentiating proves
\cref{eq:bipartite-factor-box}.

The symmetric stochastic matrix $W=Y^*(Y^*)^{\mathsf T}$ controls the
remaining singular values.  Any two distinct rows of $B$ share at least
$2\gamma m$ common support columns.  Thus \cref{eq:bipartite-diffuse} gives
$W_{ii'}\ge8\gamma^3\theta^6/m$ for $i\ne i'$.  For $z\perp\1$,
\[
  z^{\mathsf T}(I-W)z
  =\sum_{i<i'}W_{ii'}(z_i-z_{i'})^2
  \ge8\gamma^3\theta^6\|z\|_2^2.
\]
Hence $\sigma_2(Y^*)^2\le1-8\gamma^3\theta^6$, which implies
\cref{eq:bipartite-gap}.
\end{proof}

\subsection{The bipartite scaling criterion}

\begin{proof}[Proof of \cref{thm:bipartite-scaling-criterion}]
By \cref{prop:bipartite-normalization},
$\operatorname{per}(B)=e^{h_B}\operatorname{per}(Y^*)$.  The symmetric
dilation $E_{\mathrm b}$ constructed in \cref{sec:matching-normalization} has
entries of modulus at most $2\max\{c_{\mathrm{hi}}^2,1\}/n$ and norm at
most $1-\kappa_{\mathrm b}$.  Thus \cref{thm:matrix} applies with
$\beta=2\max\{c_{\mathrm{hi}}^2,1\}$ and $\kappa=\kappa_{\mathrm b}$.
The analysis in \cref{app:bit-complexity} justifies evaluating the scaling and the
truncated formula to polynomial accuracy, after which we restore
$e^{h_B}$.  Bounded exceptional dimensions are evaluated exactly.
\end{proof}

\subsection{Degree and codegree bounds for bipartite supports}

\begin{proof}[Proof of \cref{cor:pseudorandom-bipartite}]
Let $Y^*_{ij}=b_{ij}r_ic_j$ be the gauge-fixed Sinkhorn scaling, and write
$R_{\rm r}=r_{\max}/r_{\min}$ and $R_{\rm c}=c_{\max}/c_{\min}$.
Comparing the equations at rows with extremal factors, using their common
neighborhood and then interchanging rows and columns, gives
\[
  R_{\rm r}\le\theta^{-1}+\lambda_\theta R_{\rm c},
  \qquad
  R_{\rm c}\le\theta^{-1}+\lambda_\theta R_{\rm r},
  \qquad
  \lambda_\theta=\frac{d_+-\eta}{\theta d_-}.
\]
Here the first inequality follows from
$1/r_{\min}\le1/(\theta r_{\max})+(d_+-\eta)mc_{\max}$ and
$r_{\max}c_{\min}\le(\theta d_-m)^{-1}$.  Put
$R=\max\{R_{\rm r},R_{\rm c}\}$.
The assumption \cref{eq:bipartite-pseudorandom-condition} says that
$\lambda_\theta<1$.  The two preceding inequalities give
$R\le1/\theta+\lambda_\theta R$, and hence
$R\le1/(\theta(1-\lambda_\theta))=d_-/(\eta-d_++\theta d_-)$.
Thus
\begin{equation}
  \max\{R_{\rm r},R_{\rm c}\}
  \le R_\theta,
  \qquad
  R_\theta:=\frac{d_-}{\eta-d_++\theta d_-}.
  \label{eq:bipartite-pseudorandom-ratio}
\end{equation}

The row equations at the minimum and maximum row factors give
$1/(d_+m)\le r_{\min}c_{\max}$ and
$r_{\max}c_{\min}\le1/(\theta d_-m)$.
Using \cref{eq:bipartite-pseudorandom-ratio}, every supported entry of
$Y^*$ therefore satisfies
\begin{equation}
  \frac{\theta}{R_\theta d_+m}
  \le Y^*_{ij}\le
  \frac{R_\theta}{\theta d_-m}.
  \label{eq:bipartite-pseudorandom-diffuse}
\end{equation}

The gauge gives a common geometric mean $s_{\rm geom}$ for the row and
column factors.  The ratio and extremal-product bounds imply
$1/(R_\theta d_+m)\le s_{\rm geom}^2\le
R_\theta/(\theta d_-m)$ and therefore
$1/(R_\theta^{3/2}\sqrt{d_+m})\le r_i,c_j\le
R_\theta^{3/2}/\sqrt{\theta d_-m}$.

Finally, put $W=Y^*(Y^*)^{\mathsf T}$ and
$a_\theta:=\min\{1/2,\eta\theta^2/(R_\theta^2d_+^2)\}$.
The codegree hypothesis and \cref{eq:bipartite-pseudorandom-diffuse}
give $W_{ii'}\ge a_\theta/m$ for distinct rows.  The same quadratic-form
argument as in the fixed-margin proof yields
$\sigma_2(Y^*)^2\le1-a_\theta$ and hence
$\sigma_2(Y^*)\le1-a_\theta/2$.
The uniform factor bounds and this singular-value gap verify
\cref{thm:bipartite-scaling-criterion}.

For the random-matrix assertion, set $\theta=1$, choose
$\rho=p/8$, and take $\eta=p^2/2$.  Then
$\eta>d_+-d_-$.  Chernoff bounds give the required row degrees, column
degrees, and codegrees with probability $1-o(1)$.  To verify total
support, condition on a fixed entry $B_{ij}=1$ and delete row $i$ and
column $j$.  The remaining matrix is an independent
$\operatorname{Bernoulli}(p)$ matrix of order $s=m-1$.  The Hall estimate
\cref{eq:random-bipartite-hall} gives failure probability
$e^{-\Omega_p(m)}$.
A union bound over the $m^2$ possible entries shows that, with probability
$1-o(1)$, every positive entry belongs to a supported permutation.
\end{proof}

\section{Auxiliary analytic estimates}
\label[appendix]{app:analytic-estimates}

This appendix contains the two scalar-factor calculations, the
analytic-continuation argument, Gamma-tail localization, cutoff calculus,
and matching resolvent estimates deferred from the main text.

\subsection{The matching scalar factor}
\label{subsec:matching-factor-proof}

\begin{proof}[Proof of \cref{lem:matching-factor}]
Write $z=a+ib$.  The inequality $\log u\le u-1$, applied to
$u=(1+a)^2+b^2$, gives
$\log|1+z|\le a+(a^2+b^2)/2$.  Hence
$\log|g(z)|\le a^2$, with the value at $z=-1$ understood by continuity.
When $|z|\le1/2$,
\[
  \log g(z)=\log(1+z)-z+\frac{z^2}{2}
  =\sum_{j\ge3}\frac{(-1)^{j+1}}jz^j
\]
has modulus at most $2|z|^3$.  It follows that
$|g(z)-1|\le2e^{1/4}|z|^3\le3|z|^3$.
\end{proof}

\subsection{The Ising scalar factor}
\label{subsec:ising-factor-proof}

\begin{proof}[Proof of \cref{lem:ising-factor}]
Write $z=x+iy$ with $x,y\in\R$.  The identity and envelope estimate are
\begin{align*}
  |\cosh(x+iy)|^2
  &=\sinh^2x+\cos^2y\le\cosh^2x,\\
  |g_{\mathrm{Is}}(x+iy)|
  &\le e^{-(x^2-y^2)/2}\cosh x
  \le e^{y^2/2}.
\end{align*}
Here the second line uses $\log\cosh x\le x^2/2$.
For the local bound, put $r=|z|$ and
$d_k=(2^k k!)^{-1}-(2k)!^{-1}>0$ for $k\ge2$.  Since $d_2=1/12$ and
$d_k\le[12\,2^{k-2}(k-2)!]^{-1}$ for $k\ge3$, comparison of the two power
series gives
\begin{align*}
  |\cosh z-e^{z^2/2}|
  &\le\frac{r^4}{12}e^{r^2/2},\\
  |g_{\mathrm{Is}}(z)-1|
  &\le\frac{r^4}{12}e^{r^2}
  \le\frac{r^4}{9}\le\frac{r^4}{8}
  \qquad(r\le1/2),
\end{align*}
where $e^{1/4}\le4/3$.  The same series begins with $-z^4/12$, which proves
\cref{eq:ising-quartic-zero} and the stated local constants.
\end{proof}

\subsection{Complex Gaussian marginalization}
\label{subsec:gaussian-marginalization-proof}

\begin{proof}[Proof of \cref{lem:gaussian-marginalization}]
The assertion is immediate for $S=\varnothing$, so assume that $S$ is
nonempty.  In the full representation, write
$A_Kx=u+iv$, where $u=K_+^{1/2}x$ and $v=K_-^{1/2}x$.  Let
$\zeta=\sigma+i\tau$.  For every $\eta>0$, the inequality
$2|rs|\le\eta r^2+\eta^{-1}s^2$ gives
\begin{align*}
 &\sum_{i\in S}
   \left(a_{\mathrm R}(\operatorname{Re}(\zeta\phi_i))^2
        +a_{\mathrm I}(\operatorname{Im}(\zeta\phi_i))^2\right)\\
 &\qquad\le
 (1+\eta)\sigma^2x^{\mathsf T}K_{\mathrm{env}}x
 +(1+\eta^{-1})\tau^2x^{\mathsf T}K'x,
\end{align*}
where $K'=a_{\mathrm R}K_-+a_{\mathrm I}K_+\preceq a_{\max}|K|$.
Indeed,
$\operatorname{Re}(\zeta\phi_i)=\sigma u_i-\tau v_i$ and
$\operatorname{Im}(\zeta\phi_i)=\sigma v_i+\tau u_i$, and extending the
sums from $S$ to all coordinates can only increase the two nonnegative
quadratic forms.

Choose $\eta,\eta'>0$ so that $(1+\eta)(1+\eta')^2(1-d)<1$.
Since $\|K\|\le\|K\|_{\mathrm{row}}\le\beta$, choose $\omega>0$ so small
and define
\[
  \begin{gathered}
  q:=(1+\eta)(1+\eta')^2(1-d)
  +2(1+\eta^{-1})\omega^2a_{\max}\beta<1,\\[2pt]
  \Omega:=\{\zeta\in\C:
  |\operatorname{Re}\zeta|<1+\eta',\ 
  |\operatorname{Im}\zeta|<\omega\}.
  \end{gathered}
\]
This is a connected open neighborhood of $[0,1]$.  By
\eqref{eq:diffuse-covariance-assumptions},
$2K_{\mathrm{env}}\preceq(1-d)I$, while
$K'\preceq a_{\max}\beta I$.  Uniformly on the closure of every slightly
smaller rectangle inside $\Omega$,
\[
  |F(\zeta(\phi_i)_{i\in S})|
  \le A\exp\left(\frac q2\|x\|_2^2\right).
\]
After multiplication by the standard Gaussian density, the right-hand side
is integrable.  Thus
$\zeta\mapsto\E_KF(\zeta(\phi_i)_{i\in S})$ is holomorphic on $\Omega$.

For the restricted representation put $L=K[S]$.  Interlacing gives
$\lambda_{\max}(L)\le\lambda_{\max}(K)$ and
$\lambda_{\min}(L)\ge\lambda_{\min}(K)$.
The condition $2K_{\mathrm{env}}\preceq(1-d)I$ bounds
$2a_{\mathrm R}\lambda_{\max}(K)_+$ and
$2a_{\mathrm I}(-\lambda_{\min}(K))_+$ by $1-d$.  Hence
$2(a_{\mathrm R}L_++a_{\mathrm I}L_-)\preceq(1-d)I$.
Also $\|L\|\le\|K\|\le\beta$, so the restricted expectation is
holomorphic on the same neighborhood $\Omega$.

The Taylor coefficient of order $r$ at zero of either holomorphic function
is a finite linear combination of moments of homogeneous polynomials of
degree $r$.  Wick's rule expresses each such moment only through the
bilinear second moments, which are $K[S]$ in both representations.  The two
Taylor series therefore agree.  The identity theorem on $\Omega$ gives
equality at $\zeta=1$, and the same majorants prove absolute convergence.
\end{proof}

\subsection{Localization of the Gamma average}
\label{subsec:gamma-localization-proof}

\begin{proof}[Proof of \cref{prop:radial-localization}]
Use the notation of \cref{sec:matching-radial}.  Put
$\rho_E=\|E\|$.  The entrywise hypothesis gives
$\|E\|_{\mathrm F}\le\beta$ and hence
$\tr|E|\le\beta\sqrt n$.  For every $r>\rho_E$,
\[
  \E_x\exp\left(\frac{\mathcal E_E(x)}{2r}\right)
  =\det(I-|E|/r)^{-1/2}
  \le\exp\left\{\frac{\tr|E|}{2(r-\rho_E)}\right\}.
\]
In particular, this is $e^{O(\sqrt n)}$ at $r=r_0$ and $r=r_1$,
uniformly over the matrices in the proposition.

If $U\ge r_1$, then \cref{eq:radial-amgm} and $1+z\le e^z$ give
$|Z_E(U^{-1})|\le\E_xe^{\mathcal E_E(x)/(2r_1)}$.  The Gamma Chernoff
bound therefore gives
\[
  \E\left[|Z_E(U^{-1})|\ind{U\ge r_1}\right]
  \le\exp\left\{-\frac n2(r_1-1-\log r_1)+O(\sqrt n)\right\}.
\]

For $0<u\le r_0$, the tangent-line inequality for the logarithm gives
\[
  \left(u+\frac{\mathcal E_E(x)}{n}\right)^{n/2}
  \le r_0^{n/2}
  \exp\left\{
    \frac{n(u-r_0)}{2r_0}+\frac{\mathcal E_E(x)}{2r_0}
  \right\}.
\]
Since $a_n-1-n/2=(\tau-1)/2$, integrating
\cref{eq:radial-amgm} against \cref{eq:gamma-density} bounds the lower-tail
contribution by
\[
  \frac{b_n^{a_n}}{\Gamma(a_n)}r_0^{n/2}
  \E_xe^{\mathcal E_E(x)/(2r_0)}
  \int_0^{r_0}u^{(\tau-1)/2}
  \exp\left\{-b_nu+\frac{n(u-r_0)}{2r_0}\right\}\,du.
\]
Write the exponent in the integral as $-n/2+\lambda_nu$, where
$\lambda_n=n/(2r_0)-b_n=\Omega(n)$.
It is increasing for all sufficiently large $n$.  On $[r_0/2,r_0]$, the
factor $u^{(\tau-1)/2}$ is bounded, and endpoint integration gives
\[
  \int_{r_0/2}^{r_0}u^{(\tau-1)/2}e^{-n/2+\lambda_nu}\,du
  \le\frac{C}{\lambda_n}e^{-n/2+\lambda_nr_0}
  =\frac{C}{\lambda_n}e^{-b_nr_0}.
\]
On $(0,r_0/2]$, the exponential is at most
$e^{-n/2+\lambda_nr_0/2}=e^{-b_nr_0-\lambda_nr_0/2}$.
The factor $u^{(\tau-1)/2}$ is integrable at zero for both
$\tau\in\{0,1\}$, while $\lambda_nr_0/2=\Omega(n)$.  This part is
exponentially smaller than the upper-half contribution.  The integral is
therefore at most a fixed multiple of $e^{-b_nr_0}$.  Stirling's formula
now gives
\[
  \E\left[|Z_E(U^{-1})|\ind{U\le r_0}\right]
  \le
  \exp\left\{-\frac n2(r_0-1-\log r_0)
  +O(\sqrt n+\log n)\right\}.
\]
Both rate functions are positive.  Decreasing $c$ and increasing $n_0$
proves \cref{eq:radial-tail}.
\end{proof}

\subsection{The least truncation cutoff}
\label{subsec:cutoff-proof}

\begin{proof}[Proof of \cref{lem:generic-cutoff}]
Put $D=C_2^{1/\alpha}$ and $A_\alpha=2+2/\alpha$.  Choose $\eta>0$
so small that $A_\alpha\eta\le c_{\mathrm{tr}}$ and
$DA_\alpha\eta\le1/4$.
The inequality \eqref{eq:least-cutoff-condition} is equivalent to
$\left(DR/n\right)^R\le2^{-b/\alpha}$.  Consider
\[
  R_*=\max\left\{2,\left\lceil\frac{2b}{\alpha}\right\rceil\right\}.
\]
Since $b\ge1$ and $b<\eta n$, we have
$R_*\le A_\alpha b<A_\alpha\eta n\le c_{\mathrm{tr}}n$.  Moreover,
$DR_*/n\le1/4$ and $R_*\ge2b/\alpha$, so
$(DR_*/n)^{R_*}\le4^{-R_*}\le2^{-b/\alpha}$.
Thus a least admissible $R$ exists and is at most $R_*$.

For the enumeration bound, define $F(x)=x\log(n/(Dx))$.
On $2\le x\le R_*$, we have $Dx/n\le1/4$, so
$F'(x)=\log(n/(Dx))-1\ge\log4-1>0$.  If $R>2$, minimality gives
$\alpha F(R-1)<b\log2$.  The mean-value theorem and the fixed value of $D$
give $F(R)-F(R-1)=O(\log n)$, and hence
$F(R)=O(b+\log n)$.  The same conclusion is immediate when $R=2$.
Also $F(R)\ge R\log4$, so $R=O(b+\log n)$.  Finally,
$R\log(en/R)=F(R)+R\log(eD)=O(b+\log n)$,
which proves \eqref{eq:least-cutoff-cost}.
\end{proof}

\subsection{Matching resolvent estimates}
\label{subsec:matching-resolvent-estimates}

\begin{proof}[Proof of \cref{lem:uniform-completion}]
We prove the spectral, entrywise, and determinant bounds in that order.
Eigenvalue calculus handles the first, the resolvent expansion the second,
and the identity $\tr E=0$ the third.

If $e$ is an eigenvalue of $E$, the corresponding eigenvalue of $K_t$ is
$te/(1+te)$.  This function is increasing, and $|te|\le q_0$.  We may
therefore take
\(d=(1-q_0)/(1+q_0)\).

Put $\rho=\|E\|$, and let $\mathbf e_i$ denote the $i$th coordinate
vector.  For $\ell\ge2$, the entrywise hypothesis on $E$ gives
\[
  |(E^\ell)_{ij}|
  \le\|E\mathbf e_i\|_2\|E\|^{\ell-2}\|E\mathbf e_j\|_2
  \le\frac{\beta^2}{n}\rho^{\ell-2}.
\]
Expanding $K_t$ as a convergent resolvent series proves the entrywise bound
in \cref{eq:uniform-K}.  For example, one may take
$\beta'=\beta/r_0+(\beta^2/r_0^2)/(1-q_0)$.

Finally, $\tr E=0$ because $E$ has zero diagonal.  If
$\lambda_1,\ldots,\lambda_n$
are its eigenvalues, then
$\log\det(I+tE)=\sum_i(\log(1+t\lambda_i)-t\lambda_i)$.
The summands have absolute value at most
$t^2\lambda_i^2/(2(1-q_0))$.  Since $\|E\|_{\mathrm F}\le\beta$, their sum is
bounded by a constant.
\end{proof}

\section{Bit complexity}
\label[appendix]{app:bit-complexity}

The main text describes the algorithms in real arithmetic.  This appendix
records the stability and precision estimates that give the stated bit
complexities.

\subsection{Perfect matchings}

Retain the notation $b,R,C,c_*,K_t$, and $Z_{E,R}$ from
\cref{sec:matching-algorithm}.  For a real symmetric matrix $F$ and a scalar
$u$, write
$K_{u,F}=uF(I+uF)^{-1}$ and abbreviate $Z_R(t,E)=Z_{E,R}(t)$.

\begin{lemma}[Stability of the truncated formula]
\label{lem:numerical-stability}
Fix $q_{\mathrm{stab}}<1$ and $C_F<\infty$.  There is a constant
$C_{\mathrm{stab}}$ with the
following property.  Let $2\le R\le c_*n$, and suppose that every pair
$(u,F)$ on the line segment joining $(t,E)$ and
$(\widetilde t,\widetilde E)$ satisfies
\[
  \operatorname{tr}F=0,
  \qquad \|F\|_{\mathrm F}\le C_F,
  \qquad \|uF\|\le q_{\mathrm{stab}},
  \qquad
  \|K_{u,F}[T]\|_{\mathrm{row}}\le\frac14
  \quad(|T|<R).
\]
Then
\[
  |Z_R(\widetilde t,\widetilde E)-Z_R(t,E)|
  \le n^{C_{\mathrm{stab}}}
  \exp\left(C_{\mathrm{stab}}R\log\frac{en}{R}\right)
  \left(|\widetilde t-t|+
  \|\widetilde E-E\|_{\max}\right).
\]
The constant $C_{\mathrm{stab}}$ depends only on the fixed margins.
\end{lemma}

\begin{proof}
The differential of the transformed matrix is
$dK=(I+tE)^{-1}(dt\,E+t\,dE)(I+tE)^{-1}$.
The resolvent margin and
$\|dE\|\le n\|dE\|_{\max}$ contribute only a polynomial factor.
The trace, Frobenius, and spectral assumptions likewise control the
determinant factor and its derivative.  For a fixed set $T$, the row-norm
condition gives
$\|(I-K_{u,F}[T])^{-1}\|_{\mathrm{row}}\le4/3$.
Differentiating \cref{eq:fixed-support-formula} and its monomer--dimer
recurrence therefore bounds its value and derivative by
$n^{O(1)}e^{O(|T|)}$.  Inclusion--exclusion over $T\subseteq S$
contributes a factor $2^{|S|}$.  Summing over $|S|<R$ and applying the mean
value theorem proves the claim.
\end{proof}

\begin{remark}[Numerical precision for the matching algorithms]
Assume that the input and $\eps$ are rational, and let $L_{\mathrm{in}}$
denote their total encoding length.  Writing
$r_v=n^{-1/2}e^{z_v}$, the symmetric scaling is the minimizer of
\[
  \Psi_A(z)=\frac1n\sum_{uv\in E(G)}a_{uv}e^{z_u+z_v}-\sum_v z_v.
\]
The bipartite scaling is obtained similarly by minimizing
\[
  \Psi_B(x,y)=\frac1m\sum_{b_{ij}>0}b_{ij}e^{x_i+y_j}
  -\sum_i x_i-\sum_j y_j
\]
on the gauge subspace $\sum_i x_i=\sum_j y_j$.  The factor bounds and
spectral gaps in \cref{sec:matching-normalization} make these potentials
uniformly strongly convex and smooth on fixed boxes containing their
minimizers.  Standard projected gradient descent therefore finds the
scaling to accuracy $2^{-\nu}$ in $O(\nu+\log n)$ iterations
\cite[Chapter~2]{Nesterov2018}.

The entropy exponent is recovered from the same vector with error at most
$\sqrt n\,2^{-\nu}$, since
$h_A(G)=\frac n2\log n-\sum_vz_v$ by the proof of
\cref{prop:entropy-normalization}, and
$h_B=m\log m-\sum_ix_i-\sum_jy_j$ likewise.  Let $h$ denote the relevant
entropy and let $\widetilde h$ denote its computed approximation.  Choose
$\nu$ so that $|\widetilde h-h|\le2^{-b}$, which changes the final entropy
factor by at most $e^{\pm2^{-b}}$.

Use a fixed $\bar q_0\in(q_0,1)$ as $q_{\mathrm{stab}}$ in the stability
lemma.  The exact bounds
$\|tE\|\le q_0$ and $\|K_t[T]\|_{\mathrm{row}}\le1/8$, together with
continuity, imply that every sufficiently accurate approximation and the
line segment joining it to the exact pair satisfy
$\|uF\|\le\bar q_0$ and
$\|K_{u,F}[T]\|_{\mathrm{row}}\le1/4$.

By \cref{lem:numerical-stability,eq:R-runtime}, the retained formula has
size $n^{O(1)}2^{O(b)}$ and is Lipschitz with a constant of the same form.
Its inverses remain uniformly conditioned because
$I+tE\succeq(1-q_0)I$ and $I-K_t[T]\succeq3I/4$.  A gate-by-gate
absolute-error induction bounds the accumulated rounding radius by the
circuit size times $\exp(O(n\log n))$ times the unit roundoff.  No division
is made by a signed subset contribution.  Thus outward-rounded dyadic
evaluation with $O(n\log(n+2)+b+L_{\mathrm{in}})$ working bits makes the
scaling, input-perturbation, and rounding errors each at most $2^{-b}$.
Together with the three analytic errors, these give an additive error in
$Q_E$ of at most $6\cdot2^{-b}\le6\eps/C^2$, hence a relative error of at
most $\eps/2$.  The entropy error was budgeted separately above.

In the exact branch, the subset recurrence is applied directly to the
original matrix $A$, or to the symmetric dilation of $B$, rather than to the
generally irrational scaling.  Clearing denominators once gives intermediate
integers of polynomial bit length, and the recurrence returns zero when the
hafnian is zero.  This proves the bit-complexity bounds in
\cref{thm:main,thm:scaling-criterion,thm:bipartite-scaling-criterion}.
\end{remark}

\subsection{The Ising model}

Retain $b,R$, and $K$ from the proof of \cref{thm:ising}.  The number of
elementary exponential terms is
\begin{equation}
  M_R=\sum_{s<R}\binom ns3^s,
  \qquad
  \log M_R=O_{\beta,\kappa}(b+\log n),
  \label{eq:ising-term-count}
\end{equation}
by \eqref{eq:ising-cutoff-size}.  The row-norm clause of
\cref{thm:gaussian-product-truncation} gives
$\|K[T]\|_{\mathrm{row}}\le1/8$ for every retained set.  Hence
\cref{lem:ising-retained-term} applies, its exponential arguments have
magnitude $O(R)$, and $I+K[T]\succeq7I/8$.  The rational matrices
$K$, $K[T](I+K[T])^{-1}$, and the relevant determinants have bit length
$\operatorname{poly}(n,L_J)$.

Evaluate each elementary term to absolute error at most
$2^{-b-2}/M_R$ and the final prefactor to relative error $2^{-b-2}$.  This
requires $\operatorname{poly}(n,L_J)+O_{\beta,\kappa}(b+\log n)$ working
bits.  Standard algorithms for rational linear algebra, square roots, and
exponentials then take polynomial time per term.  Together with
\eqref{eq:ising-term-count}, this proves the bit-complexity bound in
\cref{thm:ising}.

\end{document}